\def\BibTeX{{\rm B\kern-.05em{\sc i\kern-.025em b}\kern-.08em
		T\kern-.1667em\lower.7ex\hbox{E}\kern-.125emX}}
\pgfplotsset{compat=1.18} 
\newcommand{\Eb}{{\mathbb{E}}}
\newcommand{\Rb}{{\mathbb{R}}}
\newcommand{\Cb}{{\mathbb{C}}}
\newcommand{\Ib}{{\mathbb{I}}}
\newcommand{\Zb}{{\mathbb{Z}}}
\newcommand{\Ec}{{\mathcal{E}}} %graph vertex edge
\newcommand{\Cc}{{\mathcal{C}}}
\newcommand{\Ic}{{\mathcal{I}}}
\newcommand{\Uc}{{\mathcal{U}}}
\newcommand{\Tc}{{\mathcal{T}}}
\newcommand{\Sc}{{\mathcal{S}}}
\newcommand{\Mc}{{\mathcal{M}}}
\newcommand{\Nc}{{\mathcal{N}}}
\newcommand{\Jc}{{\mathcal{J}}}
\newcommand{\be}{\mathbf{e}}
\newcommand{\m}{\text{-}}
\newcommand{\ift}{{\infty}}
\newcommand{\rs}{\text{rowspan}}
\newcommand{\diag}{{\mathrm{diag}}}
\newcommand{\norm}[1]{\left\lVert#1\right\rVert}
\newcommand{\ls}{\text{ls}}
\newcommand{\ba}{\left[ \begin{array}}
\newcommand{\ea}{\\ \end{array} \right]}
\newcommand*{\QET}{\hfill\ensuremath{\triangleleft}}
\DeclareMathOperator{\cov}{Cov}
\DeclareMathOperator{\spe}{sp}
\newtheorem{theorem}{Theorem}
\newtheorem{proposition}{Proposition}
\newtheorem{lemma}{Lemma}
\newtheorem{definition}{Definition}
\newtheorem{assumption}{Assumption}
\newtheorem{remark}{Remark}
\newcommand{\multiline}[1]{%
  \begin{tabularx}{\linewidth-\ALG@thistlm-0.0cm}[t]{@{}X@{}}
    #1
  \end{tabularx}
}
\begin{document}
	\title{Secure Filtering against Spatio-Temporal False Data Attacks under Asynchronous Sampling}
 %by Decomposing sampled-data KF}
	\author{Zishuo Li$^*$, Anh Tung Nguyen$^*$, Andr\'e M. H. Teixeira, Yilin Mo, Karl H. Johansson
    \thanks{$^*$ Zishuo Li and Anh Tung Nguyen contributed equally to this work.}
	\thanks{This work is supported by the National Natural Science Foundation of China under grant no. 62273196, the Swedish Research Council under the grant 2021-06316, 2023-05234, the Swedish Foundation for Strategic Research, the Swedish Research Council Distinguished Professor grant 2017-01078, and the Knut and Alice Wallenberg Foundation Wallenberg Scholar grant.
    }  
    \thanks{Zishuo Li and Yilin Mo are with the Department of Automation, Tsinghua University, Beijing, 100084,
		China. {\tt\small \{lizs19@mails, ylmo@mail\}.tsinghua.edu.cn}.}
	\thanks{Anh Tung Nguyen and Andr{\'e} M. H. Teixeira are with the Department of Information Technology, Uppsala University, PO Box 337, SE-75105, Uppsala, Sweden. {\tt\small \{anh.tung.nguyen, andre.teixeira\}@it.uu.se}.}
    \thanks{Karl H. Johansson is  with School of
		Electrical Engineering and Computer Science, KTH Royal Institute of
		Technology, Stockholm SE-100 44, Sweden. He is also affiliated with Digital Futures. {\tt\small kallej@kth.se}.}
    }
\maketitle

%\author[THU]{Zishuo Li}\ead{lizs19@mails.tsinghua.edu.cn} ,   % Add the 
%\author[UU]{Anh Tung Nguyen}\ead{anh.tung.nguyen@it.uu.se},               % e-mail address 
%\author[UU]{Andr\'e M. H. Teixeira}\ead{andre.teixeira@it.uu.se},  % (ead) as shown
%\author[THU]{Yilin Mo}\ead{ylmo@mail.tsinghua.edu.cn},
%\author[KTH]{Karl H. Johansson}\ead{kallej@kth.se}

%
%\address[THU]{Tsinghua University, Beijing}  % Please upply                                              
%\address[UU]{Uppsala University, Uppsala}       % full addresses
%\address[KTH]{KTH Royal Institute of Technology, Stockholm}       % here.

\begin{abstract}                          % Abstract of not more than 200 words.
	
 %    This paper addresses the state estimation problem in continuous LTI systems under attacks with non-periodic and asynchronous sampled measurements.
 %    The non-periodic and asynchronous sampling requires sensors to transmit not only the measurement values but also the sampling time-stamps to the 
 %    fusion center. % via unprotected communication channels. 
 %    This communication scheme leaves the system vulnerable to a variety of malicious activities such as (i) manipulating measurement values, (ii) manipulating time-stamps, (iii) hybrid manipulations such as generating fake measurements or eliminating the measurement.
	% To deal with such powerful attacks, we propose a decentralized local estimation algorithm where each sensor maintains its local state estimate based on its measurements in an asynchronous fashion. 
 %    Whenever the global state estimate is required, the local states synchronize by time-prediction.
 %    In the absence of attacks, synchronized local estimates are proved to recover the optimal state estimation, which is provided by the sampled-data Kalman filter, by our carefully designed weighted least square problem, given that time-stamp is non-degenerate. In the presence of attacks, an $\ell_1$ regularized least square problem is proposed to generate secure estimates as long as the observability redundancy is satisfied.
 %    The effectiveness of the proposed algorithm is demonstrated through a benchmark example of the IEEE 14-bus system.

    This paper addresses the secure state estimation problem for continuous linear time-invariant systems with non-periodic and asynchronous sampled measurements, where the sensors need to transmit not only measurements but also sampling time-stamps to the fusion center. 
    This measurement and communication setup is well-suited for operating large-scale control systems and, at the same time, introduces new vulnerabilities that can be exploited by adversaries through
    (i) manipulation of measurements, (ii) manipulation of time-stamps, (iii) elimination of measurements, (iv) generation of completely new false measurements, or a combination of these attacks.
    To mitigate these attacks, we propose a decentralized estimation algorithm in which each sensor maintains its local state estimate asynchronously based on its measurements. 
    The local {estimates} are synchronized through time prediction and fused after time-stamp alignment.
    In the absence of attacks, state estimates are proven to recover the optimal Kalman estimates by solving a weighted least squares problem.
    In the presence of attacks, solving this weighted least squares problem with $\ell_1$ regularization provides secure state estimates with uniformly bounded error under an observability redundancy assumption.
    The effectiveness of the proposed algorithm is demonstrated using a benchmark example of the IEEE 14-bus system.
\end{abstract}

\begin{IEEEkeywords}
	False-data manipulation, secure state estimation, time-stamp, asynchronous Kalman filter
\end{IEEEkeywords}

\section{Introduction}
%%---------------------------------
%Cyber-physical systems have gained popularity in modeling, analysis, and control of 
Many real-world large-scale systems, such as power systems, water distribution networks, and transportation networks, are examples of cyber-physical systems where physical plants are tightly coupled with digital devices. 
%%
%Due to their large-scale description, those cyber-physical 
These systems are monitored and controlled via wired or wireless communications, leaving them vulnerable to malicious attackers.
%%
% The attackers might exploit unprotected communication channels to manipulate data shared among the systems, to intentionally disrupt the systems. 
%%
A recent report has shown the disastrous consequences of 
malware for industrial control systems 
\cite{miller2021looking}.
%\cite{falliere2011w32,kshetri2017hacking}.
%%
% Motivated by these and many other examples in \cite{kshetri2017hacking}, 
% security is an essential element of cyber-physical systems.
% %%
% In particular, t
The challenge of securely estimating states under malicious activities has been widely addressed \cite{liu2020local,li2023efficient,shoukry2015event,nakahira2018attack,hu_stateestimation_automatica,lu2019resilient,liu2021resilient}, given their crucial role in control systems.
This paper contributes to
secure state estimation by considering asynchronous and non-periodic measurements under false data attacks.
%%
% \begin{figure}[!ht]
	% 	\centering
	% 	\includegraphics[width=\linewidth]{attacker_model.png}
	% 	\caption{
		% 		%The information flow of our proposed estimation scheme. 
		% 		The general state estimation scheme with a measurement buffer dealing with out-of-order sequence \cite{FIR_Discretely_Delayed} %\cite{Kaempchen2003DATASS,out-of-sequence-measurements-IVS,FIR_Discretely_Delayed} 
		% 		where some communication channels are compromised by a malicious attacker.} \label{fig:flow}
	% \end{figure}
%\TN{I think that text ``measurement data'' and ``buffer'' should be bigger in Fig.~\ref{fig:flow}}
% The security concerns on state estimation problems can mainly be classified into integrity threats and availability threats. Examples of malicious attacks that intend to deteriorate integrity include false-data injection attacks\cite{hu_stateestimation_automatica}, replay attacks\cite{replay_attack}, zero-dynamics attacks\cite{TEIXEIRA2015135}, etc. For availability attacks, the most common and well-studied model is the Denial-of-service attack\cite{lu2019resilient,su2018cooperative,liu2021resilient}. 		
% Both the two types of attacks can be modeled as in Figure ????????????????????????????????????\ref{fig:flow} where communication channels of some sensors are compromised and the data stream is tampered with or blocked by malicious adversaries.

%Besides that, we notice that the time information of measurement (i.e., when the measurement is sampled) is also important and vulnerable to malicious adversaries for asynchronous sampling systems.

%%
To deal with the problem of secure state estimation against false data injection attacks, three research directions consisting of the sliding window method, the estimator switching method, and the local decomposition-fusion method, have been developed in recent years \cite{shoukry2015event,nakahira2018attack,liu2020local,li2023efficient}.
The sliding window method considers past sensor measurements in a finite-time horizon to provide state estimates via batch optimization %\cite{fawzi2014secure,shoukry2015event}.
\cite{shoukry2015event}.
%With the help of certain observability redundancy conditions, the authors in \cite{fawzi2014secure,shoukry2015event} deal with the minimization problem of the norm of residual signal to obtain the secured state estimation.
%%
The estimator switching method maintains multiple parallel estimators, each utilizing measurements from a subset of all sensors %\cite{Mishra2017TCNS,nakahira2018attack}.
\cite{nakahira2018attack}.
%To obtain the secure station estimation, a detection algorithm is proposed to choose the trusted subsets of sensors whose corresponding local estimates are then fused \cite{Mishra2017TCNS,nakahira2018attack}.
%%
The local decomposition-fusion method deploys multiple decentralized estimators, each of which samples the measurement of one local sensor. %%  
Then, local state estimates are fused by a convex optimization to generate secure state estimates \cite{liu2020local,li2023efficient}.

Denial-of-service (DoS) attacks block the measurement transmitted to the state estimator, undoubtedly worsening the state estimation performance. To handle such attacks conducted on multiple transmission channels, a partial observer is proposed to provide reliable partial state estimates in \cite{lu2019resilient}. 
% On the other hand, to handle Denial-of-Service attacks, which are conducted on multiple transmission channels, a class of partial observers that provide reliable partial state estimates is proposed in \cite{lu2019resilient}.
%%
The authors in \cite{su2018cooperative} propose a detection-compensation scheme to detect the presence of DoS attacks and then effectively reconstruct missing state estimates through past available states, eventually mitigating the attack's impact on the state estimation performance.
Event-triggered mechanisms are proven to be effective against DoS attacks in synchronous sampled systems such as the dynamic event-triggered scheme proposed in \cite{liu2021resilient}.
% One of the most ubiquitous techniques used to effectively deal with DoS attacks is event-triggered mechanisms \cite{liu2021resilient}.
% % On the other hand, event-triggered mechanisms have received much attention since they deal with DoS attacks very effectively.
% Flexible event-triggered mechanisms such as a newly proposed dynamic event-triggered scheme \cite{liu2021resilient} are very efficient in alleviating performance loss.
%of the state estimation caused by DoS attacks. 

\begin{figure}[!t]
	\centering
	\begin{tikzpicture}[scale=0.86]
%\begin{axis}
%[	width=2.5in,
%	height=1.5in,
%	at={(0in,0in)},
%	title={False data injection},
%	xlabel={\scriptsize time axis},
%	ylabel={data axis},
%	x label style={at={(axis description cs:0.5,0.0)},anchor=north},
%	y label style={at={(axis description cs:0.1,.5)},anchor=south},
%	xticklabels={},
%	yticklabels={},]
%%	legend style={at={(0.98,0.98)}, anchor=north east, legend cell align=left, align=left, draw=white!15!black, font=\scriptsize}]
%	
%	 \addplot [color={blue}, line width=1pt, mark=o, mark size=1pt]
%	table[row sep={\\}]
%	{
%		1 1\\
%		1 1
%	} ;
%	
%	
%	
%	\draw[latex-][red] (1, 1)--(1,2) node[right]{  $\gamma=100$};
%	\draw[latex-] (3.3, 0.6)--(4,1.3) node[right]{ $\gamma=7$};
%	\draw[-] (6, 3.65)--(12,3) node[below=-3pt]{\small MSE of Kalman under attack};
%	
%   
%\end{axis}

\draw[-latex][black] (0, 0)--(3.5,0);
\draw[-latex][black] (0, 0)--(0,2.75);
\node [rotate=90] at (-0.25,1.5) {\scriptsize state};
\node at (3.25,-0.25) {\scriptsize time axis};

\draw[-latex][black] (4,0)--(7.5,0);
\draw[-latex][black] (4,0)--(4,2.75);
\node [rotate=90] at (3.75,1.5) {\scriptsize state};
\node at (7.25,-0.25) {\scriptsize time axis};

\draw[-latex][black] (0,-3.75)--(3.5,-3.75);
\draw[-latex][black] (0,-3.75)--(0,-1);
\node [rotate=90] at (-0.25,-2.25) {\scriptsize state};
\node at (3.25,-4) {\scriptsize time axis};

\draw[-latex][black] (4,-3.75)--(7.5,-3.75);
\draw[-latex][black] (4,-3.75)--(4,-1);
\node [rotate=90] at (3.75,-2.25) {\scriptsize state};
\node at (7.25,-4) {\scriptsize time axis};

%%%%%%%%%%%%%%%%%%%%%%%%%%%%%%%%%
\filldraw (1.25,0.75) circle (2pt);
\filldraw [red] (1.25,2.25) circle (2pt);

\filldraw (0.5,1.5)  circle (2pt);
\filldraw (2,2)  circle (2pt);
\filldraw (2.5,1.25)  circle (2pt);
\filldraw (3,1.5)   circle (2pt);

   \draw[-{Stealth[length=3mm, width=2mm]}][red](1.25,0.75)--(1.25,2.15);
\node at (1.75,3) {a) false-data injection};

\draw[thick] plot [smooth,tension=1.5] coordinates{(0,2.5) (0.5,1.5) (1.25,0.75) (2,2) (2.5,1.25) (3,1.5) (3.25,1.5)};

%%%%%%%%%%%%%%%%%%%%%%%%%%%%%%%%%
\filldraw (5.25,0.75) circle (2pt);
\filldraw [red] (6.75,0.75) circle (2pt);

\filldraw (4.5,1.5)  circle (2pt);
\filldraw (6,2)  circle (2pt);
\filldraw (6.5,1.25)  circle (2pt);
\filldraw (7,1.5)   circle (2pt);

   \draw[-{Stealth[length=3mm, width=2mm]}][red] (5.25,0.75)--(6.7,0.75);
\node at (5.75,3) {b) time-stamp manipulation};
\draw[thick] plot [smooth,tension=1.5] coordinates{(4,2.5) (4.5,1.5) (5.25,0.75) (6,2) (6.5,1.25) (7,1.5)  (7.25,1.5)};

%%%%%%%%%%%%%%%%%%%%%%%%%%%%%%%%%
\filldraw (1.25,-3) circle (2pt);

\filldraw (0.5,-2.25)  circle (2pt);
\filldraw (2,-1.75)  circle (2pt);
\filldraw (2.5,-2.5)  circle (2pt);
\filldraw (3,-2.25)   circle (2pt);

% \draw[-{stealth}][red][arrowhead=6pt] (1.25,0.75)--(1.25,2.15);
\node at (1.75,-0.75) {c) denial-of-service};
\draw [red][line width=1.2pt](1,-3.25)--(1.5,-2.75);
\draw [red][line width=1.2pt](1.5,-3.25)--(1,-2.75);

\draw[thick] plot [smooth,tension=1.5] coordinates{(0,-1.25) (0.5,-2.25) (1.25,-3) (2,-1.75) (2.5,-2.5) (3,-2.25)  (3.25,-2.25)};

%%%%%%%%%%%%%%%%%%%%%%%%%%%%%%%%%
\filldraw (1.25,-3) circle (2pt);
\filldraw [red] (6.75,-1.75) circle (2pt);

\filldraw (4.5,-2.25)  circle (2pt);
\filldraw (5.25,-3)  circle (2pt);
\filldraw (6,-1.75)  circle (2pt);
\filldraw (6.5,-2.5)  circle (2pt);
\filldraw (7,-2.25)   circle (2pt);

% \draw[-{stealth}][red] (5.25,0.75)--(6.7,0.75);
\node at (5.75,-0.75) {d) false-data generation};

\draw[thick] plot [smooth,tension=1.5] coordinates{(4,-1.25) (4.5,-2.25) (5.25,-3) (6,-1.75) (6.5,-2.5) (7,-2.25)  (7.25,-2.25)};

%%% LENGENDS %%%%%%%%%%%%%%%%%%%%%%%%%
%\draw[thick] (1,-4.5)--(2.25,-4.5);
%\node at (4.25,-4.5) {\small State trajectory};

\filldraw (1.5,-4.5)  circle (2pt); 
\node at (4.,-4.5) {\small authentic measurement};

\filldraw (1.5,-5)  circle (2pt); 
\draw [red][line width=1.2pt](1.25,-4.75)--(1.75,-5.25);
\draw [red][line width=1.2pt](1.75,-4.75)--(1.25,-5.25);
\node at (4.,-5) {\small blocked measurement};

\filldraw [red] (1.5,-5.5)  circle (2pt); 
\node at (4.,-5.5) {\small false measurement};

%%% LEGEND (below figure, horizontal)
% \matrix[
%   matrix of nodes,
%   anchor=north,
%   nodes={anchor=west, inner sep=0pt},
%   column sep=10pt,   % horizontal spacing between legend entries
%   row sep=0pt
% ] at ([yshift=-6mm]current bounding box.south)  % place below drawing
% {
%   \tikz[baseline] \filldraw (0,0) circle (2pt); &
%   {\small original measurement}; &

%   \tikz[baseline]{
%     \filldraw (0,0) circle (2pt);
%     \draw[red,line width=1.2pt] (-.18,.18)--(.18,-.18);
%     \draw[red,line width=1.2pt] (.18,.18)--(-.18,-.18);
%   } &
%   {\small blocked measurement}; &

%   \tikz[baseline] \filldraw[red] (0,0) circle (2pt); &
%   {\small false measurement}; \\
% };

\end{tikzpicture}
	\caption{Examples of spatio-temporal false data attacks that can manipulate both time-stamps and measurements.} \label{fig:time_data_attack}
    \vspace{-0.5cm}
\end{figure}

%In asynchronous and non-periodic sampled systems, where measurements are sampled at different rates, data packages of measurements, sent to the estimator over unprotected communication channels, are vulnerable to malicious attackers.
The asynchronous and non-periodic sampling scheme opens up new opportunities for the adversaries. 
In this paper, we propose a novel false data attack model for such systems, which was also introduced in the preliminary version \cite{li2023secure} (see Fig.~ \ref{fig:time_data_attack}). This attack model includes both integrity attacks such as false-data injection \cite{hu_stateestimation_automatica}, and availability attacks such as DoS attacks \cite{lu2019resilient,liu2021resilient}. We also investigate the influence of time-stamp manipulation, {which can significantly degrade state-estimation performance \cite{qu2023time,zhang_TSA_TSG}.}
%caused by malicious attackers, {which }. %information in the model and include the time-stamp manipulation attack. 
%%
%{Denial-of-service attacks might restrict the availability of the measurement by blocking the measurement data transmitted to the estimator.}
%%
Apart from these attacks, generating completely new false data packages into authentic measurement streams is also a serious threat.
%to the asynchronous non-periodic sampled system due to its stealthiness.
%but is harmful to the system. 
%This fake data generation attack is included in our model, and there may be combinations of those four types of attacks.
Our attack model unifies all the above attacks into one framework including the possibility of their combinations. {It is worth noting that the preliminary version \cite{li2023secure} proposed a median-based state estimation, which is resilient to attacks at the cost of conservatism and accuracy. These limitations are addressed by introducing a completely distinct state estimation algorithm in this paper.}

%\ZL{DoS attack on estimation problem}

% In existing research on secure state estimation \tcr{\cite{liu2020local,li2023efficient,fawzi2014secure,shoukry2015event,nakahira2018attack}}, two classes of malicious attacks are mainly considered: false data injection (FDI) attacks and denial-of-service (DoS) attacks. 
% %%
% In FDI attacks, the measurement values given by corrupted sensors are manipulated by malicious attackers with the purpose of degrading the state estimation performance. In DoS attacks, the availability of measurement given by corrupted sensors is intermittently damaged or delayed by malicious attackers.

% \tcr{
	% 	In FDI attacks, the measurement values yi(t)y_i(t) from the corrupted sensor set, say \Cc\Cc, are manipulated by malicious attackers with the purpose of degrading the state estimation performance. In DoS attacks, the availability of measurement from the corrupted sensor set \Cc\Cc is intermittently damaged or delayed by malicious attackers. 
	% } 
% \TN{notaions might be used in the problem description section}
%%
%Besides that, there are few existing works studying the influence of time-stamp manipulation as we know it.
To the best of our knowledge, little progress has been made toward studying time-stamp manipulation on state estimation performance, especially on asynchronous sampled systems.
Li et al. \cite{LI2008199} 
%and Guo et al. \cite{guo_multi_async} 
propose Kalman filter (KF) algorithms for non-uniformly sampled multi-rate systems. To deal with the problem of asynchronous sampled systems, the authors in
\cite{Feddaoui_asy_kalman} propose an observer for continuous-time systems with discrete measurements, resulting in a differential  Riccati equation. 
%The stability of such an equation is shown to guarantee the convergence of the observers. 
Ding et al.\cite{DING2009324} 
%and Muhammad et al.\cite{Muhammad_pathological_sample} 
analyze the observability degradation problem of multi-rate and non-periodic sampled systems.
%multi-rate sampling and non-periodic sampling systems.
%For time-stamp attacks, malicious impacts of time synchronization attacks on smart grids is studied in  \cite{zhang_TSA_TSG}. %The effectiveness of TSA is demonstrated for three applications of phasor measurement units in smart grids.

%In this paper, we first study the novel spatio-temporal false data attack model, as depicted in Fig.~\ref{fig:time_data_attack}, which is first proposed in the preliminary version of this paper \cite{li2023secure}. Then, we design 
The main contribution of the paper is a secure estimation algorithm that recovers the system state in the presence of spatio-temporal false data attacks. The algorithm has the following merits:
\begin{enumerate}
    \item The algorithm is built upon the decomposition of KF, which provides local state estimates. We first introduce a weighted least squares optimization-based fusion of local state estimates. We show that the result of the fusion is exactly the optimal state estimates obtained by KF in the absence of attacks.
    \item To enhance security against attacks, we improve the weighted least squares optimization-based fusion by adding an $\ell_1$ regularization. In the absence of attacks, we provide a sufficient condition on design parameters under which state estimates provided by the $\ell_1$-regularization fusion and KF are identical, and thus optimal. 
    \item In the presence of attacks, the $\ell_1$-regularization fusion provides a secure state estimate whose estimation error is independent of attacks and is directly related to
    the estimation error of an oracle KF operating in the attack-free scenario. This merit highlights the effectiveness of the $\ell_1$-regularization fusion in mitigating the impact of attacks. 
	% \item The proposed algorithm adopts an asynchronous non-periodic sampling framework. {Thus, the algorithm is capable of including} synchronous sampling and multi-rate sampling scenarios.
	% \item The impact of the data-space and time-stamp-space false data attacks are both transformed into the value manipulation of local state estimates. This transformation enables us to handle such temporal-spatial attacks in a unified framework and to propose a resilient fusion algorithm.
	% \item 
	% %The algorithm is proved to be stable as long as the 2p2p sparse observability condition holds. 
	% Given the $2p$-sparse observability of the system, we show that the proposed estimation has a stable error.
	% Moreover, we give explicit estimation error 
 %    %expectation and covariance bounds 
 %    in the presence of spatio-temporal false data attacks.
\end{enumerate}

The effectiveness of the secure state estimation algorithm is validated through the IEEE 14-bus system.
We conclude this section by introducing the notation that will be utilized throughout this paper.

\textit{{Notation}:}
The sets of positive integers, non-negative integers, and non-negative real numbers are denoted as $\Zb_{>0},\Zb_{\geq0}$, and $\Rb_{\geq0}$, respectively. 
For a real number $x$, $\lceil x \rceil$ represents $x$ rounded up to the nearest integer.
%The set of real numbers and non-negative real numbers is denoted as \Rb,\Rb≥0\Rb,\Rb_{\geq0}. 
The cardinality of a set $\Sc$ is denoted as $|\Sc|$.  Denote the span of row vectors of matrix $A$ as $\rs(A)$. 
% All-zero and 
% all-one matrices with an appropriate size are denoted as
% $\mathbf{0}$ {and $\mathbf{1}$, respectively. 
%All-one vectors with an appropriate dimension are denoted as 1\mathbf{1}.
We denote $I$ as an identity matrix with an appropriate dimension.
% The spectral radius of matrix $A$ is denoted as $\rho(A)$.
The spectrum of matrix $A$ is denoted as $\spe(A)$.
For a vector $x$, $[x]_j$ stands for its $j$-th entry. 
%while $\diag(x)$ denotes the diagonal matrix with vector $x$ on the diagonal.
% To distinguish the corresponding process in continuous time and discrete time, 
We denote the continuous time index in a pair of parentheses $(\cdot)$ and the discrete-time index in a pair of brackets $[\cdot]$. Let
$\partial f(x)$ be the subgradient of function $f$ at $x$.
% For notation simplicity, we denote time index (tk)(t_k) as [k][k] for continuous variables. 
%This notation will be used throughout this paper without further notice.
%Suppose \Ic\Ic is an index set, define A(\Ic,:)A(\Ic,:) as the matrix composed of the rows of matrix AA with indices in \Ic\Ic. Similarly, A(:,\Ic)A(:,\Ic) is the matrix composed of columns of matrix AA with indices in \Ic\Ic. 
%If \Ic={i,i+1,⋯,j}\Ic=\{i,i+1,\cdots,j\}, they are also denoted as A(i:j, :)A(i:j,\ :) and A(:, i:j)A(:,\ i:j). 
%Modulus of complex number zz is denoted as |z||z|.

%\section{Asynchronous Sampling Estimation and sampled-data Kalman}
\section{Problem Formulation}
\label{sec:problem}	
We first introduce the system, the modeling of asynchronous measurements, and several assumptions that will be used throughout this paper. Secondly, we present a general notion of spatio-temporal false data attacks. Finally, the secure estimation problem is formulated.
%we intend to solve is formulated.
%%
%\subsection{State Estimation with Asynchronous Measurements}
\subsection{Systems with asynchronous measurements}
\label{sec:basic_measure}
%In this paper 
Throughout this paper, we consider a continuous linear time-invariant (LTI) system mathematically described by $n$ states and measured by $m$ sensors. Let us denote the state index set as $\Jc \triangleq\{1,2, \ldots, n\}$ and the sensor index set as $\Ic \triangleq\{1,2, \ldots, m\}$. The LTI system is modeled as follows:
% We consider the following continuous LTI system \tcr{with mm sensors}:
\begin{align}
	\dot{x}(t)&=A x(t)+w(t), \label{eq:system} \\
	y_i(t) &= C_i x(t) + v_i(t), ~ \forall \, i \in \Ic, \label{eq:y_i_def}
\end{align}
where $x(t) \in \mathbb{R}^{n}$ is the system state. The process noise 
$w(t) \in \mathbb{R}^{n}$ is continuous zero-mean Gaussian noise with the power spectral density ${w}(t) \sim \Nc(0,Q)$, where $Q$ is a given positive definite matrix.
The measurement given by sensor $i$ is denoted by $y_i(t) \in \mathbb{R}$.
Let us denote the measurement matrix of all the sensors $C \triangleq [C_1^\top,\dots, C_m^\top]^\top$ where $C_i^\top \in \Rb^{n}$ is given for all $i \in \Ic$.
The measurement noise vector $v(t) \triangleq [v_1(t),\dots, v_m(t)]^\top$ is zero-mean Gaussian noise with the power spectral density ${v}(t) \sim \Nc(0,R(t))$.
The initial state
$x(0)$ is assumed to be a Gaussian random vector with a known covariance %expectation 
and is independent of measurement noises, i.e., $x(0) \sim \Nc(0,\,\Sigma)$ where $\Sigma$ is known.
%	w(k)w(k) is time-varying Gaussian process noise with zero mean and covariance matrix Q(t)⪰0.Q(t)\succeq 0 .
% the initial state x(0)∼N(0,Σ)x(0) \sim {N}(0, \Sigma) is assumed to be zero mean Gaussian and is independent of the noise process {w(t)}\{w(t)\}.
Let us introduce the following assumption. 
%on AA to prevent system observability degradation problems. 
% \begin{assumption}\label{as:geo_mul}
% 	The geometric multiplicity of all the eigenvalues of AA is 1. %Without loss of generality, we assume that AA is in the Jordan canonical form.
%  \QET
% \end{assumption}
%%
\begin{assumption}\label{as:R}
		The measurement noise covariance $R(t)$ is uniformly upper bounded: $0 \preceq R(t) \preceq \bar{R},~\forall~ t\,\in\,\Rb_{\geq0}$,
		% \begin{align*}
		% 	0\leq R_i(t) \leq \bar{R},\quad  \forall~ t\,\in\,\Rb_{\geq0},
		% \end{align*}
		where $\bar{R}$ is a given constant positive definite matrix.
  \QET
\end{assumption}

The sensors sample and send data packets to an estimator in a non-periodic and asynchronous manner, which contain not only measurements but also their sensor indices and sampling time-stamps. More specifically, the estimator receives measurement triples from sensor $i \in \Ic$, which has the following form:
\begin{align}
\label{measurement3}
	\textbf{measurement triple: }(i, \, t,\, y_i(t)),
\end{align}
where $i$ is the sensor index, $t$ is the sampling time-stamp, and $y_i(t)$ is the measurement given by sensor $i$. 

Define the set of sampling time-stamps from sensor $i$ as $\Gamma_i$. Without loss of generality, the time when the estimator starts working is set as $t_0=0$. 		
In order to guarantee system observability under non-uniform asynchronous measurements, we introduce the following notation.
Define the set of sampling time intervals and cumulative sampling time from sensor $i$ as follows
\begin{align*}
	\Tc\triangleq \textstyle \bigcup_{i=1}^m \Tc_i,\ \Tc_i &\triangleq \left\{t_k-t_{k-1}~ | ~t_k,t_{k-1}\in\Gamma_i,k\in\Zb_{>0} \right\},        \\
	\widetilde{\Tc}\triangleq \textstyle \bigcup_{i=1}^m \widetilde \Tc_i,\ \widetilde{\Tc}_i &\triangleq \left\{t_k-t_{j}~ | ~t_k,t_j\in\Gamma_i, k>j ,k,j\in\Zb_{\geq 0}\right\}    .
\end{align*}
Define the system pathological sampling interval set~\cite{DING2009324} as
% \begin{align*}
% 	\Tc^*\triangleq \left\{ T>0 ~|~ \exp(\lambda_i T)=\exp(\lambda_j T),~  i\neq j, \right.\\
% 	~~~~~
% 	\left.\lambda_i,\lambda_j\in \spe(A) \subseteq \Rb \right\} .
% \end{align*} 
\begin{align*}
	\Tc^*\triangleq \left\{ T>0 ~|~ e^{\lambda_i T}=e^{\lambda_j T},~  i\neq j,\lambda_i,\lambda_j\in \spe(A) \subseteq \Cb \right\} .
\end{align*} 
To prevent system observability degradation problems due to discrete-time sampling, the following assumption, which is also seen in \cite{Muhammad_pathological_sample,DING2009324}, is introduced.
\begin{assumption}[non-pathological sampling time]\label{as:sample_time}
    Given a positive number $T_{\max}$,
	the sampling time interval sets $\Tc$ and $\tilde \Tc$ satisfy the following conditions: $\sup \Tc \leq T_{\max}~ \text{and}~ \widetilde{\Tc}\cap \Tc^*=\varnothing$, i.e., the sampling interval set is upper-bounded by $T_{\max}$ and has no intersection with the pathological sampling interval set $\Tc^*$.
	% \begin{align*}
	% 	\sup \Tc \leq T_{\max}~ \text{and}~ \widetilde{\Tc}\cap \Tc^*=\varnothing .
	% \end{align*} 
 \QET
	%	where TminT_{\min} is a constant  such that T_{\min}>0T_{\min}>0.
\end{assumption} 

\subsection{Spatio-temporal false data attacks}
%%%
We introduce a new spatio-temporal false data attack that generalizes integrity attacks and availability attacks~(see Fig.~\ref{fig:time_data_attack}). More specifically, the adversary may manipulate the entire measurement triple \eqref{measurement3} rather than only the measurement.
%		The adversary is able to manipulate the measurement triples (i,t,yi(t))(i, \, t, \, y_i(t)) rather than only the measured value.
%For the convenience of denotation about the measurement sampling process, we introduce the measurement triple generation set \Sc(t)\Sc(t) as all the measurement triples with time-stamp tt: 
Let us denote $\Sc(t)$ as the set of all authentic measurement triples with time-stamp $t$:
\begin{align*}
	\Sc(t) \triangleq \left\{(i, \, t, \, y_i(t))~|~\forall i\in\Ic  \right\}. 
\end{align*}
Moreover, $\Sc^a(t)$ denotes the set of {authentic and possibly manipulated} measurement triples with time-stamp $t$. %after being manipulated by the attacker. 
Denote the set of corrupted sensors as $\Cc$, which is supposed to be fixed over time and unknown to the operator. 
{Studying time-varying $\Cc$ is left for future work.}
Now, we are ready to define the spatio-temporal false data attack as follows:
\begin{figure*}[!t]
    \centering
    \includegraphics[width=0.83\linewidth]{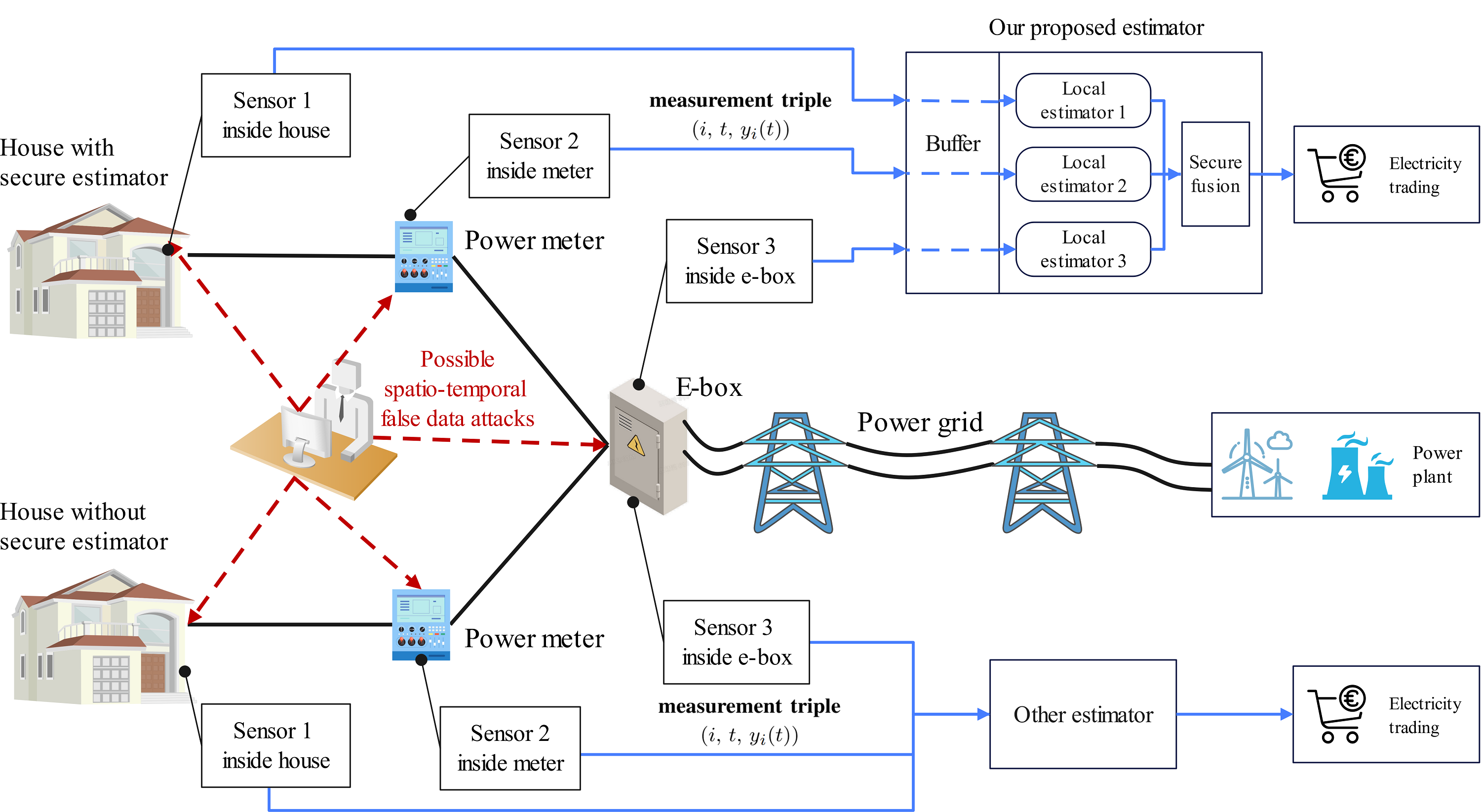}
    \caption{An example of secure state estimation in electricity consumption monitoring. The attacker can launch various types of spatio-temporal false data attacks on sensors.}
    \vspace{-0.6cm}
    \label{fig:power}
\end{figure*}
\begin{definition}[Spatio-temporal false data attacks]\label{def:attack}
	Attackers can manipulate measurement triples given by corrupted sensor $i\in\Cc$ in the following four ways, where $(i, \, t, \, y_i(t)) \in \Sc(t) $ is an authentic measurement triple, $y^a_i(t)$ and $t^a$ are manipulated values from $y_i(t)$ and $t$, respectively, and $( i, \, t^f, \, y^f_i(t))\notin \Sc(t)$ is a newly generated false measurement triple:
    \\
    (1) \textbf{false-data injection:} 
    $S^a(t) \triangleq  \big[ S(t) \setminus (i,  t,  y_i(t)) \big] \bigcup   (i,  t,  y^a_i(t))$, 
    \\
    (2) \textbf{time-stamp manipulation:}  \\
    $~~~~~~~~~~~~~~S^a(t) \triangleq  \big[ S(t) \setminus (i,  t, y_i(t)) \big] \bigcup (i,  t^a, y_i(t))$,
    \\
    (3) \textbf{denial-of-service:} 
	$\Sc^a(t) \triangleq  \Sc(t)\setminus ( i, t, y_i(t))$,
    \\
    (4) \textbf{false-data generation:} 
	$\Sc^a(t) \triangleq \Sc(t) ~\bigcup~ ( i, \, t^f, \, y^f_i(t) )$.
    \\
		% \begin{enumerate}
		% 	\item \textbf{false-data injection:} 
  %           $$S^a(t) \triangleq  \big[ S(t) \setminus (i, \, t, \, y_i(t)) \big] ~\bigcup~   (i, \, t, \, y^a_i(t))$$ 
		% 	%%
		% 	\item \textbf{time-stamp manipulation:} $$S^a(t) \triangleq \big[ S(t) \setminus (i, \, t, \,y_i(t)) \big] ~\bigcup~ (i, \, t^a, \,y_i(t))$$
		% 	%%
		% 	\item \textbf{denial-of-service:} 
		% 	$$\Sc^a(t) \triangleq  \Sc(t)\setminus ( i, \, t, \,y_i(t))$$
		% 	%%
		% 	\item \textbf{false-data generation:} 
		% 	$$\Sc^a(t) \triangleq \Sc(t) ~\bigcup~ ( i, \, t^f, \, y^f_i(t) )$$
		% \end{enumerate}
	Further, if the set of corrupted sensors satisfies $|\Cc|\leq p$, the attack is called $p$-sparse. \QET
\end{definition}
We mainly study $p$-sparse spatio-temporal false data attacks.
%%	
%%
%		The  scheme has no direct information about the set of corrupted sensors \Cc\Cc, but it knows pp, the maximum number of corrupted sensors.
Next, we introduce observability redundancy in the following assumption, which is a necessary condition and commonly used in literature (see \cite{fawzi2014secure,lu2019resilient,nakahira2018attack,Mishra2017TCNS,an2017secure}).
\begin{assumption}
    \label{as:2p}
    The system $(A,C)$ is $2p$-sparse observable, i.e., the system $(A,C_{\Ic\setminus\Mc})$ is observable for any subset $\Mc\subset\Ic$ where $|\Mc| = 2p$ and the matrix {$C_{\Ic\setminus\Mc}$} represents the matrix composed of rows of $C$ with row indices in $\Ic\setminus\Mc$. \QET
\end{assumption}

The manipulated time-stamp set $\Gamma^a$ is defined as follows:
\begin{align}
	\Gamma^a\triangleq & \textstyle \bigcup_{i=1}^m \Gamma^a_i,&\Gamma^a_i\triangleq \{t~|~(i, \, t, \, y_i(t))\in\Sc^a(t)\}.
	% \Tc^a\triangleq&\bigcup_{i=1}^m \Tc_i^a,&\Tc^a_i\triangleq \left\{t_1-t_2\ | ~t_1,t_2\in\Gamma^a_i \right\}.
	% \label{mani_time_set}
\end{align}
%%%%%%%%%%%%%% end attack part %%%%%%%%%%%%%%%%%%%%%%%%%%

%Due to various delays, 
In practice, received measurement time-stamps may not be in increasing order, leading to the out-of-sequence problem \cite{Kaempchen2003DATASS,out-of-sequence-measurements-IVS,FIR_Discretely_Delayed}. 
This problem can be dealt with by using a \emph{buffer} that sorts measurement triples based on their time-stamps in increasing order \cite{Kaempchen2003DATASS,out-of-sequence-measurements-IVS,FIR_Discretely_Delayed}.
%The usual solution to such a problem is the \emph{Buffering} method.
% The buffer simply stores all the measurements from a time window of length dd before sending them to the estimator, where dd is the maximum delay of a measurement sample \cite{Kaempchen2003DATASS,out-of-sequence-measurements-IVS,FIR_Discretely_Delayed}.
% %where all measurements are simply stored in a buffer with length dd before arriving at the fusion center, where dd is the maximum delayed sample number\cite{Kaempchen2003DATASS,out-of-sequence-measurements-IVS,FIR_Discretely_Delayed}. 
% Measurements, delayed more than dd, are seen as non-informative and discarded from the buffer. 
% %%
% Consequently, the measurement sequence after the buffer is sorted in the correct order. 
%%
We assume a buffer before the secure estimator, {which may be misled by manipulated time-stamps. Thus, we employ the following assumption 
for the received, possibly manipulated, time-stamp.}
%enabling us to employ the following assumption. 
%		 introduces the assumption that the measurement time-stamp is in increasing order.
%Recall that the attacked time-stamp set is $\Gamma^a=\bigcup_{i=1}^m \Gamma^a_i$.
\begin{assumption}\label{as:time_in_order}
	The received measurement triples are {sorted} such that their corresponding time-stamps are in %a strictly 
    {non-decreasing} order, i.e., $\Gamma^a=\{t_0,t_1,t_2,\dots\}$ and $0 = t_0 < t_z \leq t_{z+1},~\forall \, z \in\Zb_{>0}$. \QET
	% \begin{align*}
		% 0=t_0<t_1<t_2<\cdots.
		% \end{align*}
\end{assumption}

{Assumption~4 stipulates that, after buffering, the received measurement triples are presented in non-decreasing order of their possibly manipulated time-stamps. This post-buffer ordering provides a well-defined discrete-time index $[k]$ for the subsequent design and analysis.}
%%
% \vspace{-20pt}
\subsection{Secure state estimation problem}
An example of state estimation problem in electricity monitoring is given in Fig.~\ref{fig:power} where an attacker conducts spatio-temporal false data attacks.
To deal with such attacks,
we design a secure state estimation algorithm that provides a secure state estimate $\check x(t)$ of the true state $x(t)$ with uniformly bounded error:
\begin{align}
        \lvert \, \check x_j[k] - x_j[k] \, \rvert \leq F(A,C,Q,\Sigma,\bar{R},\gamma),\, \forall \, j \, \in \, \Jc,\label{uni_bound}
    \end{align}
where the design parameter $\gamma$ is a positive scalar and $\Sigma = \Eb \, [ \, x(0) x(0)^\top \, ]$. Notice that the value of the function $F(\cdot)$ depends only on the system parameters and remains independent of attacks, {whose explicit form will be provided later in Theorem~\ref{th:secure_fusion}}.  

% Now, we are ready to formulate the research problem that will be dealt with in this paper.
% %%
% \begin{problem}
%     [Secure state estimation]\label{prob:secure}
%     Given the system with asynchronous non-periodic measurements \eqref{eq:system}-\eqref{eq:y_i_def} under the spatio-temporal false data injection attacks in Definition~\ref{def:attack}, design a secure estimation algorithm that provides a stable state estimate $\check x(t)$ of the true state $x(t)$ in \eqref{eq:system}.
%     % \tcr{such that the following condition holds: 
%     % \begin{align}
%     %     \lvert \, \check x_j[k] - x_j[k] \, \rvert \leq F(A,C,Q,R(t_k),\Sigma,\gamma),\, \forall \, j \, \in \, \Jc,
%     % \end{align}
%     % where $\gamma$ is a given positive scalar and $\Sigma = \Eb \, [ \, x(0) x(0)^\top \, ]$. It is worth noting that the function $F(\cdot)$ depends on the network and design parameters and is independent on attacks. 
%     % }
%     \QET
% \end{problem}

%% This paragraph is removed to have more space
In the following section, we present the {asynchronous} sampled-data KF, {which is known to provide an optimal state estimate in the absence of attacks \cite{Feddaoui_asy_kalman},}  
and its local linear decomposition. 
{The asynchronous sampled-data KF serves as a reference estimator without the effects of sensor attacks (hereafter, the “oracle KF”), which we use in our analysis to bound the estimation error of the designed estimator.
The local decomposition of the asynchronous sampled-data KF inherently isolates malicious impacts on each sensor, which supports us in building a secure estimation fusion algorithm. More specifically, the algorithm is shown to exactly recover the oracle KF performance in the absence of attacks, while providing a state estimate that remains in a neighborhood of the oracle KF estimate in the presence of spatio-temporal attacks.
The secure state estimation algorithm is developed and analyzed} in Section~\ref{sec:secure_algorithm}.
\section{Asynchronous Sampled-data KF and Its Decomposition}
\label{sec:hybridkalman}
We first introduce the sampled-data KF with asynchronous sampling measurements. The remainder of the section presents the decomposition of the sampled-data KF and how it recovers state estimates provided by the sampled-data KF.
\subsection{Asynchronous sampled-data KF}
For continuous {LTI} systems with synchronous discrete-time measurements, the sampled-data KF provides optimal state estimates by combining continuous-time prediction steps and discrete-time update steps \cite{sarkka2006recursive}.
%\begin{subequations}
%\begin{align}
%	&\text{Prediction steps:} \notag\\
%	&\hat{x}_{\m}(t_{k})=\int_{t_{k-1}}^{t_{k}}\!\!\! A x(\tau) \rd \tau + \hat{x}(t_{k-1}),\label{eq:def_x-}\\
%	&P_{\m}(t_{k})=\int_{t_{k-1}}^{t_{k}}\!\!\! \left( A P(\tau)+ P(\tau)A^{\top} +Q(\tau) \right)\rd \tau + P(t_{k-1}) \label{eq:def_P-} .
%\end{align}
%\end{subequations}
%\begin{subequations}
%\begin{align}
%	&\text{Update steps:} \notag\\
%	& K(t_k)=P_{\m}(t_{k}) C^{\top} \left(C P_{\m}(t_{k}) C^{\top}+R(t_k) \right)^{-1}, \label{eq:def_Kk} \\	
%	& P(t_{k}) = P_{\m}(t_k)-K(t_k) C P_{\m}(t_k), \label{eq:def_Pt} \\
%	& \hat{x}(t_{k})=\hat{x}_{\m}(t_k) + K(t_k)  \left(y(t_{k})-C\hat{x}_{\m}(t_{k}) \right) , \label{eq:def_xt}
%\end{align}
%\end{subequations}
We define the measurement availability index $\phi_i[k]\in\{0,1\}$ where $\phi_i[k]=1$  if sensor $i$ has a measurement with time-stamp $t_k$ and $\phi_i[k]=0$ otherwise. The notation $[k]$ stands for the discrete-time instant.
Let us define the following matrices: 
%that will be used to estimate the state at each discrete time instant $k$:
%notation conciseness, l
%define system dynamic matrix $A[k],C[k]$ and noise $Q[k],R[k]$ as  % and time interval $\Delta_k$ 
\begin{align}
	A[k]& \triangleq \exp((t_{k+1}-t_{k})  A ), \
	% \tcr{Q[k] \triangleq (t_{k+1}-t_{k}) Q}  ,\\
    C[k] \triangleq \diag (\phi[k]) \, C, 
    \nonumber \\
    Q[k] &\triangleq \int_{t_k}^{t_{k+1}} \exp(\tau A) Q \exp(\tau A^\top) \, \text{d} \tau,
	\nonumber \\
	R[k] &\triangleq \diag (\phi[k]) \, R(t_k)
    \, \diag (\phi[k]), \label{sampled_matrices}
\end{align}
where $\phi[k]\triangleq\left(\phi_1[k],\dots, \phi_m[k]\right)^\top$. 
%%
%The sampled-data KF with asynchronous measurements is given in the following. 
%We consider the estimation at sampling instant and 
At each sampling instant $k$, the system can be considered as a discrete time-variant system, on which we implement the following asynchronous sampled-data KF:
\begin{subequations}\label{eq:asy_kalman}
	\begin{align}
		&\hspace{-12pt}\textbf{Prediction steps:} \notag \\
		&\hat{x}_{\m}[k]=A[k-1]\hat{x}[k-1] ,\\
		&P_{\m}[k]=A[k-1]P[k-1]A^{\top}[k-1] + Q[k-1], \\
		&\hspace{-12pt}\textbf{Update steps:} \notag \\
		&K[k]=P_{\m}[k] C^{\top}[k] 
		%\diag (\phi[k]) \times \notag \\
		%&\quad \left(\phi^{\top}[k] \left( C P_{\m}[k] C^{\top}+R[k] \right) \phi[k]\right)^{-1} \diag (\phi[k]) , 
		\big( C[k] P_{\m}[k] C^{\top}[k]+R[k] \big)^{\dagger},
		\label{eq:def_Kk_asy} 	\\
		& P[k] = (I-K[k] C[k]) P_{\m}[k], \label{eq:def_Pt_asy} \\
		& \hat{x}[k]=\hat{x}_{\m}[k] + K[k]  \left(y[k]-C[k] \hat{x}_{\m}[k] \right) , \label{eq:def_xt_asy}
	\end{align}
\end{subequations}
%{Tung edits \eqref{eq:asy_kalman} and implements on Matlab}
where $y[k]\triangleq y(t_k)$, initial condition $\hat{x}[0]=0, \, P[0]=\Sigma$, and $(\cdot)^\dagger$ stands for the Moore-Penrose inverse.
%$\left( C[k] P_{\m}[k] C^{\top}[k]+R[k] \right)^{\dagger}$ is the Moore-Penrose inverse of $\left( C[k] P_{\m}[k] C^{\top}[k]+R[k] \right)$. 
Notice that when $\phi_i[k]=0$, $C_i[k]=\bm{0}^{\top}$ and thus based on \eqref{eq:def_Kk_asy}, the $i$-th column of the Kalman gain is zero, i.e., $K_i[k]=\mathbf{0}$, resulting in $K[k]C[k]=K[k]C ~\forall k$. 

In the following section, we will decompose the KF \eqref{eq:def_xt_asy} into a linear sum of local state estimates and propose an optimization-based fusion scheme that provides a state estimate exactly the same as the one given by the KF \eqref{eq:def_xt_asy}.
% \subsection{sampled-data KF under attacks}
%\section{Decomposition of Asynchronous sampled-data KF}
%We will decompose the aforementioned asynchronous sampled-data KF into linear combination of local estimators and then design a least square problem based on the local estimations, which will later be used to construct secure fusion schemes.
%In order to prevent degeneration problems, we introduce the following assumptions.
%\begin{assumption}\label{as:A_inv}
%	The continuous-time system matrix $A$ is invertible. 
%\end{assumption}
%\begin{remark}
%	Since $A[k]$ is obtained by discretizing a time-invariant continuous system, Assumption \ref{as:A_inv} implies that $A[k]$ is also invertible. 
%\end{remark}
% \vspace{-0.2cm}
\subsection{Linear decomposition of the sampled-data KF}
Define
% \begin{align}\label{eq:def_Pi}
% 	\Pi[k] &\triangleq A[k]-K[k+1]C A[k] .
% \end{align}
\begin{align}\label{eq:def_Pi}
	\Pi[k-1] &\triangleq A[k-1]-K[k]C A[k-1].
\end{align}
The local estimator at sensor $i$ is defined as:
% \begin{align}\label{eq:def_zeta}
% 	\zeta_{i}[k+1]=\Pi[k] \zeta_{i}[k] + K_i[k+1] y_i[k+1],
% \end{align}
\begin{align}\label{eq:def_zeta}
	\zeta_{i}[k] = \Pi[k-1] \zeta_{i}[k-1] + K_i[k] y_i[k],
\end{align}
which is initialized as $\zeta_{i}[0]= \mathbf{0}$.
%Since $K_i[k+1]=\mathbf{0}$ if system operator have no access to $y_i[k+1]$, this local estimator is well-defined in the asynchronous measurements formulation with slightly abuse of the notation.
%Define 
%\begin{align*}
%	F_i[k]=V_A\cdot \diag\left(K_i[k]\right) \cdot V^{-1}[k]
%\end{align*}
% By summing up equation \eqref{eq:def_zeta} over $i$, %it is easily verified that 
% the asynchronous sampled-data KF (\ref{eq:def_xhat}) can be written as the linear combination of local estimators:
From \eqref{eq:def_xt_asy}, \eqref{eq:def_Pi}, and \eqref{eq:def_zeta}, one obtains the following property:
\begin{equation}\label{eq:sum_zeta=xhat}
	\hat{x}[k]= \textstyle \sum_{i=1}^{m} \, \zeta_{i}[k] .
\end{equation}
In the following, we show the relationship between $\zeta_{i}[k]$ and $x(t_k)$, and prove that $\zeta_{i}[k]$ is a stable estimate of $G_i[k]{x}[k]$ where $G_i[k]$ satisfies the following dynamics:
% \begin{equation}\label{eq:defG}
% 	G_i[k+1]=\Pi[k] G_i[k] A^{-1}[k]+K_i[k+1]C_i.
% \end{equation}
\begin{equation}\label{eq:defG}
	G_i[k] = \Pi[k-1] G_i[k-1] A^{-1}[k-1]+K_i[k]C_i.
\end{equation}
Note that $G_i[k]$ plays a crucial role in designing the secure state estimation algorithm in Section~\ref{sec:secure_algorithm}. Therefore, we analyze its structure and show that $G_i[k]$ has a time-invariant form in the following.

\subsection{Structure of $G_i[k]$}
We need the following assumption to prevent the observability degradation problems.

\begin{assumption}\label{as:geo_mul}
The geometric multiplicity of all the eigenvalues of $A$ is 1. 
%Moreover, for every sensor $i$, if $G_i[k]^\top \bm{e}_j \neq 0$, then $(\Pi[k-1] G_i[k-1] A^{-1}[k-1]+K_i[k]C_i)^\top \bm{e}_j \neq 0,~\forall k$.
%Without loss of generality, we assume that AA is in the Jordan canonical form.
\QET
\end{assumption}

Assumption~\ref{as:geo_mul} simplifies the observability structure of system $(A,C)$, which can be seen from Lemma \ref{lm:span} later. The use of Assumption~\ref{as:geo_mul} ensures that the Jordan blocks of $A$ are linearly independent, which enables the definition of state observability, i.e., define $\Ec_j$ as the index set of sensors that can observe state $j$, i.e.
\begin{equation}\label{eq:def_Oc}
	\Ec_j\triangleq \{ i\in\Ic\ |\ O_i^{\top} \be_j\neq \mathbf{0} \},
\end{equation}
where $\be_j$ is the canonical basis vector with 1 on the $j$-th entry and 0 on the other entries. Moreover, 
$$
O_{i} \triangleq \big[ C_i^\top, \, 
(C_iA)^\top, \, \dots, \, (C_{i} A^{n-1})^\top
\big]^\top
$$
is the observability matrix of the system $(A,C_i)$.
Since we focus on the observable system, the state observability index set $\Ec_j$ is not empty, i.e., $\Ec_j\neq \varnothing,\forall j\in\Jc$.
With Assumption \ref{as:geo_mul}, the following results, {whose proofs are reported in the full version of this paper \cite[Appendices A.5 \& A.6]{li2025secure},} characterize the structure of $G_i[k]$. 
\begin{lemma}\label{lm:span} 
    %Recall that AA is in a Jordan canonical form and
	Given 
    %G_i[k]G_i[k] as the matrix sequence that satisfies 
    the dynamics \eqref{eq:defG}, if $\rs(G_i[0])=\rs(O_i)$, the following holds $\forall k\in\Zb_{\geq 0}$:
	\begin{align}
		\rs(G_i[k])=\rs(O_i)=\rs(H_i), \label{rowspan_Gi}
	\end{align}
	where $H_i \triangleq \diag \left( \Ib_{\Ec_1}(i), \, \Ib_{\Ec_2}(i), \dots, \, \Ib_{\Ec_n}(i) \right)$ 
 %is the following diagonal matrix
	% \begin{equation} \label{rowspan_Gi}
	% 	H_i\triangleq \begin{bmatrix}
	% 		\Ib_{\Ec_1}(i) & & & \\
	% 		&\Ib_{\Ec_2}(i) & &  \\
	% 		& & \ddots &  \\
	% 		& & & \Ib_{\Ec_n}(i)
	% 	\end{bmatrix} ,
	% \end{equation}
	and $\mathbb{I}_{\Ec_j}(i)$ is the indicator function that takes the value 1 when $i \in \Ec_j$ and value 0 when $i \notin \Ec_j$.
	As a result, there exists an invertible matrix $V_i[k]$ such that $V_i[k] G_i[k]=H_i$. 
    \QET
\end{lemma}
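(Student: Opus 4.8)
The plan is to prove the two equalities in \eqref{rowspan_Gi} separately: first the static identity $\rs(O_i)=\rs(H_i)$, which is a structural consequence of Assumption~\ref{as:geo_mul}, and then the invariance $\rs(G_i[k])=\rs(O_i)$ for all $k$, by induction on $k$. The final ``as a result'' claim is then immediate: since $H_i$ is the diagonal $0/1$ matrix whose nonzero rows are exactly $\{\be_j^\top : i\in\Ec_j\}$, any matrix whose row span equals $\rs(H_i)$ can be brought to $H_i$ by a suitable invertible left multiplication, so an invertible $V_i[k]$ with $V_i[k]G_i[k]=H_i$ exists; conversely, the existence of such a $V_i[k]$ already forces $\rs(G_i[k])=\rs(H_i)$, because left multiplication by an invertible matrix preserves the row span.

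For the identity $\rs(O_i)=\rs(H_i)$, I would pass to the real Jordan basis of $A$, which under Assumption~\ref{as:geo_mul} has a single Jordan block per eigenvalue. Restricted to one such block, the scalar output $C_i$ has a ``prefix'' structure: if $s$ is the first coordinate inside the block on which $C_i$ acts nontrivially, then $\be_1,\dots,\be_{s-1}$ generate an $A$-invariant subspace annihilated by $C_iA^\ell$ for every $\ell$, hence are unobservable, while the sub-block on the remaining coordinates is a Jordan block with nonzero leading output and is therefore fully observable. Thus within each block the unobservable subspace $\ker O_i$ is a coordinate subspace, and so is its orthogonal complement $\rs(O_i)=(\ker O_i)^\perp$. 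Collecting the blocks shows that $\rs(O_i)$ is precisely the coordinate subspace spanned by the states that sensor $i$ observes, i.e. $\{\be_j : i\in\Ec_j\}$, which is $\rs(H_i)$ by the definitions of $\Ec_j$ and $H_i$.

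For the invariance I would induct on $k$, with base case the hypothesis $\rs(G_i[0])=\rs(O_i)$. Assuming $\rs(G_i[k-1])=\rs(O_i)$, the inclusion $\rs(G_i[k])\subseteq\rs(O_i)$ splits along the two terms of \eqref{eq:defG}: the rank-one term $K_i[k]C_i$ has row span $\Span\{C_i\}\subseteq\rs(O_i)$, since $C_i$ is the first row of $O_i$; and for the propagated term I would use $\rs(\Pi[k-1]G_i[k-1]A^{-1}[k-1])\subseteq\rs(G_i[k-1]A^{-1}[k-1])=\rs(O_iA^{-1}[k-1])$, together with the fact that $\rs(O_i)$ is invariant under right multiplication by any polynomial in $A$ (Cayley--Hamilton applied to the rows $C_iA^\ell$), hence by $A^{-1}[k-1]=\exp(-(t_k-t_{k-1})A)$. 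This yields $\rs(G_i[k])\subseteq\rs(O_i)$.

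The main obstacle is the reverse inclusion, namely that the recursion never collapses the row span. In the Jordan coordinates $\rs(O_i)=\rs(H_i)$ is a coordinate subspace, so $G_i[k]$ has zero columns outside the observed index set $\Ic_i=\{j:i\in\Ec_j\}$; right invariance forces $A^{-1}[k-1]$ to be block lower-triangular with respect to the partition $(\Ic_i,\Ic_i^{\,c})$ with invertible $\Ic_i$-diagonal block, so the conjugation $A[k-1](\cdot)A^{-1}[k-1]$ preserves the rank of the observed block. The delicate point is that the Kalman factor $(I-K[k]C)$ can drop rank while only sensor $i$'s own innovation $K_i[k]C_i$ is re-injected; I would show this loss is exactly compensated, equivalently that $\rs(G_i[k])$ retains dimension $|\Ic_i|=\rank(O_i)$, by carrying an invertible $V_i[k]$ through the induction, constructing it from $V_i[k-1]$ and the update matrices and verifying its invertibility. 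This is where the non-pathological sampling of Assumption~\ref{as:sample_time} enters, keeping the sampled modes distinguishable so that no new unobservable direction is created; establishing this compensation is where I expect the real work to lie.
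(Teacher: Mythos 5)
Your treatment of the two easy parts is sound and, if anything, more complete than the paper's: the Jordan-basis argument for $\rs(O_i)=\rs(H_i)$ fleshes out what the paper dismisses in one line (``by the definition of $O_i$ and $H_i$''), and your forward inclusion---Cayley--Hamilton makes $\exp(-\tau A)$ a polynomial in $A$, so $\rs(O_i)$ is invariant under right multiplication by $A^{-1}[k]$, while $\rs(K_i[k+1]C_i)\subseteq\Span\{C_i\}\subseteq\rs(O_i)$---is exactly the paper's induction step. The problem is the reverse inclusion, which you correctly identify as the crux but then only promise to handle: ``carrying an invertible $V_i[k]$ through the induction and verifying its invertibility \ldots is where I expect the real work to lie'' is a plan, not a proof, so the proposal is incomplete at the only nontrivial point of the lemma.

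Worse, the deterministic compensation you hope for is false, so no amount of work along that line can close the gap. Scalar counterexample: $n=1$, $m=2$, $C_1=C_2=1$; the recursion \eqref{eq:defG} reduces to $G_i[k+1]=(1-K_1[k+1]-K_2[k+1])\,G_i[k]+K_i[k+1]$. If at step $k+1$ only sensor $1$ samples with a noiseless measurement ($R=0$ is permitted by Assumption~\ref{as:R}), then $K_1[k+1]=1$, $K_2[k+1]=0$, and $G_2[k+1]=0$: the row span of $G_2$ collapses even though Assumption~\ref{as:sample_time} holds here ($\Tc^*=\varnothing$ for scalar $A$), which also shows your attribution of the rescue to non-pathological sampling is misplaced---the paper's proof of this lemma never invokes Assumption~\ref{as:sample_time}, only Assumption~\ref{as:geo_mul} for the coordinate-subspace structure. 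What the paper actually does at this point is a genericity/perturbation device: whenever $\big(\Pi[k]G_i[k]A^{-1}[k]+K_i[k+1]C_i\big)^\top\be_j=0$ for a needed index $j$, it alters $K_i[k+1]$ slightly so the degeneracy disappears ``while the performance of the estimator is not influenced.'' In other words, the span preservation holds only generically, for (infinitesimally) perturbed gains; a correct proof must either adopt such a perturbation argument or weaken the statement, and your proposal contains neither.
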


\begin{lemma}\label{lm:sumG}
    Given the dynamics \eqref{eq:def_Pi} and \eqref{eq:defG},
	if $\sum_{i=1}^{m} G_i[0]=I$, the following holds for all $k\in\Zb_{\geq 0}$: $\sum_{i=1}^{m} G_i[k]=I$. \QET
	% \begin{align}
	% 	\sum_{i=1}^{m} G_i[k]=I.
	% \end{align} 
\end{lemma}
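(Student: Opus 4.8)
The plan is to prove the claim by induction on $k$, exploiting the fact that the recursion \eqref{eq:defG} for $G_i[k]$ is affine-linear in $G_i[k-1]$, with a left multiplier $\Pi[k-1]$ and a right multiplier $A^{-1}[k-1]$ that do not depend on the sensor index $i$. The base case $k=0$ is exactly the hypothesis $\sum_{i=1}^m G_i[0]=I$, so no work is required there.

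For the inductive step, I would assume $\sum_{i=1}^m G_i[k-1]=I$ and sum \eqref{eq:defG} over $i$. Since the left and right multipliers are common to every summand, the prediction term factors as $\Pi[k-1]\big(\sum_{i=1}^m G_i[k-1]\big)A^{-1}[k-1]$, which collapses to $\Pi[k-1]A^{-1}[k-1]$ after invoking the induction hypothesis, while the injection term becomes $\sum_{i=1}^m K_i[k]C_i$. The key bookkeeping observation is that $\sum_{i=1}^m K_i[k]C_i = K[k]C$: because $C=[C_1^\top,\dots,C_m^\top]^\top$ has the $C_i$ as its block rows and $K[k]$ has the $K_i[k]$ as its block columns, the matrix product decomposes precisely as this sum (this is consistent with the identity $K[k]C[k]=K[k]C$ noted after \eqref{eq:def_xt_asy}).

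The step that makes everything cancel is the factorization of $\Pi$. From its definition \eqref{eq:def_Pi}, $\Pi[k-1] = (I-K[k]C)A[k-1]$, so that $\Pi[k-1]A^{-1}[k-1] = (I-K[k]C)A[k-1]A^{-1}[k-1] = I - K[k]C$. Substituting both simplified terms gives $\sum_{i=1}^m G_i[k] = (I-K[k]C) + K[k]C = I$, which closes the induction.

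I do not anticipate a genuine obstacle: the statement is a conservation/consistency property rather than a deep structural fact, and the argument is entirely linear. The only points requiring care are (i) confirming that $A[k-1]=\exp((t_k-t_{k-1})A)$ is invertible, so that $A^{-1}[k-1]$ in \eqref{eq:defG} is well defined, which holds because matrix exponentials are always nonsingular, and (ii) correctly identifying the block-column/block-row product $\sum_{i=1}^m K_i[k]C_i = K[k]C$. Everything else is a one-line cancellation.
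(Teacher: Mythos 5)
Your proof is correct and takes essentially the same route as the paper's: the identical induction, summing the recursion \eqref{eq:defG} over $i$, applying the hypothesis $\sum_{i=1}^m G_i[k-1]=I$, collapsing the injection term via $\sum_{i=1}^m K_i[k]C_i = K[k]C$, and using the definition \eqref{eq:def_Pi} to obtain $\Pi[k-1]A^{-1}[k-1] + K[k]C = (I-K[k]C)+K[k]C = I$. Your additional check that $A[k-1]=\exp((t_k-t_{k-1})A)$ is invertible is a minor well-definedness point the paper leaves implicit, not a different argument.
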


\begin{remark}
    To fulfill the initialization requirements in Lemmas~\ref{lm:span}--\ref{lm:sumG}, i.e., $\sum_{i=1}^{m} G_i[0]=I$ and $\rs(G_i[0])=\rs(O_i)$, 
    we can initialize the sequence $G_i[k]$ as $G_i[0] = \diag \big( \Ib_{\Ec_1}(i)/|\Ec_1|, \, \Ib_{\Ec_2}(i)/|\Ec_2|, \ldots, \, \Ib_{\Ec_n}(i)/|\Ec_n|\big)$, which is well-defined thanks to observable systems. 
    %This initialization satisfies the assumptions in Lemmas~\ref{lm:span}-\ref{lm:sumG}, i.e., $\sum_{i=1}^{m} G_i[0]=I$ and $\rs(G_i[0])=\rs(O_i)$.
    \QET
\end{remark}

\vspace{-0.1cm}
\subsection{Least-squares state estimation fusion}
%\subsection{State fusion by a least square optimization problem}
%%
Define the local residue as $\epsilon_i[k]\triangleq\zeta_{i}[k]-G_i[k] x[k]$ and the global residue as $
\epsilon[k] \triangleq [\epsilon_1[k]^{\top}, \ldots, \epsilon_m[k]^{\top}]^{\top}$, which has dynamics and covariance matrix {as presented} in the following lemma. 
%Then, we have the dynamics of the local residue and the covariance matrix of the global residue in the following lemma.
\begin{lemma}\label{lm:epsilon}
	For a fixed sensor $i$, the local residue $\epsilon_i[k]$ satisfies the following dynamics:
	\begin{align}\label{eq:epsilon_recursive}
		\epsilon_i[k+1]\!=\!\Pi[k]\epsilon_i[k]\!-\!\Pi[k] G_i[k] A^{-1}[k] w[k] 	\!+\!K_i[k+1] v_{i}[k+1].
		%		\left(G_{i}[k]-\mathbf{1}_{n} C_i\right) w[k]-\mathbf{1}_{n} v_{iT}[k+1].
	\end{align}
Moreover, the covariance matrix of $\epsilon[k]$ is computed as follows: 
\begin{equation}
	\cov(\epsilon[k+1])=\bm{\Pi}[k] \cov(\epsilon[k]) \bm{\Pi}^{\top}[k] +\bm{Q}[k],
	\label{def_cov_epsilon}
\end{equation}
where $\bm{\Pi}[k]\triangleq I_m \otimes \Pi[k]$ and
\begin{align*}
	&\bm{Q}[k]\triangleq \cov\big(\Pi[k] G_i[k] A^{-1}[k] w[k]-K_i[k+1] v_{i}[k+1]\big) \\
	&=\begin{bmatrix}
		\Pi[k]G_1[k]A^{-1}[k] \\
		\vdots \\
		\Pi[k]G_m[k]A^{-1}[k]
	\end{bmatrix}
	Q[k]
	\begin{bmatrix}
		\Pi[k] G_1[k]A^{-1}[k] \\
		\vdots \\
		\Pi[k] G_m[k]A^{-1}[k]
	\end{bmatrix}^{\top}\\
	&\quad +\begin{bmatrix}
		K_1[k+1] \\
		\vdots \\
		K_m[k+1]
	\end{bmatrix}
	\begin{bmatrix}
		K_1[k+1] \\
		\vdots \\
		K_m[k+1]
	\end{bmatrix}^{\top}
	\circ
	\left( R[k+1]\otimes \mathbf{1}_{n\times n} \right),
\end{align*}
the notation $\circ$ denotes the element-wise matrix multiplication, and $\otimes$ denotes the Kronecker product.
 \QET
	%Since $\Pi[k]$ is Schur stable for all $k\in\Zb_{\geq 0}$ 
\end{lemma}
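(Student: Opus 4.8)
The plan is to prove the recursion \eqref{eq:epsilon_recursive} by direct substitution and then propagate it to the stacked covariance \eqref{def_cov_epsilon}. First I would expand $\epsilon_i[k+1]=\zeta_{i}[k+1]-G_i[k+1]x[k+1]$ using the local-estimator recursion \eqref{eq:def_zeta}, the $G_i$-dynamics \eqref{eq:defG}, the sensor model $y_i[k+1]=C_i x[k+1]+v_i[k+1]$ from \eqref{eq:y_i_def}, and the exact discretization $x[k+1]=A[k]x[k]+w[k]$ of \eqref{eq:system} over $[t_k,t_{k+1}]$, where $w[k]$ denotes the accumulated process noise with covariance $Q[k]$. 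Upon substitution, the two terms $K_i[k+1]C_i x[k+1]$ arising from $\zeta_{i}[k+1]$ and from $G_i[k+1]x[k+1]$ cancel, leaving $\Pi[k]\zeta_{i}[k]+K_i[k+1]v_i[k+1]-\Pi[k]G_i[k]A^{-1}[k]x[k+1]$. Using $A^{-1}[k]x[k+1]=x[k]+A^{-1}[k]w[k]$ and regrouping $\Pi[k]\big(\zeta_{i}[k]-G_i[k]x[k]\big)=\Pi[k]\epsilon_i[k]$ then yields \eqref{eq:epsilon_recursive} directly.

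For the covariance, I would stack the per-sensor recursion into $\epsilon[k+1]=\bm{\Pi}[k]\epsilon[k]+\eta[k]$, where $\bm{\Pi}[k]=I_m\otimes\Pi[k]$ encodes the identical block action of $\Pi[k]$ on each $\epsilon_i[k]$, and $\eta[k]$ stacks the noise terms $-\Pi[k]G_i[k]A^{-1}[k]w[k]+K_i[k+1]v_i[k+1]$. The key observation is that $\eta[k]$ is a function only of the process noise on $[t_k,t_{k+1}]$ and the measurement noise at $t_{k+1}$, both of which are independent of the past noise realizations that determine $\epsilon[k]$. Hence $\cov(\epsilon[k],\eta[k])=0$, the cross terms vanish, and one obtains \eqref{def_cov_epsilon} with $\bm{Q}[k]=\cov(\eta[k])$.

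It then remains to evaluate $\cov(\eta[k])$. I would split $\eta[k]$ into its process-noise part and its measurement-noise part, which are mutually independent so their covariances add (the relative sign between them being irrelevant). The process part is the vertical stack of the matrices $\Pi[k]G_i[k]A^{-1}[k]$ acting on the common $w[k]$, whose covariance is exactly the first stacked term with $Q[k]=\cov(w[k])$ sandwiched in the middle. The measurement part is the stack of the scalar-times-column vectors $K_i[k+1]v_i[k+1]$; its $(i,j)$ block equals $[R[k+1]]_{ij}\,K_i[k+1]K_j[k+1]^{\top}$, and I would confirm that this block pattern is reproduced precisely by the Hadamard product of the outer product of the stacked gains with the inflated covariance $R[k+1]\otimes\mathbf{1}_{n\times n}$.

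The main obstacle is the bookkeeping in this last step: because each $v_i[k+1]$ is scalar while each $K_i[k+1]$ is an $n$-vector, the cross-sensor correlations $[R[k+1]]_{ij}$ must be distributed correctly across the $n\times n$ blocks, and verifying that the $\circ$-with-$\otimes$ expression achieves exactly this distribution is where care is needed. A secondary point worth pinning down is the independence argument used to drop the cross terms, which relies on the whiteness of $w(\cdot)$ and $v(\cdot)$ together with the fact that $\epsilon[k]$ is measurable with respect to the noise up to time $t_k$ only.
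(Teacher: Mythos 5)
Your proposal is correct and follows essentially the same route as the paper's proof: a direct expansion of $\epsilon_i[k+1]=\zeta_{i}[k+1]-G_i[k+1]x[k+1]$ via \eqref{eq:def_zeta}, the exact discretization $x[k+1]=A[k]x[k]+w[k]$, and the $G_i$-dynamics \eqref{eq:defG}, yielding \eqref{eq:epsilon_recursive} (you invoke \eqref{eq:defG} before expanding $x[k+1]$ while the paper does so afterward, but the cancellations are identical). Your covariance step --- stacking the recursion, noting the independence of $w[k]$ and $v[k+1]$ from $\epsilon[k]$ so the cross terms vanish, and verifying block-by-block that the $(i,j)$ block $[R[k+1]]_{ij}\,K_i[k+1]K_j^{\top}[k+1]$ is exactly what the Hadamard-with-Kronecker expression encodes --- is the standard propagation argument that the paper's proof leaves implicit, and it checks out.
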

The proof is {reported} in \cite[Appendix A.7]{li2025secure}.
The result of Lemma~\ref{lm:epsilon} shows that the $\zeta_i[k]$ is the stable estimate of $G_i[k]x[k]$.
The expression \eqref{def_cov_epsilon} enables us to consider
the matrix sequence $\bm{W}[k]$ that satisfies the following recursive equation 
\begin{equation}\label{eq:defW}
	\bm{W}[k+1]=\bm{\Pi}[k] \bm{W}[k] \bm{\Pi}^{\top}[k] +\bm{Q}[k] .
\end{equation}
%with initial value %$\bm{W}[0]\triangleq \mathbf{1}_{m\times m}  \otimes \Sigma .$
%\begin{equation}\label{eq:initW}
%\bm{W}[0]\triangleq
%\begin{bmatrix}
%G_1[0] \\
%\vdots  \\
%G_m[0]
%\end{bmatrix}\Sigma
%\begin{bmatrix}
%G_1[0] \\
%\vdots  \\
%G_m[0]
%\end{bmatrix}^^{\top} .
%\end{equation}
The following lemma shows that $\bm{W}[k]$ is non-singular.
\begin{lemma}\label{lm:cond_num}
	If Assumption \ref{as:sample_time} is satisfied and $\bm{W}[k]$ is initialized to be non-singular, i.e., $\bm{W}[0] \succ 0$, then there exists a positive constant $\overline{W}$ such that for all $k\in\Zb_{\geq 0}$,
	\begin{equation}
		0 \prec \bm{W}[k] \preceq \overline{W}\cdot I ,
	\end{equation}
	where $I$ is the identity matrix of size $mn\times mn$. \QET
\end{lemma}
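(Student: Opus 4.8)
The plan is to read \eqref{eq:defW} as a time-varying Stein (Lyapunov) recursion whose forcing term is $\bm{Q}[k]$, and to prove the two inequalities by different arguments. The strict lower bound $\bm{W}[k]\succ0$ is the easy direction. First I would show that each $\bm{\Pi}[k]=I_m\otimes\Pi[k]$ is invertible: because $\Pi[k]=(I-K[k+1]C)A[k]$ and $A[k]=\exp\!\big((t_{k+1}-t_k)A\big)$ is always invertible, it suffices that $I-K[k+1]C$ is invertible, which follows from the filter identity $I-K[k+1]C=P[k+1]P_{\m}^{-1}[k+1]$ together with $P_{\m}[k+1]\succ0$ (guaranteed by $Q\succ0$, hence $Q[k]\succ0$) and $P[k+1]\succ0$. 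Once $\bm{\Pi}[k]$ is invertible, starting from $\bm{W}[0]\succ0$ the recursion gives $\bm{W}[k+1]\succeq\bm{\Pi}[k]\bm{W}[k]\bm{\Pi}^{\top}[k]\succ0$, and induction yields $\bm{W}[k]\succ0$ for all $k$.

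For the upper bound I would unroll the recursion into its explicit solution
\[
\bm{W}[k]=\bm{\Phi}(k,0)\,\bm{W}[0]\,\bm{\Phi}^{\top}(k,0)+\sum_{j=0}^{k-1}\bm{\Phi}(k,j+1)\,\bm{Q}[j]\,\bm{\Phi}^{\top}(k,j+1),
\]
where $\bm{\Phi}(k,j)\triangleq\bm{\Pi}[k-1]\cdots\bm{\Pi}[j]=I_m\otimes(\Pi[k-1]\cdots\Pi[j])$ is the state-transition matrix of the homogeneous part. This reduces the claim to two uniform estimates: (i) a uniform bound $\norm{\bm{Q}[j]}\leq\bar{Q}$ on the forcing term, and (ii) uniform exponential decay $\norm{\bm{\Phi}(k,j)}\leq c\,\rho^{\,k-j}$ for some $c>0$ and $\rho\in(0,1)$. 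Granting both, summing the geometric series produces the constant $\overline{W}=c^{2}\norm{\bm{W}[0]}+c^{2}\bar{Q}/(1-\rho^{2})$ with $\bm{W}[k]\preceq\overline{W}I$. Estimate (i) is a bookkeeping exercise built on Assumptions~\ref{as:R} and~\ref{as:sample_time}: since $\sup\Tc\leq T_{\max}$, the quantities $Q[k]$, $A^{-1}[k]=\exp\!\big(-(t_{k+1}-t_k)A\big)$ and $\Pi[k]$ are uniformly bounded; $R[k]\preceq\bar{R}$ bounds the measurement-noise block; and the gains $K_i[k]$ together with the matrices $G_i[k]$ are bounded once the filter covariance is bounded, where boundedness of $G_i[k]$ is further controlled through its fixed row span $\rs(O_i)$ from Lemma~\ref{lm:span}.

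The hard part is estimate (ii): the uniform exponential stability of the error transition $\Pi[k]=(I-K[k+1]C)A[k]$ of the asynchronous filter. The obstacle is genuinely structural, because at a single instant $k$ only the sensors with $\phi_i[k]=1$ report, so the instantaneous pair $(A[k],C[k])$ is in general unobservable and contraction can be recovered only over time windows. My plan is to invoke the classical Kalman stability result: under uniform complete controllability — immediate from $Q\succ0$, which makes $Q[k]$ uniformly positive definite — and uniform complete observability over windows of bounded length, the filter error dynamics are uniformly exponentially stable, and simultaneously $P[k]$ is uniformly bounded (which retroactively secures the gain and $G_i[k]$ bounds used in (i)). The technical crux is to certify that the windowed observability Gramian is bounded below uniformly in spite of the non-uniform asynchronous sampling, and this is exactly where Assumption~\ref{as:sample_time} enters: the condition $\widetilde{\Tc}\cap\Tc^{*}=\varnothing$ forbids the pathological intervals $e^{\lambda_iT}=e^{\lambda_jT}$ that would merge two modes and degrade observability, while $\sup\Tc\leq T_{\max}$ caps the window length; Assumption~\ref{as:geo_mul} supplies the simple-Jordan structure ensuring that measurements accumulated across samples span every mode. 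Combining this windowed Gramian lower bound with the uniform coefficient bounds yields (ii) and completes the proof.
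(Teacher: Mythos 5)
Your high-level decomposition (lower bound via invertibility of $\bm{\Pi}[k]$, upper bound via unrolling \eqref{eq:defW} and summing a geometric series) is a legitimate strategy, but the load-bearing step (ii) is asserted rather than proved, and the assertion does not follow from the stated assumptions. Assumption~\ref{as:sample_time} guarantees that $\widetilde{\Tc}\cap\Tc^{*}=\varnothing$, i.e., every windowed observability Gramian is \emph{nonsingular}; it does not give the \emph{uniform-in-$k$} lower bound on the Gramian that the classical Kalman stability theorem requires. The sampling patterns range over an infinite family, the distances from $\widetilde{\Tc}$ to the pathological set $\Tc^{*}$ may shrink to zero, and Assumption~\ref{as:time_in_order} even permits sampling instants to cluster ($t_i \leq t_{i+1}$ with no lower bound on gaps), so the infimum of the windowed Gramians over $k$ can vanish: nonsingular for every window does not imply uniformly nonsingular. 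You flag this as ``the technical crux'' and then do not supply it. A second gap sits in your estimate (i): you claim boundedness of $G_i[k]$ is ``controlled through its fixed row span,'' but $\rs(G_i[k])=\rs(O_i)$ constrains direction, not magnitude; unrolling \eqref{eq:defG} pits the decay of the $\Pi$-products against the growth of factors $\exp(-(t_k-t_j)A)$, and you never make the rate comparison that would settle this competition. (Minor additionally: your invertibility argument $I-K[k+1]C=P[k+1]P_{\m}^{-1}[k+1]$ needs $P[k+1]\succ 0$, which can fail because Assumption~\ref{as:R} allows singular $R(t)$, i.e., perfect measurements.)

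The paper's proof avoids every one of these difficulties by a different, purely algebraic route: by Lemma~\ref{lm:sumW}, the $i$-th block-row sum of $\bm{W}[k]$ equals $P[k]G_i^{\top}[k]$ for all $k$; summing over $i$ and invoking $\sum_{i=1}^{m}G_i[k]=I$ (Lemma~\ref{lm:sumG}) collapses the sum of \emph{all} $n\times n$ blocks of $\bm{W}[k]$ to exactly $P[k]$, the asynchronous Kalman covariance. Uniform boundedness of $P[k]$ is then imported wholesale from Proposition~\ref{pp:central_kalman} (a cited stability result for the asynchronous sampled-data KF under Assumptions~\ref{as:R} and~\ref{as:sample_time}), and positive semidefiniteness transfers the bound to the diagonal blocks and hence to $\bm{W}[k]$. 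In other words, the hard uniformity you propose to re-derive via windowed Gramian analysis is precisely what the cited proposition encapsulates; the paper never needs exponential stability of the $\Pi[k]$ products, never bounds $G_i[k]$, and never unrolls the Stein recursion. If you wish to keep your direct route, you must either strengthen the assumptions (e.g., a uniform margin from $\Tc^{*}$ and a positive lower bound on sampling gaps) or cite a result, as the paper does, in which the uniform covariance bounds have already been established.
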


\begin{proof}
    See Appendix~\ref{lmpf:cond_num}.
\end{proof}

Lemmas~\ref{lm:span} and \ref{lm:cond_num} show the non-singularity of matrices $V_i[k]$ and $\bm{W}[k]$, enabling us to
%In light of Lemmas~\ref{lm:span} and \ref{lm:cond_num}, we 
propose a state estimation fusion 
% the following least square problem 
that provides a state estimate $x_{\ls}[k]$ by solving the following least squares problem:
\begin{subequations}\label{pb:least_square}
	\begin{align}
		\underset{{x_{\ls}}[k], \theta[k]}{\text{minimize}}&\quad \frac{1}{2} \theta[k]^{\top} \bm{\tilde W}^{-1}[k] \theta[k]  \label{pb:least_square_obj} \\
		\text { subject to}&\quad
		\bm{V}[k] \bm{\zeta} [k]=
		\bm{H} x_{\ls}[k] + \theta[k]  \label{pb:least_square_cons}
	\end{align}
\end{subequations}
where $\bm{\zeta}[k] \triangleq \big[ \zeta_1^\top[k],\,\zeta_2^\top[k],\dots,\,\zeta_m^\top[k] \big]^\top, ~ \bm{H} \triangleq \big[ H_1^\top,\,H_2^\top,\dots,\,H_m^\top \big]^\top, ~ \tilde{\bm{W}}[k] \triangleq \bm{V}[k] \bm{W}[k] \bm{V}^\top[k],~ \bm{V}[k] \triangleq \text{blkdiag}(V_1[k],\,V_2[k],\dots,\,V_m[k])$, and blkdiag$(\cdot)$ stands for a block diagonal matrix. Note that $\zeta_i[k]$ is defined in \eqref{eq:def_zeta} while matrices $V_i[k]$ and $H_i$ are defined in Lemma \ref{lm:span}.
The following theorem shows that the minimizer $x_{\ls}[k]$ of \eqref{pb:least_square} can exactly recover the Kalman state estimate $\hat{x}[k]$ based on local estimators $\zeta_{i}[k]$. 
% \begin{align*}
% 	\bm{\zeta}[k] &\triangleq 
%  % \begin{bmatrix}
% 	% 	\zeta_1[k] \\ \vdots \\ \zeta_m[k]
% 	% \end{bmatrix}
%  % \in \Rb^{mn\times 1},\
% 	\bm{H} \triangleq \begin{bmatrix}
% 		H_1 \\ \vdots \\ H_m
% 	\end{bmatrix}\in\Rb^{mn\times n}, \\
%     \tilde{\bm{W}}[k] &\triangleq \bm{V}[k] \bm{W}[k] \bm{V}^\top[k], \\
%     \bm{V}[k] &\triangleq \text{blkdiag}(V_1[k],\,V_2[k],\ldots,\,V_m[k]). 
% \end{align*}

%\begin{subequations}\label{pb:least_square}
% 	\begin{align}
% 		\underset{{x_{\ls}}[k], \theta[k]}{\text{minimize}}&\quad \frac{1}{2} \theta[k]^{\top} \bm{W}^{-1}[k] \theta[k]  \label{pb:least_square_obj} \\
% 		\text { subject to}&\quad
% 		\bm{\zeta} [k]=
% 		\bm{G}[k] x_{\ls}[k]+\theta[k] .  
% 	\end{align}
% \end{subequations}
% where 
% \begin{align*}
% 	\zeta[k]=\begin{bmatrix}
% 		\zeta_1[k] \\ \vdots \\ \zeta_m[k]
% 	\end{bmatrix}\in \Rb^{mn\times 1},\
% 	\bm{G}[k]=\begin{bmatrix}
% 		G_1[k] \\ \vdots \\ G_m[k]
% 	\end{bmatrix}\in\Rb^{mn\times n}.
% \end{align*}
%We have t
%The proof is provided in Appendix ?????????????????????????????????????????????????????????????????????????????????????????????????????????????????????????????????????????????\ref{ap:th_least_square}.
\begin{theorem}\label{th:least_square}
    Suppose that $x_{\ls}[k]$ is the solution to \eqref{pb:least_square} and $\bm{W}[0]$ is
    a strictly positive definite Hermitian matrix. Then, the solution $x_{\ls}[k]$ is identical to the asynchronous sampled-data Kalman state estimate $\hat{x}[k]$ defined in \eqref{eq:def_xt_asy}, i.e., $x_{\ls}[k]=\hat{x}[k]$.
	% \begin{align*}
	% 	x_{\ls}[k]=\hat{x}[k].
	% \end{align*} 
	% Moreover, the solved residual θ[k]\theta[k] and the actual residual ϵ[k]\epsilon[k] satisfy the following:
	% \begin{align*}
	% 	\theta[k]-\epsilon[k]=G[k]e[k],
	% \end{align*}
	% where e[k]≜e[k]\triangleq \hat{x}[k]-x(t_k) is the KF error at time t_kt_k.
    \QET
\end{theorem}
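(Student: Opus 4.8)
The plan is to reduce the equality–constrained quadratic program \eqref{pb:least_square} to its normal equations and then verify that the Kalman estimate $\hat{x}[k]=\sum_{i=1}^m\zeta_i[k]$ solves them. Since the objective \eqref{pb:least_square_obj} is convex quadratic and the constraint \eqref{pb:least_square_cons} is affine, I would eliminate $\theta[k]$ through \eqref{pb:least_square_cons}, turning the problem into the unconstrained weighted least square $\min_{x}\,(\bm{V}[k]\bm{\zeta}[k]-\bm{H}x)^\top\tilde{\bm{W}}^{-1}[k](\bm{V}[k]\bm{\zeta}[k]-\bm{H}x)$. Its stationarity condition gives the closed form $x_{\ls}[k]=(\bm{H}^\top\tilde{\bm{W}}^{-1}[k]\bm{H})^{-1}\bm{H}^\top\tilde{\bm{W}}^{-1}[k]\bm{V}[k]\bm{\zeta}[k]$, whose well-posedness is settled below.

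Next I would strip away the blocks $V_i[k]$. Writing $\bm{G}[k]\triangleq[G_1^\top[k],\dots,G_m^\top[k]]^\top$ and using $V_i[k]G_i[k]=H_i$ from Lemma~\ref{lm:span}, one has $\bm{H}=\bm{V}[k]\bm{G}[k]$; together with $\tilde{\bm{W}}[k]=\bm{V}[k]\bm{W}[k]\bm{V}^\top[k]$ and the invertibility of $\bm{V}[k]$ (again Lemma~\ref{lm:span}), every factor $\bm{V}[k]$ cancels and the minimizer collapses to $x_{\ls}[k]=(\bm{G}^\top[k]\bm{W}^{-1}[k]\bm{G}[k])^{-1}\bm{G}^\top[k]\bm{W}^{-1}[k]\bm{\zeta}[k]$. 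Lemma~\ref{lm:sumG} yields $(\mathbf{1}_m^\top\otimes I)\bm{G}[k]=\sum_i G_i[k]=I$, so $\bm{G}[k]$ has full column rank, and Lemma~\ref{lm:cond_num} gives $\bm{W}[k]\succ0$; hence $\bm{G}^\top[k]\bm{W}^{-1}[k]\bm{G}[k]\succ0$ is invertible, the minimizer is unique, and it suffices to show that $\hat{x}[k]=(\mathbf{1}_m^\top\otimes I)\bm{\zeta}[k]$ satisfies the normal equation
\[ \bm{G}^\top[k]\,\bm{W}^{-1}[k]\big(\bm{\zeta}[k]-\bm{G}[k]\hat{x}[k]\big)=0. \qquad (\star) \]

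To establish $(\star)$ I would argue by induction on $k$, exploiting that the weighting sequence \eqref{eq:defW} is propagated by exactly the same recursion as the residual covariance $\cov(\epsilon[k])$ in \eqref{def_cov_epsilon}. Introducing the stacked residual $r[k]\triangleq\bm{\zeta}[k]-\bm{G}[k]\hat{x}[k]$ and using $\hat{x}[k]-x[k]=\sum_i\epsilon_i[k]$ (a consequence of $\sum_iG_i[k]=I$), one gets the clean form $r[k]=[I-\bm{G}[k](\mathbf{1}_m^\top\otimes I)]\epsilon[k]$. The base case $k=0$ holds because $\bm{\zeta}[0]=0$ forces $r[0]=0$. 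For the inductive step I would substitute the recursions \eqref{eq:def_zeta} for $\bm{\zeta}$, \eqref{eq:defG} for $\bm{G}$, \eqref{eq:def_xt_asy}--\eqref{eq:sum_zeta=xhat} for $\hat{x}$, and \eqref{eq:defW} for $\bm{W}$, and show that the hypothesis $\bm{G}^\top[k-1]\bm{W}^{-1}[k-1]r[k-1]=0$ propagates through the common factor $\Pi[k-1]$.

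The hard part will be the cross-term cancellation in this inductive step. After substitution, $(\star)$ splits into a term carried over from time $k-1$ (absorbed by the hypothesis and the shared factor $\bm{\Pi}[k-1]$) and several new terms generated by the measurement update and driven by the innovation $y[k]-C\hat{x}_{\m}[k]$. Making these vanish is precisely where the defining property of the Kalman gain \eqref{eq:def_Kk_asy}--\eqref{eq:def_Pt_asy} must enter: the gain is constructed so that the updated residual is $\bm{W}[k]$–orthogonal to the range of $\bm{G}[k]$, which is the matrix-level statement that the fresh innovation contributes nothing to $\bm{G}^\top[k]\bm{W}^{-1}[k]r[k]$. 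Equivalently, and as a consistency check, one may bypass the recursion entirely and invoke the minimum-variance optimality of the Kalman filter: both $\hat{x}[k]$ and $x_{\ls}[k]$ are linear and unbiased estimators of $x[k]$ built from $\bm{\zeta}[k]$ (indeed $(\mathbf{1}_m^\top\otimes I)\bm{G}=I=(\bm{G}^\top\bm{W}^{-1}\bm{G})^{-1}\bm{G}^\top\bm{W}^{-1}\bm{G}$), and uniqueness of the best linear unbiased estimator under the covariance $\bm{W}[k]$ forces them to coincide. I would present the algebraic induction as the primary route, keeping the argument self-contained within the recursions already established, and use the statistical interpretation only to corroborate it.
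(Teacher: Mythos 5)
Your first half coincides with the paper's proof: eliminating $\theta[k]$ (equivalently, multiplying the constraint by $\bm{V}^{-1}[k]$ and using $\bm{H}=\bm{V}[k]\bm{G}[k]$ from Lemma~\ref{lm:span}) reduces \eqref{pb:least_square} to the unconstrained weighted least square with closed form $x_{\ls}[k]=(\bm{G}^\top[k]\bm{W}^{-1}[k]\bm{G}[k])^{-1}\bm{G}^\top[k]\bm{W}^{-1}[k]\bm{\zeta}[k]$, well-posed by Lemmas~\ref{lm:sumG} and \ref{lm:cond_num}. The gap is in the second half. Your induction on the normal equation $\bm{G}^\top[k]\bm{W}^{-1}[k]\,r[k]=0$ leaves the crux — the cross-term cancellation at the update step — as an assertion ("this is precisely where the Kalman gain enters"), and in fact that assertion is \emph{false for an arbitrary Hermitian positive definite} $\bm{W}[0]$. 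Since $r[k]=[I-\bm{G}[k](\mathbf{1}_m^\top\otimes I)]\epsilon[k]$ and the realization $\epsilon[k]$ sweeps a large subspace as the noises vary, the normal equation can only hold almost surely if the matrix identity $\sum_{j=1}^m\bm{W}_{ij}[k]=P[k]G_i^\top[k]$ holds (Lemma~\ref{lm:sumW}, equivalently \eqref{eq:FW=PG}); this identity propagates through \eqref{eq:defW} via the Kalman-gain relation \eqref{eq:KR}, but it must be seeded by the compatibility condition $\sum_{j=1}^m\bm{W}_{ij}[0]=\Sigma\, G_i^\top[0]$ on the initialization. Perturbing a compatible $\bm{W}[0]$ by any symmetric positive definite increment that breaks the block-row-sum structure destroys the identity at $k=1$, so the "existence of a strictly positive definite Hermitian $\bm{W}[0]$" alone, which is all your induction uses, does not yield the theorem. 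The paper's proof makes precisely this extra condition explicit, and the nontrivial existence of a Hermitian, positive definite $\bm{W}[0]$ satisfying it is settled by the explicit Gershgorin-based construction in the appendix — an ingredient entirely missing from your proposal.

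Your fallback "BLUE/Gauss--Markov" argument does not rescue the step, for two reasons. First, $\bm{W}[k]$ is \emph{not} the covariance $\cov(\epsilon[k])$ of \eqref{def_cov_epsilon}: the true initial covariance $\cov(\epsilon[0])=\bm{G}[0]\Sigma\bm{G}^\top[0]$ has rank at most $n<mn$ and is therefore singular, which is exactly why the paper initializes $\bm{W}[0]$ differently (positive definite, with the row-sum compatibility); the two sequences share the recursion \eqref{eq:defW} but differ for all $k$. Second, even granting a covariance interpretation, unbiasedness of both estimators (your observation $(\mathbf{1}_m^\top\otimes I)\bm{G}[k]=I$) does not force two linear unbiased estimators to coincide — you would still need to show $\hat{x}[k]=\sum_i\zeta_i[k]$ attains the Gauss--Markov optimum for the model $\bm{\zeta}[k]=\bm{G}[k]x+\epsilon[k]$ with weight $\bm{W}[k]$, which is again equivalent to the orthogonality you left unproved. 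The correct repair is to replace the realization-wise induction on $r[k]$ by the induction on the deterministic matrix identity of Lemma~\ref{lm:sumW}, with the compatible $\bm{W}[0]$ constructed explicitly; from \eqref{eq:FW=PG} one then gets $\bm{G}^\top[k]\bm{W}^{-1}[k]=P^{-1}[k]\begin{bmatrix}I&\cdots&I\end{bmatrix}$ and $x_{\ls}[k]=\begin{bmatrix}I&\cdots&I\end{bmatrix}\bm{\zeta}[k]=\hat{x}[k]$ in one line, which is the paper's route.
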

\begin{proof}
    See Appendix~\ref{ap:th_least_square}.
\end{proof}

\begin{remark}
    Instead of directly computing the state estimate in \eqref{eq:sum_zeta=xhat}, we solve the least squares problem \eqref{pb:least_square} to obtain the state estimate, which yields the same result, as stated in Theorem~\ref{th:least_square}. Although the least squares \eqref{pb:least_square} is more complex, it has a decentralized form and its improved version, introduced in the next section, is secure against spatio-temporal attacks.
    \QET
\end{remark}

\section{Secure State Estimator}
\label{sec:secure_algorithm}
In this section, we propose a secure state estimation algorithm against $p$-sparse spatio-temporal false data attacks introduced in Definition~\ref{def:attack}. 
%The key element of the proposed algorithm is the secure fusion that provides a resilient state estimation against the attacks. 
Prior to the algorithm, we present an analysis of spatio-temporal attacks.
\subsection{Attack analysis}
\label{sec:attack_analysis}
In this subsection, we carry out an analysis of spatio-temporal attacks to show how malicious activities impact state estimates.

\textbf{False-data injection}: this attack {retains} correct time-stamps, but manipulates measurements. More specifically, at sampling-time $k$, one can formulate the false-data injection as follows:
\begin{align}
    y_i^a[k] \triangleq y_i[k] + a_i[k], ~\text{if}~ \phi_i[k] = 1 ~\text{and}~i \in \Cc,
\end{align}
where $y_i^a[k]$ is the attacked measurement, $y_i[k]$ is the correct measurement, and $a_i[k]$ is the attack signal. The correct time-stamp guarantees the correctness of $\Pi[k-1]$ in \eqref{eq:def_zeta}. As a consequence, the impact of the false data injected into the measurement can be described in the local estimator as follows:
\begin{align}
    \zeta_i[k] \triangleq \zeta_i^o[k] + \zeta_i^f[k], \label{attack_local_est}
\end{align}
where $\zeta_i^o[k]$ is the oracle local estimator computed by \eqref{eq:def_zeta} and $\zeta_i^f[k] \triangleq \Ib_{\Zb_{>0}}(k) \sum_{\ell=0}^{k-1} \big( \prod_{p = 0}^{k-1-\ell} \Pi[k-1-p] \big) K_i[\ell]a_i[\ell] + K_i[k] a_i[k]$ is the malicious impact. Denote $\bm{\zeta}^o[k] \triangleq \big[ \zeta_1^{o\top}[k],\ldots,\zeta_m^{o\top}[k] \big]^\top$.

\textbf{Time-stamp manipulation}: this attack {retains} the correct measurement, but manipulates the time-stamp from the correct time-stamp $t$ to the attacked time-stamp $t^a~(t^a \neq t)$. Although the measurement $y_i(t)$ remains unchanged, the time-stamp manipulation consequently forces the estimator to treat $y_i(t)$ at the attacked time-stamp $t^a$. As a consequence, there is a mismatch of the measurement at time-stamp $t^a$, which is $y_i(t) - y_i(t^a)$. One can formulate the measurement at time $t^a$ received by the estimator as follows: $y_i(t) = y_i(t^a) + \big( y_i(t) - y_i(t^a) \big), ~\text{if}~i \in \Cc$.
% \begin{align}
%     y_i(t^a) = y_i(t^a) + \big( y_i(t) - y_i(t^a) \big), ~\text{if}~i \in \Cc.  
% \end{align}
%This formulation enables us to convert the time-stamp manipulation at time $t$ into the false-data manipulation at time $t^a$. 
{Although all the discretized matrices may change due to the manipulated time-stamp, this change is common to all local estimators, whereas the malicious effect of time-stamp manipulation is converted into the measurement mismatch, which can be modeled as a false-data injection attack.} Therefore, the malicious impact can be described as \eqref{attack_local_est}.

% It is worth noting that if the attacked time-stamp is greater than the maximum delay of a measurement sample dd (see Section~????????????????????????????????????\ref{sec:problem}), the attacked measurement triple is discarded by the buffer. This attack scenario coincides with denial-of-service attacks, which are analyzed in the following.

\textbf{Denial-of-service}: this attack strategy, motivated by jamming attacks such as \cite{li2015jamming}, can be viewed as the time-stamp manipulation where the attacked time-stamp $t^a$ is set at infinity.
%interrupts the transmission from corrupted sensors to the estimation center. This paper mainly considers the $p$-sparse attack (see Definition~\ref{def:attack}) while the system is $2p$-observable. This observability redundancy enables the estimation center to be resilient to the denial-of-service attack.

\textbf{False-data generation}: this attack strategy can be described as the combination of false-data injection and time-stamp manipulation. 
%To be more imaginable, this attack strategy copies a measurement triple of a corrupted sensor but manipulates its measurement value and its time-stamp. 
As a result, the malicious impact of the false-data generation can also be described as \eqref{attack_local_est}.

In summary, the malicious impact of spatio-temporal attacks can be formulated as the false data injected into the local estimators of the corrupted sensors in \eqref{attack_local_est}. This formulation enables us to design the secure fusion in the following.

\vspace{-0.0cm}
\subsection{Secure fusion}
In light of the previous analysis, the malicious impact of the attacks can be isolated at separate local estimators that correspond to corrupted sensors. This observation enables us to improve the least squares problem \eqref{pb:least_square} in the following secure fusion where its minimizer $\check x[k]$ is a secure state estimate:
\begin{subequations}\label{pb:least_square_secure}
	\begin{align}
		\underset{{\check{x}}[k], \, \mu[k], \, \vartheta[k]}{\text{minimize}}&\quad \frac{1}{2} \mu[k]^{\top} \, \tilde{\bm{W}}^{-1}[k] \, \mu[k]  + \gamma \norm{\vartheta[k]}_1
		\\
		\text { subject to}&\quad
		\bm{V}[k]\bm{\zeta}[k]=
		\bm{H} \check{x}[k]+\mu[k] + \vartheta[k].  
	\end{align}
\end{subequations}

In the remainder of this section, 
%we show that the minimizer $\check x[k]$ obtained by solving \eqref{pb:least_square_secure} is a secure state estimate. In the following, 
we analyze the minimizer $\check x[k]$ without and with spatio-temporal attacks. The analysis will take the solution to \eqref{pb:least_square} in the absence of attacks as ground truth, i.e., the solution $(x_\ls[k],\,\theta[k])$ obtained by solving \eqref{pb:least_square} with $\bm{\zeta}[k] = \bm{\zeta}^o[k]$.

Recall that for the least squares problem \eqref{pb:least_square}, its minimizer $\theta[k]$ (see Appendix~\ref{ap:th_least_square}) can be computed as: $\theta[k] = \big[ I - \bm{G}[k] \big( \textbf{1}_m^\top \otimes I \big) \big] \bm{\zeta}[k]$,
% \begin{align}
%     \theta[k] = \big[ I - \bm{G}[k] \big( \textbf{1}_m^\top \otimes I \big) \big] \bm{\zeta}[k], \notag 
% \end{align}
which enables us to evaluate the solution to the problem \eqref{pb:least_square_secure} in the absence of the attacks in the next theorem.
\begin{theorem} \label{th:LS_noattack}
	Consider the least squares problems \eqref{pb:least_square} and \eqref{pb:least_square_secure} with a given $\gamma > 0$, let $(x_{\ls}[k]$, $\theta[k])$ be the minimizer for the problem \eqref{pb:least_square} and $(\check{x}[k]$, $\mu[k]$, $\vartheta[k])$ be the minimizer for the problem \eqref{pb:least_square_secure}. 
	In the absence of attacks, if the following condition holds
	\begin{align}
        \gamma >
		\|  \tilde{\bm{W}}^{-1}[k] \theta[k] \|_{\ift},
		\label{th:condition_wt_noattack}
	\end{align}
	then 
	${\check{x}}[k]=x_{\ls}[k]$, $\mu[k] = \theta[k]$, and $\vartheta[k] = 0$. \QET
\end{theorem}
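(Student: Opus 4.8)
The plan is to establish optimality and uniqueness simultaneously by using the equality constraint to eliminate $\mu[k]$ and then rewriting the objective of \eqref{pb:least_square_secure} as the optimal value of \eqref{pb:least_square} plus two manifestly nonnegative terms. First I would record the three structural facts I need. (i) The orthogonality relation $\bm{H}^\top \tilde{\bm{W}}^{-1}[k]\,\theta[k]=0$, which is exactly the stationarity condition of \eqref{pb:least_square} from Theorem~\ref{th:least_square}: eliminating $\theta[k]$ via the constraint and differentiating in $x_{\ls}[k]$ gives $\bm{H}^\top\tilde{\bm{W}}^{-1}[k]\theta[k]=0$. (ii) Positive definiteness $\tilde{\bm{W}}^{-1}[k]\succ 0$, inherited from $\bm{W}[k]\succ 0$ (Lemma~\ref{lm:cond_num}) and invertibility of $\bm{V}[k]$ (Lemma~\ref{lm:span}). (iii) Full column rank of $\bm{H}$, because its $j$-th column is supported only on the rows indexed by the nonempty set $\Ec_j$ and these supports are disjoint across $j$, so the columns are linearly independent.

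Next, for an arbitrary feasible triple $(\check{x}[k],\mu[k],\vartheta[k])$ I would use the constraint of \eqref{pb:least_square_secure} together with $\bm{H}x_{\ls}[k]+\theta[k]=\bm{V}[k]\bm{\zeta}[k]$ to write $\mu[k]=\theta[k]+u$, where $u\triangleq \bm{H}\big(x_{\ls}[k]-\check{x}[k]\big)-\vartheta[k]$. Expanding the quadratic term and invoking the orthogonality relation to cancel the cross term coming from $\bm{H}(x_{\ls}[k]-\check{x}[k])$, the objective becomes
\[
\tfrac12\theta[k]^\top\tilde{\bm{W}}^{-1}[k]\theta[k]-\big(\tilde{\bm{W}}^{-1}[k]\theta[k]\big)^\top\vartheta[k]+\tfrac12 u^\top\tilde{\bm{W}}^{-1}[k]u+\gamma\|\vartheta[k]\|_1 .
\]
Bounding the linear term by H\"older's inequality, $-(\tilde{\bm{W}}^{-1}[k]\theta[k])^\top\vartheta[k]\ge -\|\tilde{\bm{W}}^{-1}[k]\theta[k]\|_\infty\,\|\vartheta[k]\|_1$, the objective is lower bounded by
\[
J^\star+\big(\gamma-\|\tilde{\bm{W}}^{-1}[k]\theta[k]\|_\infty\big)\|\vartheta[k]\|_1+\tfrac12 u^\top\tilde{\bm{W}}^{-1}[k]u,
\]
where $J^\star\triangleq \tfrac12\theta[k]^\top\tilde{\bm{W}}^{-1}[k]\theta[k]$ is the optimal value of \eqref{pb:least_square}, attained by the feasible candidate $(\check x[k],\mu[k],\vartheta[k])=(x_{\ls}[k],\theta[k],0)$ (for which $u=0$).

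I would then read off the conclusion from the equality case. Both added terms are nonnegative, so $J^\star$ is the minimum and the candidate is optimal. The hypothesis $\gamma>\|\tilde{\bm{W}}^{-1}[k]\theta[k]\|_\infty$ makes the first coefficient strictly positive, forcing $\vartheta[k]=0$ at any minimizer; then $\tilde{\bm{W}}^{-1}[k]\succ 0$ forces $u=0$, which with $\vartheta[k]=0$ gives $\bm{H}(x_{\ls}[k]-\check{x}[k])=0$, and full column rank of $\bm{H}$ yields $\check{x}[k]=x_{\ls}[k]$ and hence $\mu[k]=\theta[k]$.

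I expect the main obstacle to be the uniqueness part rather than mere optimality: a KKT/subgradient certificate (with dual variable $-\tilde{\bm{W}}^{-1}[k]\theta[k]$) only certifies that the candidate is \emph{a} minimizer, so the \emph{strict} inequality in \eqref{th:condition_wt_noattack} must be used decisively. The substitution-plus-H\"older identity above is precisely what converts strictness into a strictly positive penalty coefficient and thereby pins down $\vartheta[k]=0$ uniquely; accordingly, the two supporting facts I would justify most carefully are the orthogonality $\bm{H}^\top\tilde{\bm{W}}^{-1}[k]\theta[k]=0$ and the full column rank of $\bm{H}$.
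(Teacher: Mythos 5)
Your proof is correct and takes essentially the same route as the paper's: both reduce to the deviation variables $\mu[k]-\theta[k]$ and $\check{x}[k]-x_{\ls}[k]$ via the two constraints, cancel the cross term with the KKT orthogonality $\bm{H}^\top\tilde{\bm{W}}^{-1}[k]\theta[k]=0$, use the strict condition \eqref{th:condition_wt_noattack} (through H\"older's inequality) to force $\vartheta[k]=0$, and then invoke positive definiteness of $\tilde{\bm{W}}^{-1}[k]$ and the empty null space of $\bm{H}$ to conclude $\mu[k]=\theta[k]$ and $\check{x}[k]=x_{\ls}[k]$. Your explicit verification that $\bm{H}$ has full column rank (disjoint column supports over the nonempty sets $\Ec_j$) supplies a detail the paper asserts without proof, but the argument is otherwise the same.
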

\begin{proof}
    See Appendix~\ref{thpf:LS_noattack}.
\end{proof}
Let us make use of the following definition of a function that will help us in evaluating the minimizer $\check x[k]$ of \eqref{pb:least_square_secure} against spatio-temporal attacks in the subsequent theorem.
\begin{definition}
    \label{def:h_a}
    Given an $n$-dimensional vector $x \in \Rb^n$ and a positive integer $a$, we define a function $h_a: \Rb^n \rightarrow \Rb$ such that $h_a(x)$ takes the $a$-th largest value of the vector $x$.
\end{definition}
\begin{theorem}[Secure fusion] \label{th:secure_fusion}
    Consider the least squares problems \eqref{pb:least_square} and \eqref{pb:least_square_secure} with a given $\gamma > 0$, let $(x_{\ls}[k], \, \theta[k])$ be the minimizer for the problem \eqref{pb:least_square} in the absence of attacks and $(\check x[k], \, \mu[k], \, \vartheta[k])$ be the minimizer for the problem \eqref{pb:least_square_secure} in the presence of attacks. 
	In the presence of attacks, the difference between $x_{\ls}[k]$ and $\check x[k]$ has the following upper bound for all $j \in J$:
    \begin{align}
     \big \lvert \big[ \check x[k] \big]_j \!-\! \big[ x_{\ls}[k] \big]_j \big  \rvert \!\leq \!  \max 
    \big \{ \big \lvert h_c \big( \eta^j[k] \big) \big \rvert , \big \lvert \!-\! h_c \big( -\eta^j[k] \big) \big \rvert \big \}, 
    %\notag \\
    %& \hspace{3cm} \forall \, j \in J, 
    \label{th:estimation_upperbound}
    \end{align}
    where the function $h_c(\cdot)$ is defined in Definition~\ref{def:h_a}, $\eta^j[k]$ is a $\lvert \Ec_j \setminus \Cc \rvert$-dimensional vector with its $i$-th element $[\eta^j[k]]_i \triangleq [\theta_i[k]]_j + \gamma \, \bm{e}_{n(i-1)+j}^\top \tilde{\bm{W}}[k]  \check \vartheta[k]~(\forall \, i \in \Ec_j \setminus \Cc)$, $\check \vartheta[k] \in \partial  \lVert \bm{V}[k] \bm{\zeta}^f[k] - \vartheta[k] \big \rVert_1$, and $c \triangleq \big \lceil \frac{\lvert \Ec_j \setminus \Cc \rvert - \lvert \Ec_j \bigcap \Cc \rvert}{2} \big \rceil$. \QET
    % \begin{align}
    % % \vspace{-0.2cm}
    %     \check \vartheta[k] \in \partial  \lVert \bm{V}[k] \bm{\zeta}^f[k] - \vartheta[k] \big \rVert_1,\ 
    %     \textstyle
    %     c \triangleq \big \lceil \frac{\lvert \Ec_j \setminus \Cc \rvert - \lvert \Ec_j \bigcap \Cc \rvert}{2} \big \rceil.  \tag*{\QET}
    % \end{align}
\end{theorem}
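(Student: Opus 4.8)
The plan is to route the argument through the KKT systems of the two fusion problems, to use the diagonal $0$–$1$ observability structure from Lemma~\ref{lm:span} to reduce everything to one scalar state component at a time, and then to convert the resulting family of per-sensor identities into the two-sided order-statistic bound \eqref{th:estimation_upperbound}.

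First I would record the optimality conditions of the secure problem \eqref{pb:least_square_secure}. Attaching a multiplier $\lambda[k]$ to the equality constraint, stationarity in $\check x[k]$ and $\mu[k]$ gives $\bm{H}^\top\lambda[k]=0$ and $\mu[k]=\tilde{\bm{W}}[k]\lambda[k]$, while the nonsmooth penalty yields $\lambda[k]\in\gamma\,\partial\|\vartheta[k]\|_1$. Writing the attacked local estimators as $\bm{\zeta}[k]=\bm{\zeta}^o[k]+\bm{\zeta}^f[k]$ and transferring the attack term into the constraint (equivalently, shifting $\vartheta[k]$ by $\bm{V}[k]\bm{\zeta}^f[k]$) lets me identify the multiplier with a subgradient $\check\vartheta[k]\in\partial\|\bm{V}[k]\bm{\zeta}^f[k]-\vartheta[k]\|_1$ for which $\gamma\,\tilde{\bm{W}}[k]\check\vartheta[k]=-\mu[k]$. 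In parallel I keep the attack-free problem \eqref{pb:least_square} evaluated at $\bm{\zeta}^o[k]$, whose minimizer satisfies the constraint $\bm{V}[k]\bm{\zeta}^o[k]=\bm{H}x_{\ls}[k]+\theta[k]$ and serves as the ground truth.

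Next I would decompose coordinatewise. Since $H_i$ is diagonal with $[H_i]_{jj}=\Ib_{\Ec_j}(i)$ by Lemma~\ref{lm:span}, the $j$-th row of the $i$-th block of each constraint contains $[\check x[k]]_j$ (resp.\ $[x_{\ls}[k]]_j$) exactly when $i\in\Ec_j$. Extracting that row from the attacked and from the oracle constraint and subtracting gives, for every $i\in\Ec_j$, the identity $[\check x[k]]_j-[x_{\ls}[k]]_j=[\theta_i[k]]_j+[V_i[k]\zeta_i^f[k]]_j-[\mu_i[k]]_j-[\vartheta_i[k]]_j$. The attack analysis \eqref{attack_local_est} gives $\zeta_i^f[k]=0$ for $i\notin\Cc$, so restricting to $i\in\Ec_j\setminus\Cc$ deletes the attack term; substituting $\mu[k]=\tilde{\bm{W}}[k]\lambda[k]$ and the identification $\gamma\,\tilde{\bm{W}}[k]\check\vartheta[k]=-\mu[k]$ then rewrites this as $\delta_j=[\eta^j[k]]_i-[\vartheta_i[k]]_j$ for all $i\in\Ec_j\setminus\Cc$, where $\delta_j\triangleq[\check x[k]]_j-[x_{\ls}[k]]_j$ and $[\eta^j[k]]_i=[\theta_i[k]]_j+\gamma\,\bm{e}_{n(i-1)+j}^\top\tilde{\bm{W}}[k]\check\vartheta[k]$.

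The decisive and most delicate step is the passage from these $|\Ec_j\setminus\Cc|$ scalar identities to \eqref{th:estimation_upperbound}. Reading them as $[\eta^j[k]]_i=\delta_j+[\vartheta_i[k]]_j$, each index $i$ with $[\vartheta_i[k]]_j=0$ reports the true error exactly, whereas the indices with $[\vartheta_i[k]]_j\neq0$ are outliers lying on either side of $\delta_j$. The crux is to show, from the $\ell_1$ optimality (complementary slackness of $\lambda[k]\in\gamma\,\partial\|\vartheta[k]\|_1$) together with the $p$-sparsity of $\Cc$, that at most $|\Ec_j\cap\Cc|$ of the components $[\vartheta_i[k]]_j$, $i\in\Ec_j\setminus\Cc$, can be nonzero --- that is, the attack can corrupt no more uncorrupted reports than the number of corrupted sensors that observe state $j$. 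Granting this, sorting $\eta^j[k]$ places $\delta_j$ between its $c$-th smallest and $c$-th largest entries, with $c=\lceil(|\Ec_j\setminus\Cc|-|\Ec_j\cap\Cc|)/2\rceil$ chosen so that at most $c-1$ outliers can sit on either side; taking absolute values then yields $|\delta_j|\le\max\{|h_c(\eta^j[k])|,\,|{-h_c(-\eta^j[k])}|\}$. I expect the bound on the number of active $[\vartheta_i[k]]_j$ to be the main obstacle, since the dense coupling $\tilde{\bm{W}}[k]$ can in principle spread the activation across blocks; it is also here that Assumption~\ref{as:2p} enters, guaranteeing $|\Ec_j|\ge 2p+1$ and hence $|\Ec_j\setminus\Cc|>|\Ec_j\cap\Cc|$, so that $c\ge 1$ and the order statistics of Definition~\ref{def:h_a} are well defined.
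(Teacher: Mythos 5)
Your first two steps are sound and coincide with the paper's own reduction: the multiplier identification $\gamma\,\tilde{\bm{W}}[k]\check\vartheta[k]=-\mu[k]$ is exactly the stationarity condition \eqref{gradient_Lv} (since $\hat\vartheta^\star[k]-\bm{H}\alpha^\star[k]+\theta[k]=\beta[k]+\theta[k]=\mu[k]$), and your per-sensor identities $\delta_j=[\eta^j[k]]_i-[\vartheta_i[k]]_j$ for $i\in\Ec_j\setminus\Cc$, with $\delta_j\triangleq[\check x[k]]_j-[x_{\ls}[k]]_j$, are correct. The gap is precisely the step you flag as the crux: the claim that at most $\lvert\Ec_j\cap\Cc\rvert$ of the clean components $[\vartheta_i[k]]_j$, $i\in\Ec_j\setminus\Cc$, can be nonzero. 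This is not obtainable from complementary slackness --- on the zero components of $\vartheta[k]$ the condition $\lambda[k]\in\gamma\,\partial\lVert\vartheta[k]\rVert_1$ only yields $\lvert[\lambda[k]]_\ell\rvert\leq\gamma$, which carries no cardinality information --- and it is in fact false in general. Your own identity shows why: $[\vartheta_i[k]]_j=[\eta^j[k]]_i-\delta_j$, so a clean component of $\vartheta[k]$ vanishes only when its breakpoint $[\eta^j[k]]_i$ equals $\delta_j$ \emph{exactly}; since $\delta_j$ minimizes a piecewise-linear sum of absolute values, it generically coincides with a single breakpoint, leaving $\lvert\Ec_j\setminus\Cc\rvert-1$ clean components nonzero --- far more than $\lvert\Ec_j\cap\Cc\rvert$. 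A concrete refutation: take $\Cc=\varnothing$ and $\gamma<\lVert\tilde{\bm{W}}^{-1}[k]\theta[k]\rVert_\infty$ (below the threshold of Theorem~\ref{th:LS_noattack}); then $\vartheta[k]=0$ violates the subgradient condition, so $\vartheta[k]\neq0$ on clean blocks, while your claim would force $\vartheta[k]=0$. Note the theorem must hold for every $\gamma>0$, so this regime cannot be excluded.

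What the paper uses instead is not support sparsity of $\vartheta[k]$ but a weighted-median robustness argument in the $\alpha$-direction. After substituting the stationarity condition in $\hat\vartheta[k]=\bm{V}[k]\bm{\zeta}^f[k]-\vartheta[k]$ back into the objective, the terms containing $[\alpha^\star[k]]_j$ form the scalar function $L_j$ in \eqref{Lj}: a sum of $\lvert\Ec_j\cap\Cc\rvert$ absolute values with attack-dependent breakpoints and $\lvert\Ec_j\setminus\Cc\rvert$ absolute values with the clean breakpoints $[\eta^j[k]]_i$. Optimality of $[\alpha^\star[k]]_j$ for this one-dimensional problem is a sign-balance condition on the numbers of breakpoints above and below the minimizer, and Lemma~\ref{lem:error_bound_med} converts it into the order-statistic bound: it requires only $b\geq a+1$, which Assumption~\ref{as:2p} supplies via $2\lvert\Ec_j\cap\Cc\rvert\leq2p<\lvert\Ec_j\rvert$, and it pins $\delta_j$ between the $c$-th smallest and $c$-th largest clean breakpoints \emph{regardless} of how many clean $[\vartheta_i[k]]_j$ are active and regardless of the attacked breakpoints. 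With that lemma in place of your counting claim, your identities immediately yield \eqref{th:estimation_upperbound}; without it, the proposal does not close.
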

\begin{proof}
    See Appendix~\ref{thpf:secure_fusion}.
\end{proof}

It is worth noting that the vectors $\eta^j[k]$ in Theorem~\ref{th:secure_fusion} are independent of the information provided by the attacked sensors for all $j \in \Jc$ by definition and $\check \vartheta[k]$ is bounded in the range $[-1,1]$. Consequently,
the result of Theorem~\ref{th:secure_fusion} shows us that the upper bound of the estimation error under spatio-temporal attacks, which is $\big \lvert \big[ \check x[k] \big]_j - x[k] \big]_j \big  \rvert \leq \big \lvert \big[ \check x[k] \big]_j - \big[ x_{\ls}[k] \big]_j \big  \rvert + \big \lvert \big[ x_{\ls}[k] \big]_j - \big[ x[k] \big]_j \big  \rvert$,  is independent of the malicious activities for all $j \in \Jc$. Thus, the secure state estimate $\check x[k]$ is resilient to such attacks, satisfying uniform error bound \eqref{uni_bound}.

{For implementation, Algorithm~\ref{alg:secure_fusion} describes a step-by-step procedure to construct the $\ell_1$-regularization least squares problem \eqref{pb:least_square_secure}, whose solution is the secure state estimate $\check x[k]$, resilient to spatio-temporal attacks according to Theorem~\ref{th:secure_fusion}.}
{
\begin{remark}
    In practice, it is not necessary to compute the right-hand side of \eqref{th:condition_wt_noattack} online. An effective approach is to choose a baseline $\gamma_0$ and its scaled values $\{\gamma\}$ with $\gamma = \rho \gamma_0$, where $\rho > 1$, e.g., some $\rho \in [\underline{\rho},~\bar \rho]$ with $\bar \rho > \underline{\rho} > 1$. 
    The chosen set of regularization parameters $\{\gamma\}$ gives a set of decoupled optimization problems in the form of \eqref{pb:least_square_secure}, which we solve in parallel.
   If the results from solving \eqref{pb:least_square_secure} with two different values of $\gamma$ are the same, these two values of $\gamma$ lead to the same optimizer $\vartheta^\star[k] = 0$. Therefore, we can choose either one of these two values of $\gamma$.  \QET
\end{remark}
% }
% {
\begin{remark}
    The median-based fusion used in \cite{li2023secure} ensures bounded covariance of the secure estimate, but does not establish its relationship to the oracle Kalman estimate, which is optimal by definition. In contrast, the present fusion \eqref{pb:least_square_secure} has two distinctive properties: (i)  it exactly recovers the oracle Kalman estimate in the absence of attacks (Theorem~\ref{th:LS_noattack}); and (ii) under attacks, it provides a component-wise, attack-independent error bound tied to the oracle Kalman performance (Theorem~\ref{th:secure_fusion}). \QET
    % Hence, the theoretical contributions of this paper are both stronger (exact recovery) and sharper (explicit, tunable bounds via $\gamma$) than those in \cite{li2023secure}. 
\end{remark}
}

%% Newly added algorithm
\begin{algorithm}[t]
\color{blue}
\captionsetup{font={color=blue}}
\caption{Secure fusion under spatio-temporal attacks}
\label{alg:secure_fusion}
\begin{algorithmic}[1]
  \Statex{\textbf{Input:}} system matrices $A, C, Q$, $R$, and $H_i$; the measurement availability index $\phi_i[k]$; and the $\ell_1$-regularization parameter $\gamma$ 
  \Statex{\textbf{Output:}} secure state estimate $\check x[k]$
  \For {each discrete-time index $[k]$} 
  \State {compute discretized matrices in \eqref{sampled_matrices}, \eqref{eq:def_Kk_asy}, and \eqref{eq:def_Pi}}
  \For {each sensor index $i$}
  \State{
  \multiline{compute $\zeta_i[k]$ in \eqref{eq:def_zeta}, $G_i[k]$ in \eqref{eq:defG}, $V_i[k]$ in Lemma~\ref{lm:span}.}
  }
  \EndFor
  \State{compute $\tilde{\bm{W}}[k] = \bm{V}[k] \bm{W}[k] \bm{V}^\top[k]$, where $\bm{W}[k]$ in \eqref{eq:defW}}
  \State{solve \eqref{pb:least_square_secure}}
  \State \Return{$\check{x}[k]$}
  \EndFor
\end{algorithmic}
\end{algorithm}
% \vspace{-0.9cm}

% %%================
% \begin{remark}
%     \tcr{If we have sparse space, say something about how to use \eqref{pb:least_square_secure} with different values $\gamma$ in practice.}
% \end{remark}
% \begin{remark}
%     By analyzing the upper bound of the state estimation error \eqref{estimation_upperbound}, we obtain the following three advantages: 1) the upper bound does not depend on the information of corrupted sensors; 2) the function $h_c(\cdot)$ gives us a tighter upper bound instead of taking the maximum value of $\eta^j[k]$; and 3) the use of $\Phi$ in \eqref{pb:least_square_secure} gives us upper bounds that slightly depend of the $\ell_1$ regularization, which allows for a large selection range of $\gamma$. In practice, the large selection range of $\gamma$ gives us more options of $\gamma$ that work for both Theorems~\ref{th:LS_noattack}-\ref{th:secure_fusion}.
% \end{remark}
% \section{Distributed Fusion (maybe not enough space)}
%%============================================
\vspace{-5pt}
\section{Simulation Results}
%%============================================
%%
To validate the obtained results, the proposed secure state estimation algorithm\footnote{Code is available at \href{https://github.com/tungnguyenkstnk58/secure-asynchronous-state-estimation}{https://tinyurl.com/sec-asyn-est}} \eqref{pb:least_square_secure} is implemented in the IEEE 14-bus system \cite{korres2010robust}, which contains 28 state variables (a phase angle and an angular frequency for each bus) and 42 sensors (electric power, phase angle, and angular frequency sensors for each bus). %The code for the simulation can be found at 
%The implementation is conducted using Matlab 2023b with YALMIP 2023 toolbox \cite{lofberg2004yalmip} and MOSEK solver on a computer with 2.9-GHz 8-core Intel i7-10700 processor and 16 GB RAM. 
By {leveraging} the structure of the block diagonal $\bm{V}[k]$ and the asynchronous sampling, the elements of $\bm{V}[k] \bm{\zeta}[k]$ in \eqref{pb:least_square_secure}, which correspond to off-sampling sensors, are true since they are computed based on system modeling. Consequently, we can set elements of $\vartheta[k]$ in \eqref{pb:least_square_secure}, which correspond to those true values, are zero in the implementation.
Theorems~\ref{th:least_square}--\ref{th:secure_fusion} are validated in the following.
In the first scenario, we conduct the state estimation using the KF \eqref{eq:asy_kalman}, the proposed least squares \eqref{pb:least_square}, and the proposed secure least squares \eqref{pb:least_square_secure} with two different values of $\gamma$, i.e., $\gamma = 2$ and $\gamma = 400$, in the absence of attacks {(see Fig.~\ref{fig:14busrms_noattack} for estimation errors). The following three curves overlap: state estimates provided by using the KF \eqref{eq:asy_kalman} (blue), solving \eqref{pb:least_square} (red), and solving \eqref{pb:least_square_secure} (purple) with $\gamma$ satisfying \eqref{th:condition_wt_noattack}, which is consistent with Theorems \ref{th:least_square}--   \ref{th:LS_noattack}. In contrast, for a small $\gamma$, the sufficient condition in Theorem 2 is not satisfied and a deviation is expected, as illustrated by the yellow dashed line in Fig.~\ref{fig:14busrms_noattack}.} 
%The estimation errors of the three methods are shown in Fig.~\ref{fig:14busrms_noattack}. %where the sampled-data KF \eqref{eq:asy_kalman} (dashed red lines) and the least square problem \eqref{pb:least_square_secure} (dotted yellow lines) provide good state estimate. 
%No noticeable difference is witnessed among the three methods, validating the results of Theorems~\ref{th:least_square}-\ref{th:LS_noattack}. 
%Further, non-periodic and asynchronous measurement samples are shown in Fig.~\ref{fig:14bussample_noattack}.
%%
\begin{figure}[!t]
    \centering
    \includegraphics[width=0.78\linewidth]{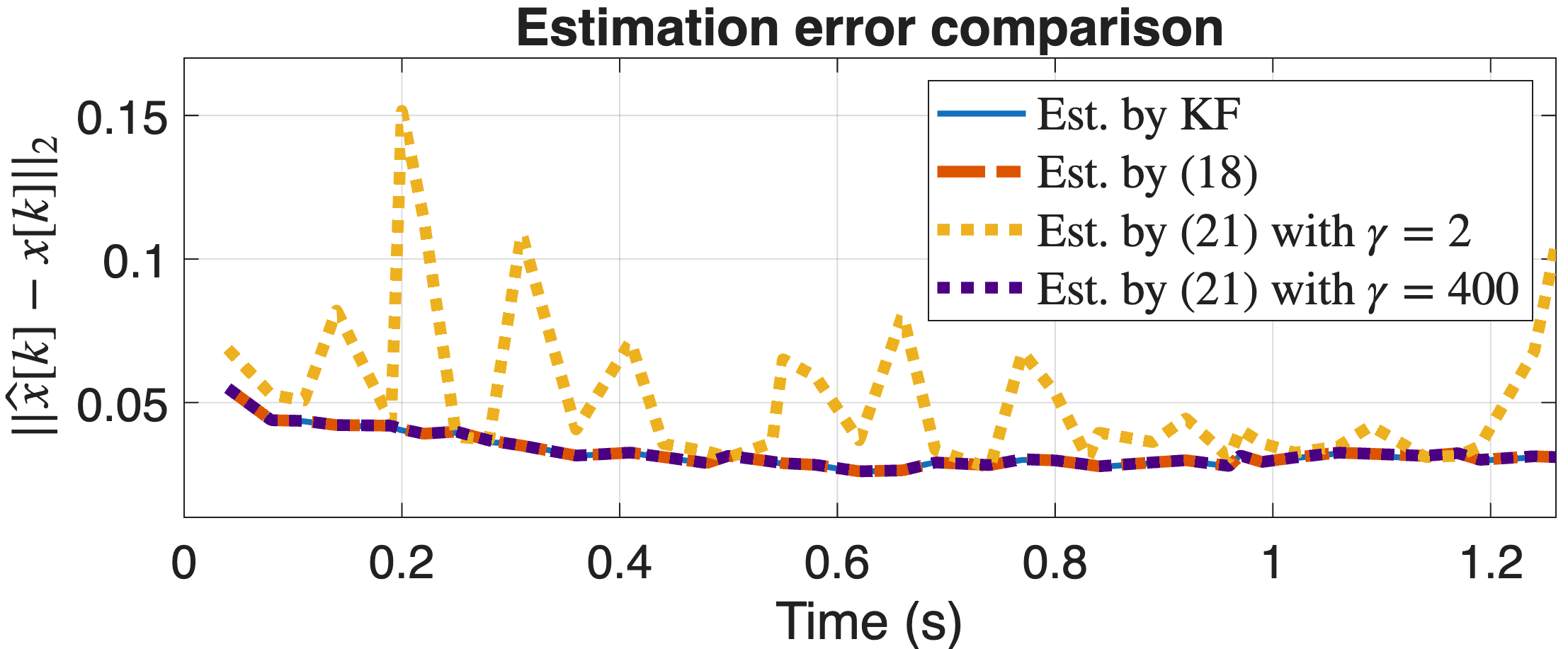}
    \caption{Estimation error comparison using the sampled-data KF \eqref{eq:asy_kalman}, the least squares problem \eqref{pb:least_square}, and the least squares problem \eqref{pb:least_square_secure} with two different values of $\gamma$ in the absence of attacks. 
    %No vividly observable difference is witnessed among the state estimates provided by KF, \eqref{pb:least_square}, and \eqref{pb:least_square_secure} with $\gamma = 400$, which validates the results of Theorems~\ref{th:least_square} and \ref{th:LS_noattack}.
    }
    \vspace{-8pt}
    \label{fig:14busrms_noattack}
\end{figure}
% %%
% \begin{figure}[!t]
%     \centering
%     \includegraphics[width=1.0\linewidth]{}
%     \caption{The sensors are sampled in a non-periodic and asynchronous manner. One sensor of buses 3, 4, 8, and 9 are shown due to limited space.}
%     \label{fig:14bussample_noattack}
% \end{figure}
%%
%\subsection{State estimation under attacks}
%%

It remains to validate the result of Theorem~\ref{th:secure_fusion}.
In the second scenario, we conduct spatio-temporal attacks: false data injection on the phase angle sensor of bus $3$, time-stamp manipulation on the power sensor of bus $5$, DoS on the angular frequency sensor of bus $4$, and false data generation on the power sensor of bus $2$ (see Fig.~\ref{fig:14busatt_activ}). {To avoid bias toward any specific adversarial strategy, all attacks are instantiated with randomly generated false-data values $(a, y^f \sim \Nc(2,1))$, randomly manipulated time-stamps $\big( (t^a - t) \sim \Uc[0.01, 0.05] \big)$, and complete packet-drop events.} The state estimates provided by the sampled-data KF \eqref{eq:asy_kalman} without attacks and the secure least squares problem \eqref{pb:least_square_secure} with $\gamma =2$ under attacks are illustrated in Fig.~\ref{fig:14busstate_4attack}. 
{For further performance comparison, we also implement a robust KF \cite{li2023improved} under the same attack scenario.}
A clearly observable difference between the estimation errors of all these estimation algorithms is witnessed in Fig.~\ref{fig:14busrms_4attack}. While the state estimate provided by the secure least squares problem \eqref{pb:least_square_secure} is resilient to the attacks, that provided by the sampled-data KF exhibits a very large error. {Further, \eqref{pb:least_square_secure} also provides a better estimation performance than the robust KF \cite{li2023improved}}. These results illustrate the effectiveness of our proposed secure state estimation algorithm.
\begin{figure}[!t]
    \centering
    \includegraphics[width=0.75\linewidth]{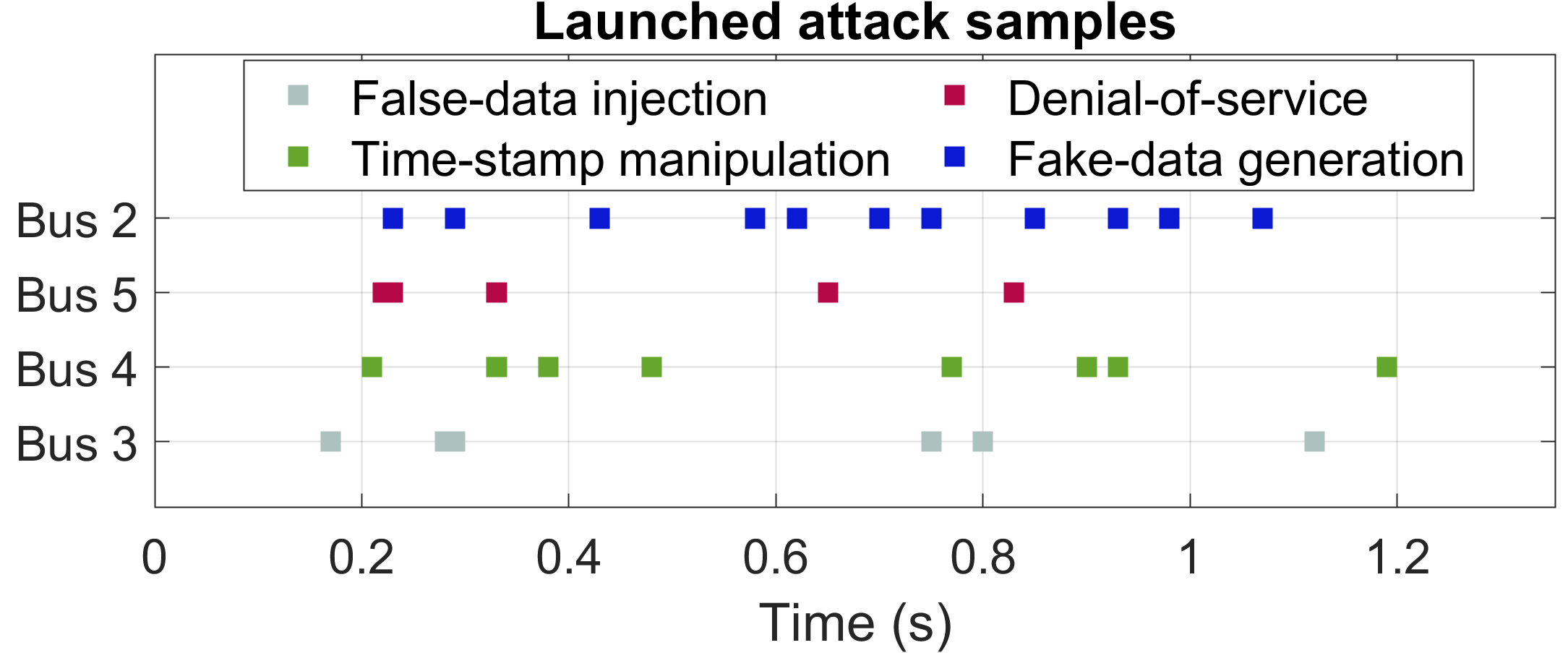}
    \caption{The spatio-temporal attacks are launched on sensors of buses 2--5 with false data injection on the phase angle sensor of bus 3, time-stamp manipulation on the power sensor of bus 5, denial-of-service on the angular frequency sensor of bus 4, and completely new false data generation on the power sensor of bus 2.}
    \vspace{-8pt}
    \label{fig:14busatt_activ}
\end{figure}
\begin{figure}[!t]
    \centering
    \includegraphics[width=1.0\linewidth]{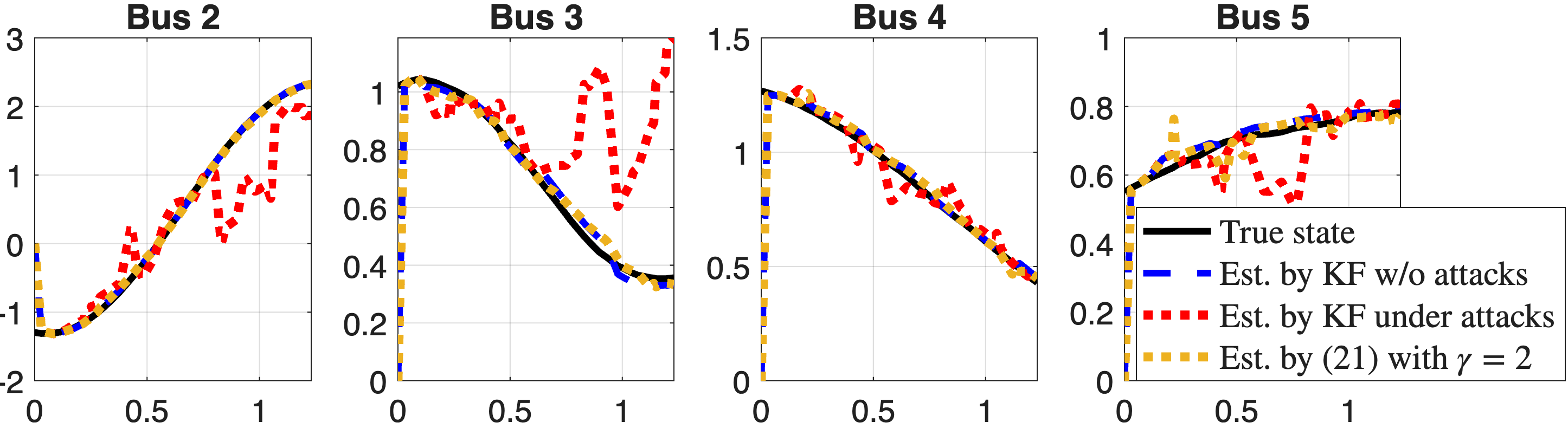}
    \caption{
    %The true states are represented in hard blue lines.
    %The state estimation provided by the sampled-data KF \eqref{eq:asy_kalman} is shown in dashed red lines while the state fusion given by solving \eqref{pb:least_square_secure} is shown in dotted yellow lines. 
    The horizontal axes represent time in seconds. The least-squares problem \eqref{pb:least_square_secure} provides a resilient state estimate against the attacks while the KF fails to provide a resilient state estimate.}
    \vspace{-10pt}
    \label{fig:14busstate_4attack}
\end{figure}
\begin{figure}[!t]    
    \centering
    \includegraphics[width=0.81\linewidth]{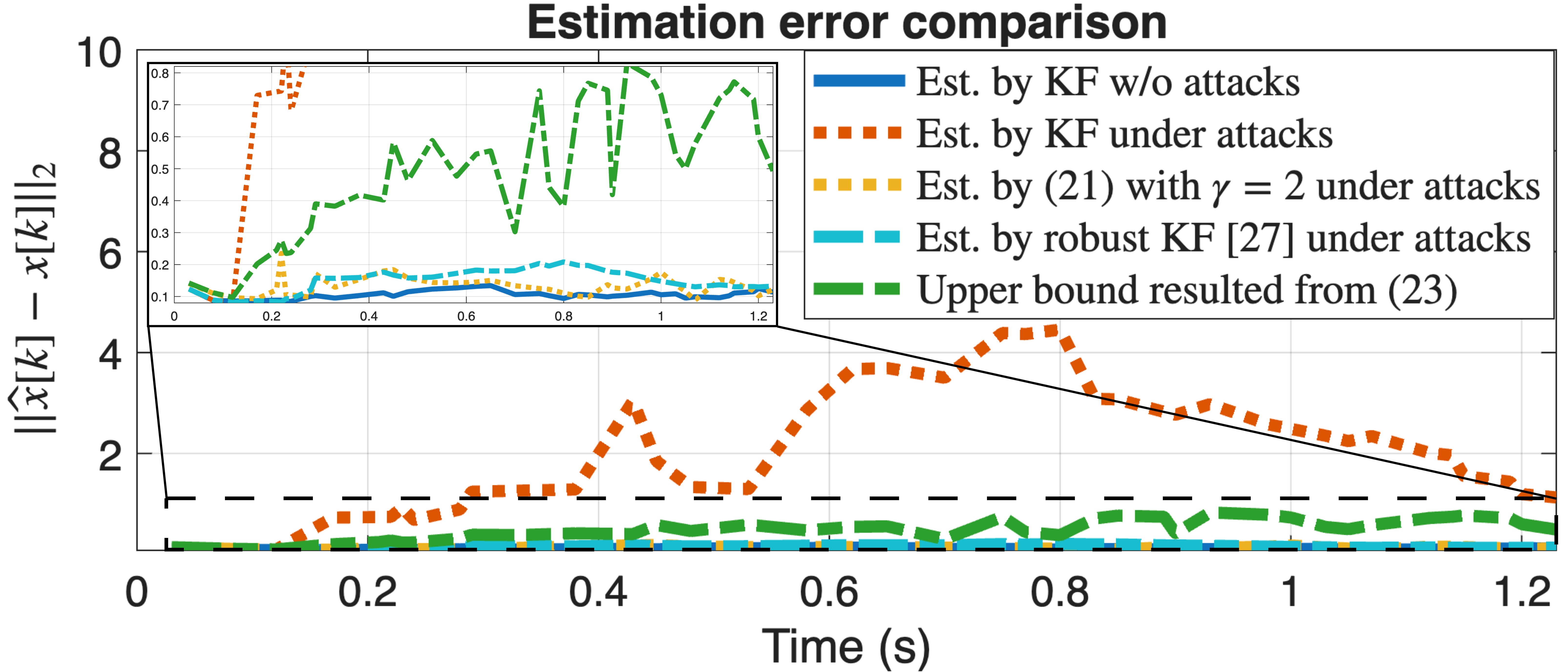}
    \caption{Estimation error comparison between the KF \eqref{eq:asy_kalman}, the secure least squares problem \eqref{pb:least_square_secure}{, and the robust KF \cite{li2023improved}. 
    %The dashed blue line stands for the sampled-data KF while the dotted red line represents the least square problem. 
    %The estimation error of the secure least square problem \eqref{pb:least_square_secure} is unaffected by the attacks and close to the oracle KF without attacks, validating Theorem~\ref{th:secure_fusion}.
    The inset shows a zoomed time interval with a different $y$-axis scale, highlighting that the estimate provided by solving \eqref{pb:least_square_secure} with $\gamma = 2$ remains close to the oracle KF estimate without attacks and stays below the theoretical upper bound, validating Theorem~\ref{th:secure_fusion}. 
     % The upper bound (green line) is computed by $\norm{\check x[k] - x[k]}_2 + \sqrt{\sum_{j \in \Jc} \max 
     % \big \{ \big \lvert h_c \big( \eta^j[k] \big) \big \rvert ,~ \big \lvert -h_c \big( -\eta^j[k] \big) \big \rvert \big \}^2}$.
     }
    }
    \label{fig:14busrms_4attack}
    \vspace{-0.5cm}
\end{figure}
% %%
% \begin{figure}[!h]
%     \centering
%     \includegraphics[width=1.0\linewidth]{}
%     \caption{Sensor samples on buses 2, 3, 4, and 7.}
%     \label{fig:14bussample_4attack}
% \end{figure}
%%
%%============================================
% \vspace{-20pt}
\section{Conclusion}
%%============================================
%%
This paper presented a secure state estimation algorithm for continuous LTI systems with non-periodic and asynchronous measurements under spatio-temporal false data attacks. The secure state estimation was developed based on the decomposition of the sampled-data KF to provide a state estimate that is exactly the same as the one provided by the sampled-data KF in the absence of attacks and resilient to spatio-temporal false data attacks. The effectiveness of the proposed secure state estimation was validated through an IEEE benchmark for power systems. 
{Combining a robust Kalman filtering method with the proposed $\ell_1$-regularization fusion to jointly address attacks and measurement outliers is a promising direction for future work.}
%%
% \bibliographystyle{IEEEtran}        % Include this if you use bibtex 
% \bibliography{async_est_24.bib}           % and a bib file to produce the 
% bibliography (preferred).
% \vspace{-0.2cm}
%\appendices
%{Tung proofreads}
%%=================================================
% \section{Proof of Lemma~\ref{lm:sumG}}
% \label{lmpf:sumG}
% The proof is presented by the induction.
% 	Assume that $\sum_{i=1}^{m} G_i[k]=I$ and we show that $\sum_{i=1}^{m} G_i[k+1]=I$:
% 	\begin{align*}
% 		\sum_{i=1}^{m} G_i[k+1]
% 		=&\sum_{i=1}^{m} \Pi[k] G_i[k] A^{-1}[k]+K_i[k+1]C_i[k+1]\\
% 		=& \, \Pi[k]A^{-1}[k]+K[k+1]C = I
% 	\end{align*}
% 	where the second equality comes from the assumption that $\sum_{i=1}^{m} G_i[k]=I$ and the last equality comes from the definition of $\Pi[k]$ in \eqref{eq:def_Pi}.
%%=================================================

%%=================================================
\renewcommand{\thesection}{A.\arabic{section}}
\setcounter{section}{0}  
%
%
% \section{}
\vspace{-0.3cm}
\section*{Appendix}
\section{Proof of Lemma~\ref{lm:cond_num}}
\label{lmpf:cond_num}
Before showing the proof of Lemma~\ref{lm:cond_num}, we present the following supporting results. Their proofs can be found in \cite{li2025secure}.
\begin{lemma}\label{lm:sumW}
Denote sub-blocks $\bm{W}_{i j}[k] \in \mathbb{Rb}^{n \times n}$ where $\bm{W}[k]=\left(\bm{W}_{i j}[k]\right)_{m \times m}$. For all $i\in\Ic$ and arbitrary time $k\in\Zb_{\geq 0}$, we have $\sum_{j=1}^{m}  \bm{W}_{i j}[k]=P[k] G_i^{\top}[k]$. \QET
    % \begin{align}
    %     \sum_{j=1}^{m}  \bm{W}_{i j}[k]=P[k] G_i^{\top}[k]. \label{eq:sumWijPG}
    % \end{align}
\end{lemma}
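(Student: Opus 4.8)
The plan is to avoid grinding through the block recursion \eqref{eq:defW} and instead exploit the probabilistic meaning of $\bm{W}[k]$. By Lemma~\ref{lm:epsilon} and \eqref{def_cov_epsilon}, if $\bm{W}[0]$ is initialized as $\cov(\epsilon[0])$ then $\bm{W}[k]=\cov(\epsilon[k])$ for every $k$, so each sub-block is a cross-covariance of local residuals. The summation over one block index can be converted into a cross-covariance with the \emph{aggregated} residual: combining \eqref{eq:sum_zeta=xhat}, which gives $\sum_{i}\zeta_i[k]=\hat{x}[k]$, with Lemma~\ref{lm:sumG}, which gives $\sum_i G_i[k]=I$, yields
\begin{equation*}
	\sum_{j=1}^m \epsilon_j[k]=\sum_{j=1}^m \zeta_j[k]-\Big(\sum_{j=1}^m G_j[k]\Big)x[k]=\hat{x}[k]-x[k],
\end{equation*}
i.e. the local residuals sum to the Kalman estimation error, which I denote $e[k]\triangleq\hat{x}[k]-x[k]$ and whose covariance is exactly $P[k]$. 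Reading $\bm{W}_{ij}[k]=\Eb[\epsilon_j[k]\epsilon_i[k]^\top]$ (equivalently, summing the $i$-th block column of the symmetric $\bm{W}[k]$), this gives at once $\sum_{j}\bm{W}_{ij}[k]=\Eb\big[e[k]\,\epsilon_i[k]^\top\big]$.

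Next I would evaluate this cross-covariance by expanding $\epsilon_i[k]=\zeta_i[k]-G_i[k]x[k]$:
\begin{equation*}
	\Eb\big[e[k]\,\epsilon_i[k]^\top\big]=\Eb\big[e[k]\,\zeta_i[k]^\top\big]-\Eb\big[e[k]\,x[k]^\top\big]G_i^\top[k].
\end{equation*}
The two remaining expectations are then handled by the orthogonality principle of the Kalman filter: since $\hat{x}[k]$ is the conditional mean given the measurement history $\mathcal{Y}_k$ (all measurement triples with time-stamp up to $t_k$), the error $e[k]$ is orthogonal to every $\mathcal{Y}_k$-measurable quantity. I would apply this twice. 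Because $\zeta_i[k]$ is, by its recursion \eqref{eq:def_zeta} with $\zeta_i[0]=\mathbf{0}$, a deterministic linear function of the past measurements of sensor $i$, it is $\mathcal{Y}_k$-measurable, so $\Eb[e[k]\,\zeta_i[k]^\top]=\mathbf{0}$. For the second term I would write $x[k]=\hat{x}[k]-e[k]$ and use that $\hat{x}[k]$ is likewise $\mathcal{Y}_k$-measurable, giving $\Eb[e[k]\,x[k]^\top]=\Eb[e[k]\,\hat{x}[k]^\top]-\Eb[e[k]\,e[k]^\top]=\mathbf{0}-P[k]=-P[k]$. Substituting back yields $\Eb[e[k]\,\epsilon_i[k]^\top]=P[k]G_i^\top[k]$, which is the claim.

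The main obstacle I anticipate is one of justification rather than computation: one must argue that $\hat{x}[k]$ produced by the asynchronous update \eqref{eq:def_xt_asy} is genuinely the conditional mean with error covariance $P[k]$, so that the orthogonality principle applies verbatim despite the time-varying availability masks $\phi[k]$; the identity $K[k]C[k]=K[k]C$ noted after \eqref{eq:def_xt_asy} is precisely what makes the masked recursion behave as a standard Kalman filter in this respect. A secondary point to state cleanly is the consistency check at $k=0$, where $\hat{x}[0]=\mathbf{0}$, $\zeta_i[0]=\mathbf{0}$, $e[0]=-x[0]$ and $P[0]=\Sigma$, so that $\Eb[e[0]\epsilon_i[0]^\top]=\Sigma\,G_i^\top[0]=P[0]G_i^\top[0]$, confirming the formula holds uniformly for all $k\in\Zb_{\geq 0}$ without a separate induction.

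As a self-contained cross-check not relying on probability, one can alternatively verify by induction that $P[k]G_i^\top[k]$ satisfies the same recursion as the block-column sum of \eqref{eq:defW}: using $\sum_j G_j[k]=I$ collapses the process-noise part of $\bm{Q}[k]$ to $\Pi[k]G_i[k]A^{-1}[k]Q[k]\big(A^{-1}[k]\big)^\top\Pi^\top[k]$, and the Kalman identities $\Pi[k]=(I-K[k+1]C)A[k]$ together with $P[k+1]=(I-K[k+1]C)P_{\m}[k+1]$ are what reconcile the measurement-noise part $\sum_j [R[k+1]]_{ij}K_i[k+1]K_j^\top[k+1]$ with $P[k+1]C_i^\top K_i^\top[k+1]$. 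This route is considerably more tedious, which is why I would present the covariance argument above as the primary proof.
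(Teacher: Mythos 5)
Your route is genuinely different from the paper's: the paper proves this lemma purely algebraically, summing the block recursion obtained from \eqref{eq:defW} over one sensor index and verifying, via the Kalman identity \eqref{eq:KR} and the dynamics \eqref{eq:defG} of $G_j[k]$, that $P[k]G_j^\top[k]$ satisfies the same affine recursion, then concluding by induction from the initialization $\sum_{j} \bm{W}_{ij}[0]=\Sigma G_i^\top[0]$ --- precisely the computation you relegate to a ``tedious cross-check'' in your last paragraph. Your primary probabilistic argument is elegant, but it has a genuine gap: the identification $\bm{W}[k]=\cov(\epsilon[k])$ holds only for the initialization $\bm{W}[0]=\cov(\epsilon[0])=\bm{G}[0]\Sigma\bm{G}^\top[0]$, and this is \emph{not} the $\bm{W}[0]$ the paper uses. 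Since $\epsilon[0]=-\bm{G}[0]x(0)$, that covariance has rank at most $n<mn$ and is singular (and the driving term $\bm{Q}[k]$ is also rank-deficient, so singularity persists), whereas Lemma~\ref{lm:cond_num}, Theorem~\ref{th:least_square}, and the fusion \eqref{pb:least_square} itself all require $\bm{W}[0]\succ 0$ so that $\tilde{\bm{W}}^{-1}[k]$ exists; the paper's appendix constructs a strictly positive definite $\bm{W}[0]$ that satisfies only the block-sum condition, not equality with $\cov(\epsilon[0])$. So as written you prove the identity for an initialization the algorithm explicitly avoids. The repair is cheap and should be stated: $\Delta[k]\triangleq\bm{W}[k]-\cov(\epsilon[k])$ satisfies $\Delta[k+1]=\bm{\Pi}[k]\Delta[k]\bm{\Pi}^\top[k]$, i.e., blockwise $\Delta_{ij}[k+1]=\Pi[k]\Delta_{ij}[k]\Pi^\top[k]$, so the block sums of $\Delta$ propagate homogeneously and vanish for all $k\in\Zb_{\geq0}$ once they vanish at $k=0$, which is exactly what the initialization hypothesis provides. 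With that one linearity step added, your orthogonality computation carries the real content and the lemma follows for every admissible $\bm{W}[0]$; without it, the probabilistic shortcut alone does not establish the statement the paper needs.

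Two smaller points. Your reading $\bm{W}_{ij}[k]=\Eb[\epsilon_j[k]\epsilon_i[k]^\top]$ is the transpose of the standard block convention, and your hedge is not quite right: block symmetry gives $\bm{W}_{ij}=\bm{W}_{ji}^\top$, not $\bm{W}_{ij}=\bm{W}_{ji}$, so block-row sums and block-column sums are transposes of each other ($G_i[k]P[k]$ versus $P[k]G_i^\top[k]$), not equal. Your orthogonality computation correctly yields the column-sum value $\Eb[e[k]\epsilon_i[k]^\top]=P[k]G_i^\top[k]$; you should fix one convention and hold it, though you are in good company --- the paper's own appendix derives the recursion for $\sum_{i}\bm{W}_{ij}$ and then states the conclusion for $\sum_{j}\bm{W}_{ij}$. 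Finally, the reliance on the orthogonality principle that you flag is sound and needs only the remark that $\hat x[k]$ is the linear MMSE estimate of the time-varying sampled system with known availability pattern $\phi[k]$, so $\Eb[e[k]\zeta_i[k]^\top]=0$ and $\Eb[e[k]\hat x[k]^\top]=0$ follow because $\zeta_i[k]$ and $\hat x[k]$ are linear in the received measurements; Gaussianity is not even required for those two identities.
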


\begin{proposition}[Stability of Asynchronous KF~\cite{Feddaoui_asy_kalman}]\label{pp:central_kalman}
    Suppose that Assumptions \ref{as:R} and \ref{as:sample_time} hold.
	%If Assumptions ???????????????????????????????????????????????????????????????????????????\ref{as:R} and ???????????????????????????????????????????????????????????????????????????\ref{as:sample_time} hold, then 
    %There exists a constant TmaxT_{\max}, which fulfills T_{\max} \geq \sup \TcT_{\max} \geq \sup \Tc, such that 
    %if \sup \Tc \leq T_{\max}\sup \Tc \leq T_{\max} then 
    The estimation covariance $P(t)$ of the sampled-data KF defined in \eqref{eq:asy_kalman} satisfies the following properties:
	\begin{equation}
		\underline{p} I \preceq P(t)\preceq\overline{p} I, \, \forall \, t\geq0 ,
	\end{equation}
	where $\underline{p},\,\overline{p}$ are constant scalars regardless of the sampling times. \QET
\end{proposition}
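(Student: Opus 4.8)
The plan is to establish the two-sided Loewner bound first at the sampling instants and then to transfer it to arbitrary $t\geq0$ via the monotonicity of the inter-sample prediction, recognizing the bound as the classical consequence of uniform observability (for the upper bound) and persistent process-noise excitation (for the lower bound), adapted to the non-uniform asynchronous sampling governed by Assumption~\ref{as:sample_time}. Between consecutive instants the covariance obeys $\dot P = A P + P A^\top + Q$, so $P(t)$ increases from its post-update value $P[k]$ at $t_k^+$ up to the one-step prediction $P_\m[k+1]$ at $t_{k+1}^-$. Since each interval is at most $T_{\max}$, the accumulated term $\int_0^{T_{\max}} e^{\tau A} Q e^{\tau A^\top}\,\rd\tau$ is a fixed bounded matrix and $A[k]=e^{(t_{k+1}-t_k)A}$ has uniformly bounded norm. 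Hence the minimum of the trajectory is attained at the post-update values $P[k]$ and the maximum at the pre-update values $P_\m[k]$, so it suffices to bound the discrete sequence $\{P[k]\}$ uniformly over $k$; the constants $\underline p,\overline p$ are then recovered by deflating/inflating with this fixed prediction increment.

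For the upper bound I would exploit the optimality of the Kalman filter: $P[k]$ is dominated in the Loewner order by the error covariance of any suboptimal estimator of $x(t_k)$. I would build a finite-memory least-square estimator on the window $\{t_j : t_{k-N}\leq t_j\leq t_k\}$, choosing the integer $N$ so that the window length lies in $[T_{\max},2T_{\max}]$; because $\sup\Tc\leq T_{\max}$ forces every sensor to sample at least once within any interval of length $T_{\max}$, the stacked measurements over the window involve the full output matrix $C$ sampled at the instants $t_j$. The associated discrete observability Gramian $\Oc_k \triangleq \sum_{t_j} \Phi(t_j,t_k)^\top C^\top R(t_j)^\dagger C\, \Phi(t_j,t_k)$ is uniformly positive definite: non-pathological sampling $\widetilde{\Tc}\cap\Tc^{*}=\varnothing$ prevents any two distinct eigenvalues of $A$ from becoming indistinguishable after sampling, so the sampled pair retains observability with a Gramian bounded below by a constant independent of the particular instants, while $R(t)\preceq\bar R$ (Assumption~\ref{as:R}) keeps the measurements informative enough. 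Inverting this uniform lower bound and adding the bounded process-noise contribution accumulated over the window yields $P[k]\preceq\overline p\, I$.

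For the lower bound I would use that process noise is persistently injected. Since $Q\succ0$, the one-step prediction satisfies $P_\m[k] = A[k-1]P[k-1]A^\top[k-1] + Q[k-1]\succeq Q[k-1]$, and the update can remove uncertainty only up to the information carried by the sampled measurements, which is controlled because $R(t)$ is bounded above. Comparing $P[k]$ with the posterior covariance of a genie estimator that is additionally told $x(t_{k-1})$ exactly shows that $P[k]$ is no smaller than the Schur complement of $Q[k-1]$ against the instantaneous measurement; together with the inter-sample fill-in $\dot P\succeq Q\succ0$, this furnishes a sampling-independent constant $\underline p>0$ with $P[k]\succeq\underline p\, I$, mirroring the uniform-controllability half of the classical stability theorem and the construction of \cite{Feddaoui_asy_kalman}.

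The main obstacle is the uniform, sampling-pattern-independent lower bound on the windowed Gramian $\Oc_k$ under non-uniform asynchronous sampling: the instants within each window vary with $k$, and a naive estimate would degrade as intervals shrink or as the eigenvalue spacing of $\spe(A)$ resonates with the sampling periods. Assumption~\ref{as:sample_time} is exactly what rules this out, through $\sup\Tc\leq T_{\max}$ (every sensor contributes within a bounded window) and $\widetilde{\Tc}\cap\Tc^{*}=\varnothing$ (no rank loss at the sampled instants), echoing the observability-degradation analysis of \cite{DING2009324}. Making the Gramian lower-bound constant explicit and uniform over all admissible sampling sequences is the technical crux; once it is in hand, the optimality comparison and the monotone inter-sample argument close the proof.
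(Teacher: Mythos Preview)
The paper does not prove this proposition; it is quoted verbatim from \cite{Feddaoui_asy_kalman} and invoked as a standing result (it is used inside the proof of Lemma~\ref{lm:cond_num} but never argued here or in the appendices). There is therefore no in-paper proof to compare your sketch against. Your outline follows the classical Anderson--Moore/Jazwinski route---uniform observability plus optimality comparison for the upper bound, persistent process-noise excitation for the lower bound, and inter-sample monotonicity of the prediction covariance to pass from the discrete sequence to all $t\ge 0$---which is the standard strategy for this type of statement and is broadly in the spirit of the cited reference.

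One caution on the step you yourself flag as the crux: you claim that $\widetilde{\Tc}\cap\Tc^{*}=\varnothing$ delivers a lower bound on the windowed observability Gramian that is \emph{uniform} over all admissible sampling patterns. Strictly, non-pathological sampling only guarantees that the sampled pair is observable for each particular choice of instants; it does not by itself prevent the smallest eigenvalue of $\Oc_k$ from tending to zero along a sequence of admissible patterns whose cumulative intervals approach a point of $\Tc^{*}$. To obtain a sampling-independent constant you would need either a quantitative separation assumption (a positive distance from $\widetilde{\Tc}$ to $\Tc^{*}$), a compactness argument over windows of bounded length, or---as in \cite{Feddaoui_asy_kalman}---an analysis of the hybrid differential Riccati equation that sidesteps the discrete Gramian altogether. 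Your sketch identifies the right obstacle but does not yet resolve it.
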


\textit{Proof of Lemma~\ref{lm:cond_num}}:
	We first prove the upper bound. Based on Lemma \ref{lm:sumW}, we have $\sum_{j=1}^{m}\bm{W}_{ij}[k]=P[k]G^{\top}_i[k]$ for all $k\in\Zb_{\geq 0}$. Summing both sides over $i$ and recalling that $\sum_{i=1}^{m}G_i[k]=I$ from Lemma \ref{lm:sumG}, one obtains $\sum_{i=1}^{m}\sum_{j=1}^{m}\bm{W}_{ij}[k]=P[k]$,
	% \begin{equation*}
	% 	\sum_{i=1}^{m}\sum_{j=1}^{m}W_{ij}[k]=P[k],
	% \end{equation*}
	where $P[k]$ is the estimation covariance of asynchronous Kalman estimator defined in \eqref{eq:def_Pt_asy}.
    On the other hand,
    the result of Proposition \ref{pp:central_kalman} can yield $\underline{\alpha}\cdot I \preceq P[k] \preceq \overline{\alpha}\cdot I$,
	% Since we have claimed that \underline{\alpha}\cdot I \preceq P[k] \preceq \overline{\alpha}\cdot I\underline{\alpha}\cdot I \preceq P[k] \preceq \overline{\alpha}\cdot I in Proposition ????????????????????????????????????????????????????????????????????????\ref{pp:central_kalman}????????????????????????????????????????????????????????????????????????\ref{pp:central_kalman}, 
    resulting in that
    %one can obtain that 
    for each index $i$, the diagonal block satisfies
	$\bm{W}_{ii}[k]\preceq \overline{\alpha} I$ considering that every block $\bm{W}_{ij}[k]$ is positive semi-definite.
	As a result, there exists a constant $\overline{W}$ such that $\bm{W}[k]\preceq \overline{W}\cdot I$ holds for all time index $k$. \hfill \QED   
\vspace{-0.3cm}
\section{Proof of Theorem \ref{th:least_square}} \label{ap:th_least_square} 

%The following is the proof of Theorem \ref{th:least_square}.

%\begin{proof}
    Firstly, since $\bm{V}[k]$ is non-singular based on Lemma~\ref{lm:span}, we multiply both sides of \eqref{pb:least_square_cons} with $\bm{V}^{-1}[k]$. After using the notations $\bm{G}[k] \triangleq \big[ G_1^\top[k],\,G_2^\top[k],\ldots,\,G_m^\top[k] \big]^\top$ and $\hat \theta[k] = \bm{V}^{-1}[k] \theta[k]$, one obtains the following least squares problem, equivalent to \eqref{pb:least_square}:
    \begin{subequations}\label{pb:least_square_mod}
	\begin{align}
		\underset{{x_{\ls}}[k], \hat \theta[k]}{\text{minimize}}&\quad \frac{1}{2} \hat \theta[k]^{\top} \bm{W}^{-1}[k] \hat \theta[k]   \\
		\text { subject to}&\quad
		\bm{\zeta} [k]=
		\bm{G}[k] x_{\ls}[k]+ \hat \theta[k] .  
	\end{align}
    \end{subequations}
 %    where 
 %    \begin{align*}
	% \zeta[k]=\begin{bmatrix}
	% 	\zeta_1[k] \\ \vdots \\ \zeta_m[k]
	% \end{bmatrix}\in \Rb^{mn\times 1},\
	% \bm{G}[k]=\begin{bmatrix}
	% 	G_1[k] \\ \vdots \\ G_m[k]
	% \end{bmatrix}\in\Rb^{mn\times n}.
 %    \end{align*}

	Secondly, we prove that if the initial value of $\bm{W}[k]$ is Hermitian and strictly positive definite and satisfies $\sum_{j=1}^{m}  \bm{W}_{ij}[0]=\Sigma\cdot G_i^{\top}[0]$,
	% \begin{equation}\label{eq:initW_cond}
	% 	\sum_{j=1}^{m}  \bm{W}_{ij}[0]=\Sigma\cdot G_i^{\top}[0],
	% \end{equation}
	for all $i\in\Ic$, then Theorem \ref{th:least_square} holds.
	This initialization $\bm{W}[0]$
 %\eqref{eq:initW_cond} 
 and Lemma \ref{lm:sumW} imply that the following holds for all $k\in\Zb_{\geq 0}$:
	\begin{equation}\label{eq:FW=PG}
		\begin{bmatrix}
			I & \cdots & I
		\end{bmatrix} \bm{W}[k]=P[k]
  %       \begin{bmatrix}
		% 	G_{1}^{\top}[k] & \cdots & G_{m}^{\top}[k]
		% \end{bmatrix},
        \big[ G_{1}^{\top}[k], \cdots, G_{m}^{\top}[k] \big],
	\end{equation}
    resulting in 
    $
	\bm{G}^{\top}[k]\bm{W}^{-1}[k]\bm{G}[k]=P^{-1}[k]\begin{bmatrix}
		I & \cdots & I
	\end{bmatrix}\bm{G}^{\top}[k].
	$
    
	On the other hand, the solution to \eqref{pb:least_square_mod} is given by 
	\begin{align*}
		x_{\ls}[k]=\big( \bm{G}^{\top}[k]\bm{W}^{-1}[k]\bm{G}[k]\big)^{-1}\bm{G}^{\top}[k]\bm{W}^{-1}[k]\bm{\zeta}[k].
	\end{align*}
	% According to \eqref{eq:FW=PG}, we have that
	% $$
	% G^{\top}[k]\bm{W}^{-1}[k]G[k]=P^{-1}[k]\begin{bmatrix}
	% 	I & \cdots & I
	% \end{bmatrix}G^{\top}[k].
	% $$
	Since $\sum_{i=1}^{m}G_i[k]=I$ from Lemma~\ref{lm:sumG}, we finally conclude that
	\begin{align*}
		x_{\ls}[k]=P[k]\bm{G}^{\top}[k] \bm{W}^{-1}[k]\bm{\zeta}[k]=
		\begin{bmatrix}
			I& \cdots&  I
		\end{bmatrix}\bm{\zeta}[k]=\hat{x}[k],
	\end{align*}
	where the second and third equalities come from \eqref{eq:FW=PG} and \eqref{eq:sum_zeta=xhat}, respectively.
    Finally, the existence of $\bm{W}[0]$ is shown in \cite[A.9]{li2025secure}. \hfill \QED
\vspace{-0.3cm}
\section{Proof of Theorem~\ref{th:LS_noattack}}
\label{thpf:LS_noattack}
Let us introduce two new variables $\alpha[k]$ and $\beta[k]$ as the deviations between the two solutions to the problems \eqref{pb:least_square} and \eqref{pb:least_square_secure} such that $\alpha[k] \triangleq \check{x}[k] - x_{\ls}[k]$ and $\beta[k] \triangleq \mu[k] -  \theta[k]$. 
    The proof will be completed if we show that $\alpha[k] = 0$ and $\beta[k] = 0$ in the absence of the attacks. It is worth noting that the absence of the attacks implies
    the same $\bm{\zeta}[k]$ in \eqref{pb:least_square} and \eqref{pb:least_square_secure}, resulting in $\bm{H} \alpha[k] + \beta[k] + \vartheta[k] = 0$.
    %and the following relationship:
    % \begin{align}
    %     % \bm{V}[k] \big( \bm{G}[k] x_{\ls}[k] + \theta[k] \big) &= \bm{H} \check{x}[k]+\mu[k] + \vartheta[k], \nonumber \\ 
    %     \bm{H} \alpha[k] + \beta[k] + \vartheta[k] &= 0.
    % \end{align}
    %The new deviation variables $\alpha[k]$ and $\beta[k]$ enables us to rewrite the optimization problem \eqref{pb:least_square_secure} into the following form:
    As a result of utilizing the new deviation variables $\alpha[k]$ and $\beta[k]$, solving \eqref{pb:least_square_secure} is equivalent to solving the following problem:
    \begin{align}
        \underset{{\alpha}[k], \beta[k], \vartheta[k]}{\text{minimize}}&
        \frac{1}{2} \beta[k]^\top \tilde{\bm{W}}^{-1} [k]\beta[k] 
        + \theta[k]^\top \tilde{\bm{W}}^{-1}[k]  \beta[k] + \gamma \norm{\vartheta[k]}_1 \notag \\ 
        \text{subject to}~&~ \bm{H} \alpha[k] + \beta[k] + \vartheta[k] = 0.  \label{pb:lasso_reform}
    \end{align}
    Let us consider the second term of the objective function \eqref{pb:lasso_reform} which can be rewritten based on its constraint as follows:
    \begin{align}
        \theta[k]^\top \tilde{\bm{W}}^{-1}[k]   \beta[k] 
        %\nonumber \\
        % =~& \theta[k]^\top {\bm{W}}^{-1}[k] \bm{V}^{-1}[k]  \big( - \bm{H} \alpha[k] - \vartheta[k] \big) \nonumber 
        % \\
        % =~& - \theta[k]^\top {\bm{W}}^{-1}[k] \bm{V}^{-1}[k] \bm{H} \alpha[k] - \theta[k]^\top {\bm{W}}^{-1}[k] \bm{V}^{-1}[k] \vartheta[k] \nonumber \\
        = - \theta[k]^\top \tilde{\bm{W}}^{-1}[k] \vartheta[k], \label{pr:deviation}
    \end{align}
    where the equality comes from the fact that 
    %$\bm{V}^{-1}[k] \bm{H} = \bm{G}[k]$ and 
    $\theta[k]^\top \tilde{\bm{W}}^{-1}[k] \bm{H} = 0$ based on the KKT condition of the problem \eqref{pb:least_square}.
    On the other hand, for an arbitrary $\vartheta[k]$, \eqref{th:condition_wt_noattack} implies
    \begin{align}
        \gamma \norm{\vartheta[k]}_1 \geq 
        %\norm{\vartheta}_1 \| \bm{V}^{-1}[k] \bm{W}^{-1}[k] \theta[k]\|_{\ift} \geq
        \theta[k]^\top \tilde{\bm{W}}^{-1}[k]  \vartheta[k],
    \end{align}
    where the equality occurs if, and only if, $\vartheta[k] = 0$. This result together with \eqref{pr:deviation} implies that the minimum value of the objective \eqref{pb:lasso_reform} is zero if, and only if, $\tilde{\bm{W}}^{-1} [k] \beta[k] =0$ and $\vartheta[k] = 0$. The empty null space of $\tilde{\bm{W}}^{-1} [k]$ gives us $\beta[k] = 0$.
    As a consequence, the constraint \eqref{pb:lasso_reform} results in $\alpha[k] = 0$ since the matrix $\bm{H}$ has an empty null space. The proof is completed. \hfill \QED
%%====================================
\vspace{-0.3cm}
\section{Proof of Theorem~\ref{th:secure_fusion}}
\label{thpf:secure_fusion}
Recall the analysis in Section~\ref{sec:attack_analysis} and \eqref{attack_local_est}, let us consider the problem \eqref{pb:least_square} with the oracle local estimator $\bm{\zeta}^o[k]$. 
%We denote $(x_{\ls}[k],\, \theta[k])$ as the optimal solution to \eqref{pb:least_square} in the absence of the attacks where $x_{\ls}[k]$ is the optimal state estimation. 
Note that the fusion does not know the oracle value of $\bm{\zeta}^o[k]$ to find the optimal state estimate $x_{\ls}[k]$. However, this optimal solution $(x_{\ls}[k],\, \theta[k])$ can be utilized as a ground truth. On the other hand, we consider the problem \eqref{pb:least_square_secure} in the presence of the attacks, i.e., the corrupted local estimator $\bm{\zeta}[k] \neq \bm{\zeta}^o[k]$ where $\zeta_i[k]$ is defined in \eqref{attack_local_est}. 
%Let us denote $(\check{x}[k]$, $\mu[k]$, $\vartheta[k])$ as the minimizer of \eqref{pb:least_square_secure}.
%The purpose of the following is to show \tcr{$\norm{x_{\ls}[k] - \check x[k]} \leq ...$}

Let us reuse the two deviation variables $\alpha[k]$ and $\beta[k]$ in Appendix~\ref{thpf:LS_noattack} such that
%in the proof of Theorem~\ref{th:LS_noattack} as the deviations between the two solutions to the problems \eqref{pb:least_square} in the absence of the attacks and \eqref{pb:least_square_secure} in the presence of the attacks such that 
$\alpha[k] \triangleq \check{x}[k] - x_{\ls}[k]$ and $\beta[k] \triangleq \mu[k] -  \theta[k]$. 
%Since $x_{\ls}[k]$ is the optimal estimation, 
Next, we plan to show that $\norm{\alpha[k]}$ lies in a small ball centered at the origin and is independent of the malicious activities. The constraint of \eqref{pb:least_square_secure} in the presence of attacks and the constraint of \eqref{pb:least_square} in the absence of attacks give us the following relationship: $\beta[k] = \bm{V}[k] \bm{\zeta}^f[k] - \bm{H} \alpha[k] -  \vartheta[k]$.
% \begin{align}
%     % &\bm{V}[k] \big( \bm{\zeta}^o[k] + \bm{\zeta}^f[k]  \big) =  \bm{H} \check{x}[k]+\mu[k] +  \vartheta[k] \notag \\
%     % &=~  \bm{H} x_{\ls}[k] + \theta[k]  + \bm{V}[k] \bm{\zeta}^f[k], 
%     % %= \bm{H} \check{x}[k]+\mu[k] + \Phi[k] \vartheta[k], 
%     % \notag \\ 
%      &\beta[k] = \bm{V}[k] \bm{\zeta}^f[k] - \bm{H} \alpha[k] -  \vartheta[k].
% \end{align}
As a consequence, solving \eqref{pb:least_square_secure} in the presence of attacks is equivalent to solving the following problem:
 %\begin{subequations} \label{pb:ls_wattack}
\begin{align}
        \underset{{\alpha}[k], \beta[k],  \vartheta[k]}{\text{minimize}}&
        \frac{1}{2} \beta[k]^\top \tilde{\bm{W}}^{-1} [k]\beta[k] + \theta[k]^\top \tilde{\bm{W}}^{-1}[k]  \beta[k]
        % \notag \\
        % & \hspace{3.5cm}
         + \gamma \norm{  \vartheta[k]}_1  \notag \\ 
        \text{subject to}~&~ \beta[k] = \bm{V}[k] \bm{\zeta}^f[k]  
        - \bm{H} \alpha[k] -  \vartheta[k]. 
        \label{pb:ls_wattack}
\end{align}
%\end{subequations}
Let us denote $\hat \vartheta[k] \triangleq \bm{V}[k] \bm{\zeta}^f[k] -   \vartheta[k]$ and $\check \vartheta[k] \in \partial\big \lVert \hat \vartheta[k] \big \rVert_1$, i.e., the sub-gradient with respect to $\vartheta[k]$. It is worth noting that the $i$-th element of the sub-gradient $\check \vartheta[k]$ takes a value between $-1$ and $1$. Since $\bm{V}[k]$ is a block diagonal matrix, we have the following property: $\bm{V}_i[k] \bm{\zeta}^f_i[k] \neq 0$ when $i \in \Cc$ and $\bm{V}_i[k] \bm{\zeta}^f_i[k] = 0$ otherwise.
% \begin{align}
%     \bm{V}_i[k] \bm{\zeta}^f_i[k] \begin{cases}
%         = 0, ~~~ \text{if}~ i \in \Ic \setminus \Cc ~\text{or}~ \phi_i = 0, \\
%         \neq 0, ~~~ \text{if}~ i \in \Cc ~\text{and}~ \phi_i = 1.
%     \end{cases}
% \end{align}
% Therefore, the structure of the matrix $\Phi[k]$ results in the following property:
% %%
% \begin{align}
%     \hat \vartheta[k] = \Phi[k] \hat \vartheta[k], ~
%     \check \vartheta[k] = \Phi[k] \check \vartheta[k].  
% \end{align}
%%
%\tcr{If we choose $\vartheta[k] = \big[ \diag(\phi) \otimes \textbf{1}_n\big] \vartheta[k]$, ...}. 

With the help of the KKT condition of \eqref{pb:least_square}, which is $\theta[k]^\top \tilde{\bm{W}}^{-1}[k] \bm{H} = 0$,
%Since the following holds $\theta[k]^\top \tilde{\bm{W}}^{-1}[k] \bm{H} = 0$, which is the KKT condition of \eqref{pb:least_square}, 
the optimization problem \eqref{pb:ls_wattack} can be solved by considering the following optimization problem:
\begin{align}
    \underset{\alpha[k],\, \hat \vartheta[k]}{\text{minimize}}~ L(\alpha[k],\,\hat \vartheta[k]),
    \label{pb:ls_wattack_obj}
\end{align}
where $L(\alpha[k],\hat \vartheta[k]) \triangleq \theta[k]^\top  \tilde{\bm{W}}^{-1}[k] \hat \vartheta[k] + \gamma \norm{\bm{V}[k] \bm{\zeta}^f[k]  -  \hat \vartheta[k]}_1$
\vspace{-10pt}
\begin{align}
   % & L(\alpha[k],\,\hat \vartheta[k]) \triangleq
   %  \frac{1}{2} \bigg( \hat \vartheta[k] - \bm{H} \alpha[k] \bigg)^\top  \tilde{\bm{W}}^{-1}[k] \notag \\
   %  &\times \bigg( \hat \vartheta[k]  - \bm{H} \alpha[k] \bigg) + \, \theta[k]^\top  \tilde{\bm{W}}^{-1}[k] \hat \vartheta[k] 
   %  %\notag \\
   %   + \gamma \norm{\bm{V}[k] \bm{\zeta}^f[k]  -  \hat \vartheta[k]}_1.  \label{Lfunction}
    % & L(\alpha[k],\,\hat \vartheta[k]) \triangleq \theta[k]^\top  \tilde{\bm{W}}^{-1}[k] \hat \vartheta[k] + \gamma \norm{\bm{V}[k] \bm{\zeta}^f[k]  -  \hat \vartheta[k]}_1 \notag \\
    & \hspace{40pt} + \frac{1}{2} \big( \hat \vartheta[k] - \bm{H} \alpha[k] \big)^\top  \tilde{\bm{W}}^{-1}[k] \big( \hat \vartheta[k]  - \bm{H} \alpha[k] \big). \label{Lfunction}
\end{align}
Let us denote $(\hat \vartheta^\star[k],\,\alpha^\star[k])$ as the solution to \eqref{pb:ls_wattack_obj}, which satisfies
\begin{align}
    &\frac{\partial L(\alpha[k],\,\hat \vartheta[k])}{\partial \hat \vartheta[k]} (\hat \vartheta^\star[k],\,\alpha^\star[k]) =   \hat \vartheta^\star[k]  - \bm{H} \alpha^\star[k]  \notag \\
    &\hspace{2.5cm}
    +  \theta[k] + \gamma \tilde{\bm{W}}[k] \check \vartheta[k] = 0. \label{gradient_Lv}
\end{align}
% where $\check \vartheta[k]$ is the sub-gradient of $\norm{\bm{V}[k] \bm{\zeta}^f[k]  - \Phi[k] \hat \vartheta[k]}_1$. It is worth noting that the $i$-th element of the sub-gradient $\check \vartheta[k]$ takes a value between $-1$ and $1$ if $[\Phi]_{ii} = 1$, otherwise it equals to $0$.
% Given that $\tilde{\bm{W}}^{-1}[k]$ has an null empty space, one has 
% \begin{align}
%     \frac{\partial L(\alpha[k],\,\hat \vartheta[k])}{\partial \alpha[k]} (\hat \vartheta^\star[k],\,\alpha^\star[k]) &= 0 \notag \\
%     \tilde{\bm{W}}^{-1}[k] \big ( \hat \vartheta^\star[k] - H\alpha^\star[k] \big) &= 0 \notag \\ 
%     \hat \vartheta^\star[k] &= H\alpha^\star[k].
% \end{align}
Then, substituting \eqref{gradient_Lv} into \eqref{Lfunction} and leveraging $\theta[k]^\top \tilde{\bm{W}}^{-1}[k] \bm{H}  = 0$, which is the KKT condition of \eqref{pb:least_square}, give us the following:
\begin{align}
    &L(\hat \vartheta^\star[k],\,\alpha^\star[k]) = 
    %\notag \\
    % &\frac{1}{2} \big( \theta[k] + \gamma \tilde{\bm{W}}[k]  \check \vartheta[k] \big)^\top \tilde{\bm{W}}^{-1}[k] \big(  \theta[k] + \gamma \tilde{\bm{W}}[k]  \check \vartheta[k] \big) \notag \\
    % & + \theta[k]^\top  \tilde{\bm{W}}^{-1}[k] \big( H \alpha^\star [k] -  \theta[k]  - \gamma \tilde{\bm{W}}[k]  \check \vartheta[k] \big) \notag \\
    %  &+\gamma \big \lVert \bm{V}[k]  \bm{\zeta}^f[k]    - \bm{H} \alpha^\star [k] +  \theta[k]   %\notag \\
    % % &\hspace{4cm}
    % + \gamma \tilde{\bm{W}}[k]  \check \vartheta[k] \big \rVert_1 
    % \notag \\
    % =& -\theta^\top[k] \bm{W}^{-1}[k] \bigg( \bm{\zeta}^m[k] +\frac{1}{2} \theta[k] \bigg) + \frac{1}{2} \gamma^2 \check \vartheta^\top[k] \bm{W}[k] \check \vartheta[k]  \notag \\
    % &+\gamma \big \lVert \bm{V}[k] \big( \bm{\zeta}^f[k] +\bm{\zeta}^m[k] \big) - \bm{H} \alpha^\star [k] + \bm{V}[k] \theta[k]  \notag \\
    % &\hspace{4cm}
    % + \gamma \tilde{\bm{W}}[k]  \check \vartheta[k] \big \rVert_1 \notag \\
     -\frac{1}{2}\theta^\top[k] \tilde{\bm{W}}^{-1}[k]  \theta[k] 
    %\notag \\
    %&
    + \frac{1}{2} \gamma^2 \check \vartheta^\top[k] \tilde{\bm{W}}[k] \check \vartheta[k] 
    %- \gamma \theta^\top[k]  \check \vartheta[k] 
    \notag \\
    &\textstyle + \gamma \sum_{i \, \in \, \Cc} ~ \sum_{j \in \Jc} \big \lvert \big[ V_i[k]\zeta^f_i[k]  + \theta_i[k]  \big]_j - \big[ H_i \alpha^\star[k] \big]_j  \notag \\ 
    & \textstyle
    + \gamma \, \bm{e}_{n(i-1)+j}^\top  \tilde{\bm{W}}[k]  \check \vartheta[k]  \big \rvert 
    + \gamma \sum_{i \, \in \, \Ic \setminus \Cc} ~ \sum_{j \in \Jc} \big \lvert [\theta_i[k]]_j 
    \notag \\
    &
    + \gamma  \, \bm{e}_{n(i-1)+j}^\top \tilde{\bm{W}}[k] \check \vartheta[k] 
    - \big[ H_i \alpha^\star[k] \big]_j \big \rvert.
\end{align}
For the state with index $j$, let us recall the index set of sensors that observe state $j$, which was denoted as $\Ec_j$ in \eqref{eq:def_Oc}, and the structure of the matrix $H_i$ in Lemma~\ref{lm:span}, resulting in
\begin{align}
    \big[ H_i \alpha^\star[k] \big]_j = \begin{cases}
        [\alpha^\star[k]]_j, ~~ \text{if}~ i \, \in \, \Ec_j, \\
        0,~~ \text{otherwise},
    \end{cases}
\end{align}
where $[\alpha^\star[k]]_j$ is the $j$-th element of $\alpha^\star[k]$. In the following,
we consider the function $L_j(\hat \vartheta^\star[k],\,\alpha^\star[k])$ that is a collection of terms containing $[\alpha^\star[k]]_j$ in $L(\hat \vartheta^\star[k],\,\alpha^\star[k])$ as follows:
\begin{align}
    &\textstyle L_j(\hat \vartheta^\star[k],\,\alpha^\star[k]) = 
    \sum_{i \, \in \, \Ec_j \bigcap  \Cc} \, \big \lvert [ V_i[k] \zeta^f_i[k]  + \theta_i[k] ]_j -  [\alpha^\star[k]]_j \notag \\ 
    & \textstyle +  \gamma  \, \bm{e}_{n(i-1)+j}^\top \tilde{\bm{W}}[k] \check \vartheta[k]  \big \rvert  
    + \sum_{i \, \in \, \Ec_j \setminus \Cc} \, \big \lvert [\theta_i[k]]_j
    \notag \\ 
    & 
    + \gamma \, \bm{e}_{n(i-1)+j}^\top \tilde{\bm{W}}[k]  \check \vartheta[k]
    -  [\alpha^\star[k]]_j \big \rvert. \label{Lj}
\end{align}
Due to the fact that the system is $2p$-observable, one has $2 \lvert \Ec_j \bigcap \Cc \rvert \leq 2 \lvert \Cc \rvert \leq 2p < \lvert \Ec_j \rvert$, resulting in $\lvert \Ec_j \bigcap \Cc \rvert < \lvert \Ec_j \setminus \Cc \rvert$. This result implies that the number of $[\alpha^\star[k]]_j$ in the first term of \eqref{Lj} is less than that of $[\alpha^\star[k]]_j$ in the second term of \eqref{Lj}. This observation enables us to apply the result of %Lemma~\ref{lem:error_bound_med} 
\cite[Lemma 6]{li2025secure}
to \eqref{Lj} together with the definition $[\alpha^\star[k]]_j  = \big[ \check x[k] \big]_j - \big[ x_{\ls}[k] \big]_j$, resulting in \eqref{th:estimation_upperbound}. %The proof is completed.
\hfill \QED

\bibliographystyle{IEEEtran}        % Include this if you use bibtex 
\bibliography{async_est_24.bib}

%%============================================
%%============================================
%%============================================
%% This part is only available for the aRXiv version
%%============================================
%%============================================
%%============================================
\newpage
\section{Proof of Lemma~\ref{lm:span}}
\label{pflm:span}
The second equality of \eqref{rowspan_Gi} holds by the definition of the observability matrix $O_i$ and $H_i$.
By induction method, we show the first equality of \eqref{rowspan_Gi}. Let us assume $\rs(G_i[k])= \rs(O_i)$. We need to show that $\rs(G_i[k+1])= \rs(O_i)$.

According to Assumption \ref{as:geo_mul}, one obtains 
$$\rs(G_i[k]A^{-1}[k])= \rs(O_iA^{-1}[k])=\rs(O_i)$$ and $\rs(\Pi[k]G_i[k]A^{-1}[k])= \rs(O_i)$. Moreover, since $\rs(K_i[k+1]C_i)\subseteq \rs(O_i)$ and one obtains that $\rs(G_i[k+1]) \subseteq \rs(O_i)$. 
    
If $((A[k]-K[k+1]CA[k])G_i[k] A^{-1}[k] +K_i[k+1]C_i )^\top \bm{e}_j = 0$, we alter $K_i[k+1]$ slightly so that the equation does not hold while the performance of the estimator is not influenced. As a result, $\rs((A[k]-K[k+1]CA[k])G_i[k] A^{-1}[k] +K_i[k+1]C_i)=\rs(G_i[k+1])$ and the proof is completed.
\hfill \QED

% %%============================================
% %%============================================
\section{Proof of Lemma~\ref{lm:sumG}}
\label{lmpf:sumG}
The proof is presented by the induction.
Assume that $\sum_{i=1}^{m} G_i[k]=I$ and we show that $\sum_{i=1}^{m} G_i[k+1]=I$:
\begin{align*}
    \sum_{i=1}^{m} G_i[k+1]
		=&\sum_{i=1}^{m} \Pi[k] G_i[k] A^{-1}[k]+K_i[k+1]C_i[k+1]\\
		=& \, \Pi[k]A^{-1}[k]+K[k+1]C = I,
\end{align*}
where the second equality comes from the assumption that $\sum_{i=1}^{m} G_i[k]=I$ and the last equality comes from the definition of $\Pi[k]$ in \eqref{eq:def_Pi}.
\hfill \QED

% %%============================================
% %%============================================
\section{Proof of Lemma~\ref{lm:epsilon}}
\label{lmpf:epsilon}
According to the definition of $\epsilon_i[k]$, we have
	\begin{align*}
		&\epsilon_{i}[k+1] =
		\zeta_{i}[k+1]-G_{i}[k+1] x[k+1] \\ %x(t_{k+1}) \\
		=\,& \Pi[k]\zeta_{i}[k] +K_i[k+1]\left(C_i A[k] x[k]+C_i w[k]+v_{i}[k+1]\right) \\
		&-G_{i}[k+1] \left(A[k] x[k]+ w[k]\right)\\
		=\,&\Pi[k] \zeta_{i}[k]- \left(G_{i}[k+1] A[k]-K_i[k+1] C_i A[k]\right) x[k] \\
		&-\left(G_{i}[k+1]-K_i[k+1] C_i\right) w[k]+K_i[k+1] v_{i}[k+1] \\
		=\,&\Pi[k] \left(\zeta_{i}[k]-G_{i}[k] x[k]\right)  -\Pi[k] G_i[k] A^{-1}[k] w[k]\\
		&+K_i[k+1] v_{i}[k+1],
	\end{align*}
	where the last equality comes from \eqref{eq:defG}.
The proof is completed. \hfill \QED

% %%============================================
% %%============================================
\section{Proof of Lemma~\ref{lm:sumW}}
\label{lmpf:sumW}
According to \eqref{eq:defW}, we know that $\bm{W}_{i j}[k]$ satisfies:
	\begin{align*}
		\bm{W}_{i j}[k+1]= &\Pi[k] \bm{W}_{i j}[k] \Pi^{\top}[k]+
		\\
		&\Pi[k]G_i[k]A^{-1}[k] Q[k] \left(\Pi[k]G_j[k]A^{-1}[k]\right)^{\top}+
		\\
		&K_i[k+1]K_j^{\top}[k+1] \circ \left( R_{i j}[k+1]\otimes \mathbf{1}_{n\times n} \right),
	\end{align*}
	where scalar $R_{i j}[k+1]$ is the element of the matrix $R[k+1]$ on $i$-th row and $j$-th column. On the other hand, since $\sum_{i=0}^{m}G_i[k]=I$ is shown in Lemma~\ref{lm:sumG}, one finds that
	\begin{align}
		&\sum_{i=1}^{m} \bm{W}_{i j}[k+1] \notag \\
		=&\Pi[k] \left(\sum_{i=1}^{m} \bm{W}_{i j}[k]\right) \Pi^{\top}[k] \notag\\
		& +\left(I-K[k+1]C\right)Q[k]\left(\Pi[k]G_j[k]A^{-1}[k]\right)^{\top} \notag\\
		& + K[k+1]R_j[k+1]K_j^{\top}[k+1]. \label{eq:sumWij}
	\end{align}
	In the following, we prove that $P[k]G_j^{\top}[k]$ satisfies the same dynamics with $\sum_{i=1}^{m}  \bm{W}_{i j}[k]$, where $P[k]$ is defined in \eqref{eq:def_Pt_asy}.
	According to \eqref{eq:asy_kalman}, $P[k]$ and $K[k]$ satisfy the following:
	\begin{align}
		&P[k+1]=\left(I-K[k+1]C\right) \left(A[k]P[k]A^{\top}[k] +Q[k]\right), \notag \\
		&K[k+1]R[k+1]=\Pi[k]P[k]A^{\top}[k]C^{\top}[k+1] \notag \\
		&\hspace{80pt}+\left(I-K[k+1]C\right)Q[k]C^{\top}[k+1]. \label{eq:KR}
	\end{align}
	
	Considering the dynamics of $G_j[k]$ in \eqref{eq:defG} gives us the following:
	\begin{align}
		&P[k+1]G_j^{\top}[k+1] \notag \\
		=&\Pi[k]P[k]A^{\top}[k] G_j^{\top}[k+1]
		+\left(I-K[k+1]C\right)Q[k]G_j^{\top}[k+1] \notag \\
		=&\Pi[k]P[k] G_j^{\top}[k] \Pi^{\top}[k]+\Pi[k]P[k]A^{\top}[k] C_j^{\top}[k+1] K_j^{\top}[k+1] \notag \\
		&+\left(I-K[k+1]C\right)Q[k]G_j^{\top}[k+1] \notag \\
		=&\Pi[k]P[k] G_j^{\top}[k] \Pi^{\top}[k] \notag \\
		&+\left(I-K[k+1]C\right)Q[k]\left(\Pi[k] G_j[k] A^{-1}[k]\right)^{\top} \notag \\
		&+\left(\Pi[k]P[k]A^{\top}[k]+\left(I-K[k+1]C\right)Q[k]\right)\times \notag \\
		&\hspace{120pt} C_j^{\top}[k+1] K_j^{\top}[k+1]. \label{eq:PG}
	\end{align}
    From \eqref{eq:sumWij}-\eqref{eq:PG}, one obtains 
    $\sum_{j=1}^{m}  \bm{W}_{i j}[k]=P[k] G_i^{\top}[k]$.
    %\eqref{eq:sumWijPG}.
	% According to (???)\eqref{eq:KR}, the last term equals to K[k+1]Rj[k+1]K⊤j[k+1]K[k+1]R_j[k+1]K_j^{\top}[k+1].
	% Thus, for all i∈\Ici\in\Ic and arbitrary time k∈\Zb≥0k\in\Zb_{\geq 0}, we have that
	% m∑j=1\bmWij[k]=P[k]G⊤i[k].\sum_{j=1}^{m}  \bm{W}_{i j}[k]=P[k] G_i^{\top}[k].
\hfill \QED

% %%============================================
% %%============================================
\section{The construction of $W[0]$}
\label{contruction_W0}
Construct 
		\begin{align*}
				\bm{W}[0]\triangleq
				\bm{D}
				\circ
				\left(\mathbf{1}_{m\times m}\otimes \Sigma\right) ,
			\end{align*}
		where 
		\begin{align*}
				\bm{D}\triangleq
				\begin{bmatrix}
						\bm{D}_{11} & \bm{D}\bm{D}_{12} & \cdots & \bm{D}_{1m}\\
						\bm{D}_{21} & \bm{D}_{22} & \cdots & \bm{D}_{1m}\\
						\vdots & \vdots & \ddots & \vdots\\
						\bm{D}_{m1} & \bm{D}_{m2} & \cdots & \bm{D}_{mm}\\
					\end{bmatrix}.
			\end{align*}
		One can verify that, if the following three constraints are satisfied, $\bm{W}[0]$ is Hermitian, strictly positive definite and satisfies $\sum_{j=1}^{m}  \bm{W}_{ij}[0]=\Sigma\cdot G_i^{\top}[0]$.
		\begin{enumerate}
				\item $\bm{D}=\bm{D}^\top$,
				\item $\bm{D}\succ 0$,
				\item $\sum_{j=1}^{m} \bm{D}_{ij}= G_i[0]$ for all $i\in\Ic$.
			\end{enumerate}
            
		We design the blocks $\bm{D}_{ij}$ to be the following diagonal matrices:
		\begin{align} \label{def:Dij}
				\bm{D}_{ij}\triangleq \begin{cases}
						-I_n, \text{ if } i\neq j, \\
						G_i[0]+(m-1)\cdot I_n, \text{ if } i= j,
					\end{cases}
			\end{align}
		where $I_n$ represents an $n\times n$ identity matrix.
        %diagonal matrix with all the diagonal elements be 1.

        By definition \eqref{def:Dij},
		the conditions (1) and (3) are satisfied. We proceed to prove that $\bm{D}$ is positive definite.	
		Denote $\bm{D}^{[k]}_{ij}$ as the $k$-th diagonal element of $\bm{D}_{ij}$.	
		We have that $\bm{D}^{[k]}_{ii}\geq\sum_{j\neq i}\left| \bm{D}^{[k]}_{ij} \right|$ for all $k\in\Jc$ since $G_i[0]$ is non-negative diagonal matrix.
		According to Gershgorin circle theorem, $\bm{D}$ is positive semi-definite. 
		
		We proceed to prove that $\bm{D}$ is positive definite after elementary matrix operation. 
		Since the system is observable, for each state index $j\in\Jc$, there exists a sensor $i$ such that $i\in\Ec_j$. Denote such index $i$ as $\iota(j)$.
		For each $j\in\Jc$, we do the following examination procedure.
		For all $i\in\Ic$, if $i\notin \Ec_j$, then multiply the $1/\bm{D}^{(j)}_{\iota(j)j}$ times of $(\iota(j)-1)n+j$-th row of $\bm{D}$ on $(i-1)n+j$-th row of $\bm{D}$. 
		After this elementary matrix operation, one can verify that every diagonal matrix of $\bm{D}$ satisfies $\bm{D}^{[k]}_{ii}>\sum_{j\neq i}\left| \bm{D}^{[k]}_{ij} \right|$ for all $k\in\Jc$. Therefore, $\bm{D}$ is positive definite according to the fact that matrix rank does not degrade after an elementary matrix operation. \hfill \QED

\begin{lemma}
\label{lem:error_bound_med}
    Given a scalar variable $x$, let us consider the following function:
    \begin{align}
        f(x) = \sum_{i = 1}^a \, \lvert x + m_i \rvert + \sum_{j = 1}^b \, \lvert x + n_j \rvert,
    \end{align}
    where $a$ and $b$ are given positive integers; $m_i$ and $n_j$ are given real numbers for all $1 \leq i \leq a$ and $1 \leq j \leq b$.
    We denote $c \triangleq \left \lceil{\frac{b-a}{2}}\right \rceil$ and $n \triangleq \big[ n_1, \, n_2, \, \ldots, n_b  \big]^\top$.
    Suppose that the minimum value of $f(x)$ occurs at the optimal solution $x^\star$. If $b \geq a + 1$, then the optimal solution $x^\star$ fulfills the following:
    \begin{align}
        \lvert x^\star \rvert \leq \max \big \{ \lvert h_c(n) \rvert, \, \lvert h_c(-n) \rvert  \big \}, \label{xstarbound}
    \end{align}
    where $\max(y,\,z)$ takes the greater value between $y$ and $z$.
\end{lemma}
\begin{proof}
    Let us denote the vector $m \triangleq \big[ m_1,\,m_2,\ldots,\,m_a \big]^\top$ and $d \triangleq \left \lceil{\frac{a+b}{2}}\right \rceil$. The solution $x^\star$ to the minimum $f(x^\star)$ is any value in the following range: 
    \[
    r_d \triangleq \big[ -h_d\big( - \big[ m^\top,\,n^\top \big]^\top \big) , \,  h_d\big( \big[ m^\top,\,n^\top \big]^\top \big) \big].
    \]
    It is sufficient to consider two extreme cases, which are 1) the minimum value of $m$ is greater than the maximum value of $n$ and 2) the maximum value of $m$ is smaller than the minimum value of $n$. In the first case, one has $h_c(n) = h_d\big( \big[ m^\top,\,n^\top \big]^\top \big)$. Meanwhile, one has $-h_c(-n) = -h_d\big( - \big[ m^\top,\,n^\top \big]^\top \big)$. Therefore, one obtains $r_d \subset r_c \triangleq \big[ -h_c(-n),\,h_c(n) \big]$ for the second case. Those two observations result in $x^\star \in r_c$ and the upper bound of $x^\star$ represented in \eqref{xstarbound}. 
\end{proof}

\end{document}